\newcommand{\R}{{\mathbb R}}  
\newcommand{\Pro}{{\mathbb P}}  
\newtheorem{assumption}{Assumption}[section]
\newtheorem{theorem}{Theorem}[section]
\newtheorem{lemma}{Lemma}[section]
\newtheorem{remark}{Remark}[section]
\newtheorem{definition}{Definition}[section]
\newcommand\norm[1]{\lVert#1\rVert}
\newcommand\normF[1]{\lVert#1\rVert_{\mathrm{F}}}
\newcommand\normSup[1]{\lVert#1\rVert_{\infty}}
\newcommand\Tr[1]{\mathrm{Tr}\left(#1\right)}
\newcommand\eigmax[1]{\lambda_{\mathrm{max}}\left(#1\right)}
\newcommand\eigmin[1]{\lambda_{\mathrm{min}}\left(#1\right)}
\newcommand{\de}{\mathrm{d}}
\newcommand\innprod[2]{\langle #1, #2 \rangle}
\DeclareMathOperator*{\esssup}{ess\,sup}
\DeclareMathOperator*{\sym}{\mathbb{S}}
\newcommand\E{\mathbb{E}}
\newcommand\Prob{\mathbb{P}}
\newcommand\Vcal{\mathcal{V}}
\newcommand\Lcal{\mathcal{L}}
\newcommand\Scal{\mathcal{S}}
\newcommand\Jcal{\mathcal{J}}
\newcommand\Zcal{\mathcal{Z}}
\newcommand\Kcal{\mathcal{K}}
\newcommand\Acal{\mathcal{A}}
\newcommand\Dcal{\mathcal{D}}
\newcommand\Gcal{\mathcal{G}}
\newcommand\PDcal{\mathcal{PD}}
\newcommand\Idc{\mathbf{1}}
\title{Small-Covariance Noise-to-State Stability of Stochastic Systems and Its Applications to Stochastic Gradient Dynamics  %
\thanks{This work is supported by the University of New Mexico under the School of Engineering (SOE) faculty startup for L.C., the National Science Foundation (NSF) grants CNS-2227153 and ECCS-2210320 for Z.P.J., and the Air Force Office of Scientific Research (AFOSR) grant FA9550-21-1-0289 and the Office of Naval Research (ONR) grant N00014-21-1-2431 for E.D.S.}
}
\author{Leilei Cui\footnote{Department of Mechanical Engineering, University of New Mexico, Albuquerque, NM, USA},\, Zhong-Ping Jiang\footnote{Department of Electrical and Computer Engineering and Department of Civil and Urban Engineering, New York University, Brooklyn, NY, USA},\, and Eduardo D. Sontag\footnote{Department of Electrical and Computer Engineering and Department of BioEngineering, Northeastern University, Boston, MA, USA}}
\begin{document}
\maketitle

\begin{abstract}
This paper studies gradient dynamics subject to additive random noise, which may arise from sources such as stochastic gradient estimation, measurement noise, or stochastic sampling errors. To analyze the robustness of such stochastic gradient systems, the concept of small-covariance noise-to-state stability (NSS) is introduced, along with a Lyapunov-based characterization. Furthermore, the classical Polyak–Łojasiewicz (PL) condition on the objective function is generalized to the $\Kcal$-PL condition via comparison functions, thereby extending its applicability to a broader class of optimization problems. It is shown that the stochastic gradient dynamics exhibit small-covariance NSS if the objective function satisfies the $\Kcal$-PL condition and possesses a globally Lipschitz continuous gradient. This result implies that the trajectories of stochastic gradient dynamics converge to a neighborhood of the optimum with high probability, with the size of the neighborhood determined by the noise covariance. Moreover, if the $\Kcal$-PL condition is strengthened to a $\Kcal_\infty$-PL condition, the dynamics are NSS; whereas if it is weakened to a general positive-definite-PL condition, the dynamics exhibit integral NSS. The results further extend to objectives without globally Lipschitz gradients through appropriate step-size tuning. The proposed framework is further applied to the robustness analysis of policy optimization for the linear quadratic regulator (LQR) and logistic regression.

\end{abstract}

\section{Introduction}
Optimization lies at the core of many data-driven fields, providing concepts and tools for algorithm design, computational-complexity analysis, and statistical inference. Recent work in learning and optimization largely focuses on gradient-based methods because of their low per-iteration cost and suitability for parallel architectures. Their scalability to large-scale problems has, in turn, motivated careful study of convergence rates—both how to certify a target rate and how to systematically improve it through choices such as step-size policies and momentum. Yet efficiency alone is insufficient: in practice, gradient-based methods operate in noisy environments, where perturbations arise from numerical errors, measurement noise, inexact formulas for gradient computations, and early stopping of embedded routines used to compute gradients (see \cite[Ch.~4]{book_Polyak} and \cite[p.~38]{bertsekas1997nonlinear}). Under such unpredictable noise, the gradient-based algorithms may oscillate around the optimum, converge to a biased limit point, or even diverge. Hence, convergence and robustness are twin, indispensable facets that must be analyzed for gradient-based methods. 

The connection between optimization and control enables the study of algorithmic convergence through stability and Lyapunov theory \cite{polyak2017lyapunov}. By modeling a gradient method as a continuous-time dynamical system and treating the iterates as system states, one can employ Lyapunov functions to characterize asymptotic behavior. Importantly, flexibility in Lyapunov design—beyond using only the objective value or distance to an optimizer/optimal set—enriches the analytical toolkit and often yields sharper bounds and insights than any single canonical choice \cite{WibisonoWilsonJordan2016,wilson2016lyapunov,Jordan2018DynamicalSymplecticStochastic}. Control-theoretic insights can also guide the redesign of optimization algorithms themselves \cite{chen2024accelerated}. Classic Lyapunov methods are typically formulated for systems without exogenous inputs/disturbances. Input-to-state stability (ISS) generalizes Lyapunov stability to systems with inputs, explicitly quantifying how disturbances impact the state \cite{Sontag1989}; it thus provides a natural framework for analyzing both convergence and robustness of gradient-based optimization under external perturbations. For example, \cite{ieee_tac_2018_cherukuri_et_al_convexity_saddle_point_dynamics} establishes ISS of the saddle-point dynamics for convex–concave objectives with respect to additive noise; \cite{kolmanovsky2022inputtostate} uses ISS to assess the robustness of a bilevel optimization algorithm to inner-loop approximation errors; ISS likewise underpins robustness analyses of extremum-seeking methods \cite{2021_poveda_krstic_fixedtime_iss_extremum_seeking,2021_arxiv_iss_gradient_suttner_dashkovskiy}; and \cite{2020arxiv_bianchin_poveda_dallanese_gradient_iss_switched_systems} leverages ISS to address output regulation for tracking a gradient flow in the presence of plant-level disturbances. Finally, \cite{CJS2024,cui2025perturbed} generalizes the classical Polyak–\L{}ojasiewicz (PL) condition \cite{Polyak1963,Lojasiewicz1963,Hamed2016} and proves that gradient descent is small-disturbance ISS when the objective satisfies a generalized $\Kcal$-PL condition. Most of the above focuses on deterministic disturbances. In practice, however, noise is often random: objective values and gradients are obtained via measurements subject to stochastic errors, and in stochastic optimization the objective is an expectation, so exact gradients are typically unavailable and must be approximated by random samples. This motivates studying the impact of randomness on gradient-based methods. Noise-to-state stability (NSS) extends ISS to stochastic systems, offering a meaningful framework when bounding the state by the supremum of unbounded white noise is not feasible \cite{Deng-Krstic-Williams2001}. NSS asserts that trajectories enter (and remain in) a noise-dependent neighborhood of the equilibrium with high probability, with the radius controlled by the noise intensity/covariance. In this paper, we study the robustness of gradient dynamics under additive stochastic perturbations within the NSS framework.

The \emph{Polyak–\L{}ojasiewicz (PL)} condition \cite{Polyak1963,Lojasiewicz1963,Hamed2016} is a useful and straightforward tool for certifying a linear convergence rate of gradient-based optimization without the assumption of strong convexity. However, this condition requires that the norm of the gradient of the objective function is larger than the square root of its suboptimality, i.e., $\norm{\nabla \Jcal} \ge \sqrt{c(\Jcal(z) - \Jcal^*}$ with $c > 0$. However, this square-root scaling between gradient norm and suboptimality can be restrictive and is violated in many important problems (e.g., policy optimization for the linear–quadratic regulator and logistic regression), which we analyze in the second part of this paper. In \cite{CJS2024,cui2025perturbed}, the authors generalized the classic PL condition to a $\Kcal$-PL condition by requiring that $\norm{\nabla \Jcal} \ge \mu(\Jcal(z) - \Jcal^*)$, where $\mu$ is a $\Kcal$-function (continuous, strictly increasing, and vanishing at zero). If such a $\Kcal$-PL condition is satisfied, the gradient dynamics is small-disturbance ISS; however, the analysis there is limited to deterministic noise. In this paper, we move a step further and show that, if the objective function satisfies the $\Kcal$-PL condition, the stochastic gradient dynamics (overdamped Langevin diffusion) with time-varying covariance is small-covariance NSS. This implies that, if the noise covariance is sufficiently small, the trajectories of the stochastic gradient dynamics will eventually enter and remain in a neighborhood of the optimum with high probability, with the neighborhood size depending (nonlinearly) on the covariance. In addition, if the $\Kcal$-PL condition is strengthened to a $\Kcal_\infty$-PL condition—requiring $\mu$ to be unbounded—then the stochastic dynamics is NSS. If the  $\Kcal$-PL condition is weakened to a positive-definite function (removing the monotonicity requirement on $\mu$), then the stochastic dynamics is integral NSS. Similar results hold for the underdamped Langevin diffusion, which is the continuous-time version of the heavy-ball optimization algorithm \cite{Polyak1964HeavyBall}, subject to stochastic noise with time-varying covariance.

The results are applied first to policy optimization for LQR. In LQR, the goal is to optimize the feedback control gain to minimize the cumulative cost of quadratic state and input terms. This problem provides an ideal benchmark for theoretically analyzing the performance of policy-gradient algorithms in reinforcement learning, where the control policy is updated via gradient descent to reduce cost (or gradient ascent to increase reward) \cite[Ch. 13]{book_sutton}. In recent studies \cite{fazel2018global,Mohammadi2022,Hu2023Review}, the convergence of PO to optimality has been well studied, but the robustness of PO under stochastic noise remains an open problem that should be systematically investigated. In this paper, by resorting to the established 
$\Kcal$-PL condition of LQR cost \cite{CJS2024}, we show that the stochastic gradient dynamics for PO of LQR is small-covariance NSS (or NSS) if the step size/learning rate is chosen appropriately. The learning rate must be selected carefully because the gradient of the objective function is not globally Lipschitz continuous. The second application of the developed robustness-analysis framework is to logistic regression. Under mild conditions (the data points are nonseparable and full rank), the logistic loss is coercive and has a globally Lipschitz-continuous gradient. In addition, even though the logistic loss is strictly convex, there is no function that satisfies the classic PL condition nor the 
$\Kcal_\infty$-PL condition. We further show that the logistic loss can satisfy the weaker 
$\Kcal$-PL condition, and hence the corresponding stochastic gradient dynamics (including both overdamped and underdamped Langevin diffusion) are small-covariance NSS. 

The remainder of the paper is organized as follows. In Section 2, we introduce notation and comparison functions. In Section 3, we present the concepts of NSS—particularly the new notion of small-covariance NSS—and provide a Lyapunov sufficient condition to ensure small-covariance NSS. In Section 4, we analyze the robustness of stochastic systems useful in optimization (referred to as overdamped and underdamped Langevin diffusions) and establish connections between the various NSS notions and generalized PL conditions. In Section 5, we apply the main results to policy optimization for LQR and to logistic regression. Section 6 concludes the paper.

\section{Notations and Facts}
\subsection{Notations}
In this article, let $\R$ and $\R_+$ denote the set of real numbers and the set of nonnegative real numbers, respectively. Let $\R^n$ denote the $n$-dimensional Euclidean space. The sets $\sym^n$, $\sym_+^n$ and $\sym^n_{++}$ denote the space of $n \times n$ real symmetric matrices, real symmetric positive semi-definite matrices, and real symmetric positive definite matrices, respectively. The sets of integers and positive integers are denoted by $\mathbb{Z}$ and $\mathbb{Z}_+$, respectively. The trace of a square matrix is denoted by $\Tr{\cdot}$. The symbol $\norm{\cdot}$ denotes the Euclidean norm of a vector or the spectral norm of a matrix, while $\normF{\cdot}$ denotes the Frobenius norm of a matrix, where $\normF{A}^2 = \Tr{A^\top A}$ for $A \in \R^{n \times m}$. For a measurable and essentially bounded function $x: \R_+ \to \R^n$ (or $X: \R_+ \to \R^{n \times m}$), its essential supremum norm is denoted by $\norm{x}_\infty = \esssup_{t \in \R_+} \norm{x(t)}$ (or $\norm{X}_\infty = \esssup_{t \in \R_+} \norm{X(t)}$). 

Let $\mathcal{C}(\mathcal{S}, \R^n)$, $\mathcal{C}^1(\mathcal{S}, \R^n)$ and $\mathcal{C}^2(\mathcal{S}, \R^n)$ denote the spaces of continuous, continuously differentiable, and twice continuously differentiable functions from $\mathcal{S} \subset \R^n$ to $\R^n$, respectively. For a function $\mathcal{V} \in \mathcal{C}^2(\mathcal{S}, \R^n)$, its gradient and Hessian are denoted by $\nabla \mathcal{V}$ and $\nabla^2 \mathcal{V}$, respectively. The symbols $\eigmin{\cdot}$ and $\eigmax{\cdot}$ denote the minimum and maximum eigenvalues of a real symmetric matrix, respectively. The $n$-dimensional identity matrix is denoted by $I_n$ and $\mathrm{Id}$ denotes the identity function. For any $X_1, X_2 \in \R^{m \times n}$ and $Y \in \sym^n_{++}$, define the inner product as $\innprod{K_1}{K_2}_Y = \Tr{K_1 Y K_2^\top}$. For simplicity, we write $\innprod{K_1}{K_2}_{I_n} = \innprod{K_1}{K_2}$. For any $A,B \in \sym^n$, the notion $A \succ B$ ($A \succeq B$) means that $A - B \in \sym^n_{++}$ ($A - B \in \sym^n_{+}$). Similarly, $A \prec B$ and $A \preceq B$ indicate that $B - A \in \sym^n_{++}$ and $B - A \in \sym^n_{+}$, respectively. Let $(\Omega, \mathcal{F}, \{\mathcal{F}_t\}_{t\ge 0}, \mathbb{P})$ be a complete filtered probability space, where $\Omega$ is the sample space, $\mathcal{F}$ is a $\sigma$-algebra over $\Omega$, $\{\mathcal{F}_t\}_{t\ge 0}$ is a filtration satisfying the usual conditions (right-continuity and completeness), and $\mathbb{P}$ is a probability measure defined on $(\Omega, \mathcal{F})$. $\mathbf{1}_A(\omega)$ denotes the indicator function of set $A \subset \Omega$.

\subsection{Notions of Comparison Functions} 
The notions of comparison functions are introduced to facilitate the stability analysis of dynamical systems. A function $\alpha: \R_+ \to \R_+$ is said to be positive definite ($\mathcal{PD}$) if $\alpha(0)=0$, and $\alpha(r)>0$ for all $r \neq 0$. A function $\alpha: \R_+ \to \R_+$ is said to be of class $\mathcal{K}$ if it is continuous, strictly increasing, and satisfies $\alpha(0)=0$. It is said to be of class $\mathcal{K}_\infty$ if $\alpha \in \mathcal{K}$ and $\alpha(r) \to \infty$ as $r\to \infty$. A function $\alpha$ is said to be of class $\Kcal_{[0,d)}$ for some $d>0$ if it is defined on $[0,d)$, continuous, strictly increasing, and satisfies $\alpha(0)=0$. A function $\beta: \R_+ \times \R_+ \to \R_+$ is said to be of class $\mathcal{KL}$ if for each $t \ge 0$, the function $\beta(\cdot, t) \in \mathcal{K}$, and for each fixed $r \ge 0$, $\beta(r,\cdot)$ is decreasing and satisfies $\lim_{t\to \infty}\beta(r,t) = 0$.

\subsection{Useful Facts}

\begin{lemma}[Weak triangle inequality in \cite{Jiang1994}]\label{lm:weakTriangle}
    For any $\mathcal{K}$-function $\alpha$, any $\mathcal{K}_\infty$-function $\rho$, and any $a,b \in \R_+$, it holds that
    \begin{align*}
        \alpha(a+b) \le \alpha \circ (\mathrm{Id} + \rho)(a) + \alpha \circ (\mathrm{Id} + \rho^{-1}) (b).
    \end{align*}
\end{lemma}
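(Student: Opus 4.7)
The plan is to exploit the monotonicity of $\alpha$ after performing a case split that compares the relative sizes of $a$ and $b$ through the gauge function $\rho$. The natural dichotomy is to ask whether $b \le \rho(a)$ or $b > \rho(a)$; since $\rho$ is of class $\Kcal_\infty$, its inverse $\rho^{-1}$ exists, is itself of class $\Kcal_\infty$, and the condition $b > \rho(a)$ is equivalent to $a < \rho^{-1}(b)$. Thus in each branch one of the two summands on the right-hand side will already upper-bound $\alpha(a+b)$ by itself, and the other summand will be discarded using nonnegativity.

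More concretely, first I would handle the case $b \le \rho(a)$. Here $a + b \le a + \rho(a) = (\mathrm{Id} + \rho)(a)$, so the monotonicity of $\alpha$ yields $\alpha(a+b) \le \alpha\circ(\mathrm{Id}+\rho)(a)$, and since $\alpha\circ(\mathrm{Id}+\rho^{-1})(b)\ge 0$, the desired inequality follows immediately. Second, in the complementary case $b > \rho(a)$, the equivalence above gives $a < \rho^{-1}(b)$, hence $a + b < \rho^{-1}(b) + b = (\mathrm{Id} + \rho^{-1})(b)$, and again monotonicity of $\alpha$ combined with nonnegativity of $\alpha\circ(\mathrm{Id}+\rho)(a)$ closes the argument. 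The boundary values $a=0$ or $b=0$ can be absorbed into either case without issue, since $\alpha(0)=0$ and $\rho(0)=\rho^{-1}(0)=0$.

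There is essentially no analytic obstacle here; the only small point to verify carefully is that $\rho$ being of class $\Kcal_\infty$ (rather than merely $\Kcal$) is exactly what we need to form the inverse $\rho^{-1}$ on all of $\R_+$, so that the inequality $b > \rho(a) \iff a < \rho^{-1}(b)$ is available for every $a,b \in \R_+$. If $\rho$ were only $\Kcal$ with bounded range, the equivalence would break down when $b$ exceeds $\sup \rho$, which is why the hypothesis demands $\Kcal_\infty$. With that observation in hand, the proof is a one-line case analysis and does not require any deeper machinery.
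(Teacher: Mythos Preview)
Your argument is correct and is exactly the standard proof of this inequality: split on whether $b \le \rho(a)$ or $b > \rho(a)$, use monotonicity of $\alpha$ in each branch, and discard the unused nonnegative summand. The paper does not supply its own proof of this lemma but simply quotes it from \cite{Jiang1994}; the case-analysis you outline is precisely the argument given there, so there is nothing to add or compare.
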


\begin{lemma}[Trace inequality \cite{Wang1986}]\label{lm:traceIneq}
    For any $S\in \mathbb{S}^{n}$ and $P \in  \mathbb{S}^{n}_+$, it holds
    \begin{align*}
        \eigmin{S} \Tr{P} \le \Tr{SP} \le \eigmax{S} \Tr{P}.
    \end{align*}
\end{lemma}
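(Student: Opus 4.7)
The plan is to prove the two inequalities symmetrically by exploiting the fact that the trace of a product of two positive semi-definite matrices is nonnegative. Concretely, for any $A, B \in \mathbb{S}^n_+$, writing $A = A^{1/2} A^{1/2}$ and using the cyclic property of the trace yields $\Tr{AB} = \Tr{A^{1/2} B A^{1/2}} \ge 0$, since $A^{1/2} B A^{1/2} \in \mathbb{S}^n_+$.

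With this auxiliary observation in hand, the proof proceeds in two short steps. First, by definition of $\eigmax{S}$, the matrix $\eigmax{S} I_n - S$ is symmetric positive semi-definite; applying the auxiliary observation to the pair $(\eigmax{S} I_n - S, P)$ and using linearity of trace gives
\begin{align*}
0 \;\le\; \Tr{(\eigmax{S} I_n - S) P} \;=\; \eigmax{S} \Tr{P} - \Tr{SP},
\end{align*}
which rearranges to the upper bound. Second, by definition of $\eigmin{S}$, the matrix $S - \eigmin{S} I_n$ is symmetric positive semi-definite, so applying the auxiliary observation to $(S - \eigmin{S} I_n, P)$ yields
\begin{align*}
0 \;\le\; \Tr{(S - \eigmin{S} I_n) P} \;=\; \Tr{SP} - \eigmin{S} \Tr{P},
\end{align*}
which is the lower bound.

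There is essentially no obstacle here; the only subtlety worth flagging is ensuring that the auxiliary nonnegativity $\Tr{AB} \ge 0$ is justified for $A, B \succeq 0$, since $AB$ itself need not be symmetric. The symmetrization $A^{1/2} B A^{1/2}$ via the (unique symmetric) square root of $A$, combined with the cyclic invariance of the trace, handles this cleanly. An alternative (and equally short) route would be to diagonalize $S = U \Lambda U^\top$ and observe that $\Tr{SP} = \sum_i \lambda_i(S) (U^\top P U)_{ii}$ is a convex combination of the eigenvalues of $S$ with nonnegative weights $(U^\top P U)_{ii}$ summing to $\Tr{P}$, but the PSD-product argument above is more direct.
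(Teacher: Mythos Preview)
Your proof is correct. Note, however, that the paper does not actually supply a proof of this lemma; it merely cites \cite{Wang1986} for the result, so there is no in-paper argument to compare against. Your approach via the nonnegativity of $\Tr{AB}$ for $A,B\succeq 0$ (justified by the square-root/cyclicity trick) applied to the shifted matrices $\eigmax{S}I_n-S$ and $S-\eigmin{S}I_n$ is the standard and cleanest route, and the alternative diagonalization argument you sketch is equally valid.
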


\section{Preliminaries of Noise-to-State Stability}
\subsection{Introduction of Nonlinear Stochastic Systems}

Consider the following nonlinear stochastic system defined on the probability space $(\Omega, \mathcal{F}, \{\mathcal{F}_t\}_{t \ge 0}, \mathbb{P})$
\begin{align}\label{eq:nonLinearSys}
    \de \chi(\omega, t) = f(\chi(\omega, t)) \de t + g(\chi(\omega, t)) \Sigma(t) \de B(\omega ,t), 
\end{align}
where $\chi: \Omega \times [0, t_{max}) \to \mathcal{S} \subset \R^n$ is an $n$-dimensional stochastic process; the state space $\mathcal{S} \subset \R^n$ is an open subset diffeomorphic to $\R^n$; the process $B: \Omega \times [0,+\infty) \to \R^{m}$ denotes a standard $m$-dimensional Brownian motion; the drift field $f: \mathcal{S} \subset \R^n \to \R^n$ and the diffusion field $g: \mathcal{S} \subset \R^n \to \R^{n \times m}$ are both locally bounded and locally Lipschitz continuous; $f(\chi^*)=0$ for some equilibrium  $\chi^* \in \R^n$, and the matrix-valued function  $\Sigma: \R_+ \to \R^{m \times m}$ is Borel measurable and locally essentially bounded. At each time $t \in \R_+$, $\Sigma(t)$ linearly transforms the Brownian noise $\de B(\omega, t)$, resulting in an input $\Sigma(t)\de B(\omega, t)$ with instantaneous covariance $\Sigma(t)\Sigma(t)^\top \de t$. In the following, we refer to $\norm{\Sigma(t)\Sigma(t)^\top}$ as the instantaneous noise intensity. The It\^{o} integral form of \eqref{eq:nonLinearSys} is given by
\begin{align}\label{eq:inteNonlinear}
    \chi(\omega, t) = \int_{0}^{t}f(\chi(\omega, s)) \de s + \int_{0}^{t}g(\chi(\omega, s)) \Sigma(s) \de B(\omega ,s)
\end{align}
where the second integral term is interpreted as an It\^{o} stochastic integral. A stochastic process $\{ \chi(t)\}_{t \ge 0}$ is said to be a strong solution of \eqref{eq:nonLinearSys} with initial condition $\chi_0 \in \Scal$ if the following conditions are satisfied \cite[Definition 2.1]{Book_karatzas1991}:
\begin{enumerate}
    \item the sample paths of $\{ \chi(t)\}_{t \ge 0}$ are continuous, and the process is adapted to the filtration $\{\mathcal{F}_t\}_{t \ge 0}$; 
    \item $\mathbb{P}[\chi(0) = \chi_0]=1$;
    \item $\mathbb{P}[\int_{0}^t |f_i(\chi(s))| + |g_{i,j}(\chi(s))|\de s < \infty]=1$ for every $1 \le i \le n$, $1 \le j \le m$, and any $t \in \R_+$; 
    \item the integral equation \eqref{eq:inteNonlinear} holds almost surely for every $t \in \R_+$.
\end{enumerate}
According to \cite[p. 287]{Book_karatzas1991}, the local Lipschitz continuity of $f$ and $g$ ensures the local existence (up to an explosion time $t_{max}$) and uniqueness of a strong solution to \eqref{eq:nonLinearSys}. 

The infinitesimal generator is introduced below to facilitate the stability analysis of the nonlinear stochastic system.
\begin{definition}
    The infinitesimal generator $\mathcal{L}$ associated with system \eqref{eq:nonLinearSys}, acting on a function $\mathcal{V} \in \mathcal{C}^2(\mathcal{S}, \R_+)$, is defined as a mapping $\mathcal{L}[\mathcal{V}]: \mathcal{S} \times \R^{m \times m} \to \R$ given by
    \begin{align}
        \mathcal{L}[\mathcal{V}](\xi, \Theta) = \innprod{\nabla \mathcal{V}(\xi)}{f(\xi)} + \frac{1}{2}\innprod{g(\xi)\Theta }{\nabla^2 \mathcal{V}(\xi)g(\xi)\Theta}
    \end{align}
\end{definition}
The infinitesimal generator $\mathcal{L}[\mathcal{V}](\xi, \Theta)$ can be interpreted as the rate of change of the expected value of $\mathcal{V}$ at state $\xi$ under the influence of noise covariance $\Theta \Theta^\top$. It serves as a stochastic counterpart to the Lie derivative. By It\^{o}'s formula, along the stochastic process $\{\chi(t)\}_{t\ge 0}$ governed by \eqref{eq:nonLinearSys}, the evolution of $\mathcal{V}(\chi(t))$ is given by
\begin{align}\label{eq:ItoIntegral}
    \mathcal{V}(\chi(t)) = \mathcal{V}(\chi(0)) + \int_{0}^t \mathcal{L}[\mathcal{V}](\chi(s), \Sigma(s)) \de s + \int_{0}^t \innprod{\nabla \mathcal{V}(\chi(s))}{g(\chi(s)) \Sigma(s) \de B(s)}
\end{align}

\subsection{Notions of Noise-to-State Stability}
Since system \eqref{eq:nonLinearSys} is defined on an open subset rather than the entire Euclidean space, a size function is introduced to facilitate stability analysis within the open subset and to ensure that system trajectories remain inside the domain.

\begin{definition}\label{def:sizeFunc} (Size Function)
    A function $\mathcal{V}: \mathcal{S} \to \mathbb{R}_+$ is a size function for $(\mathcal{S},\chi^*)$ if $\mathcal{V}$ is 
    \begin{enumerate}
        \item twice continuously differentiable;
        \item positive definite with respect to $\chi^*$, i.e. $\mathcal{V}(\chi^*)=0$ and $\mathcal{V}(\chi)>0$ for all $\chi \neq \chi^*$,  $\chi \in \mathcal{S}$;
        \item coercive, i.e. for any sequence $\{\chi_k\}_{k=0}^\infty$, $\chi_k \to \partial \mathcal{S}$ or $\norm{\chi_k} \to \infty$, it holds that $\mathcal{V}(\chi_k) \to \infty$, as $k \to \infty$.
    \end{enumerate}
\end{definition}

For system \eqref{eq:nonLinearSys}, the presence of persistent additive stochastic noise—even at equilibrium—prevents trajectories from converging to the equilibrium point in the conventional asymptotic sense. Consequently, the standard notion of stochastic asymptotic stability—requiring that trajectories remain within a ball of radius $\gamma(\norm{\chi_0})$ for some $\gamma \in \Kcal$ with high probability and that $\lim_{t \to \infty}\norm{\chi(t)} = 0$ almost surely—no longer applies. Instead, one typically expects the state trajectories to eventually settle into a neighborhood of the equilibrium with high probability, where the size of the neighborhood depends on the noise intensity, quantified by magnitude of the noise covariance $\normSup{\Sigma \Sigma^\top}$. The concept of noise-to-state stability (NSS) provides a natural extension of input-to-state stability to stochastic systems, offering a meaningful framework in stochastic scenarios where bounding the state directly by the supremum of unbounded white noise is not feasible.

\begin{definition}(Noise-to-State Stability in Probability \cite{Deng-Krstic-Williams2001})
    System \eqref{eq:nonLinearSys} is NSS in probability if there exists a size function $\mathcal{V}$ such that for each $\epsilon \in (0,1)$, the following holds
    \begin{align}\label{eq:nss}
        \Pro\big\{ \mathcal{V}(\chi(t)) \le \beta(\mathcal{V}(\chi(0)), t) + \gamma(\normSup{\Sigma \Sigma^\top} )  \big\} \ge 1-\epsilon
    \end{align}
    for some $\beta \in \mathcal{KL}$, $\gamma \in \mathcal{K}$, all $x(0) \in \mathcal{S} $, and any $t \in \R_+$.
\end{definition}
By the causality of the dynamical system, the definition remains unchanged if $\normSup{\Sigma \Sigma^\top}$ is replaced by $\esssup_{\tau \in [0,t]} \norm{\Sigma(\tau)\Sigma(\tau)^\top}$. The aforementioned notion of NSS requires that trajectories remain bounded with high probability for arbitrarily large noise intensity (i.e. $\normSup{\Sigma \Sigma^\top}$). However, in practice, many stochastic systems may diverge under large noise covariance and only exhibit NSS behavior when the noise covariance is sufficiently small. This observation motivates the introduction of small-covariance NSS, which relaxes the classical NSS condition by characterizing stability properties that hold only under noise with small covariance.

\begin{definition}(Small-Covariance Noise-to-State Stability  in Probability )
    System \eqref{eq:nonLinearSys} is small-covariance noise-to-state stable (scNSS) in probability if there exists a size function $\mathcal{V}$ and a constant $d > 0$, such that for each $\epsilon \in (0,1)$, the following holds 
    \begin{align}
        \Pro\big\{ \mathcal{V}(\chi(t)) \le \beta(\mathcal{V}(\chi(0)), t) + \gamma(\normSup{\Sigma \Sigma^\top})  \big\} \ge 1-\epsilon
    \end{align}
    for some $\beta \in \mathcal{KL}$, $\gamma \in \mathcal{K}_{[0,d)}$, all noise covariance bounded by $d$ (i.e. $\norm{\Sigma \Sigma^\top}_\infty < d$), all $x(0) \in \mathcal{S}$, and any $t \in \R_+$.
\end{definition}

Analogous to integral input-to-state stability (iISS) in deterministic systems, the state trajectories can be bounded by an energy-like functional of the noise intensity. This observation motivates the notion of integral noise-to-state stability (iNSS) for stochastic systems.
\begin{definition} (Integral Noise-to-State Stability in Probability  \cite{Ito2020})
    System \eqref{eq:nonLinearSys} is integral noise-to-state stable (iNSS) in probability if there exists a size function $\mathcal{V}$ such that for each $\epsilon \in (0,1)$, the following holds
    \begin{align}
        \Pro\big\{ \mathcal{V}(\chi(t)) \le \beta(\mathcal{V}(\chi(0)), t) + \int_{0}^t \gamma(\norm{\Sigma(\tau)\Sigma(\tau)^\top}) \de \tau   \big\} \ge 1-\epsilon
    \end{align}
    for some $\beta \in \mathcal{KL}$, $\gamma \in \mathcal{K}$, all $x(0) \in \mathcal{S}$, and any $t \in \R_+$.
\end{definition}

\subsection{Lyapunov Functions of Noise-to-State Stability}
Lyapunov functions offer a tractable approach to determine whether a stochastic system satisfies NSS. The notion of Lyapunov function for NSS is introduced below.

\begin{definition}\label{def: nss_lf} (NSS-Lyapunov Function)
    A function $\mathcal{V}: \mathcal{S} \to \R_+$ is an NSS-Lyapunov function if 
    \begin{enumerate}
        \item $\mathcal{V}: \mathcal{S} \to \R_+$ is a size function for $(\mathcal{S}, \chi^*)$;
        \item there exist $\alpha \in \mathcal{K}_\infty$  and $\gamma \in \mathcal{K}$ such that
        \begin{align}\label{eq: lyapnovFun}
            \mathcal{L}[\mathcal{V}](\xi,\Theta) \le -\alpha(\mathcal{V}(\xi)) + \gamma(\norm{\Theta \Theta^\top})
        \end{align}
        for all $\xi \in \mathcal{S}$ and all $\Theta \in \R^{m \times m}$.
    \end{enumerate} 
\end{definition}

A scNSS-Lyapunov function can be derived from an NSS-Lyapunov function by relaxing the $\mathcal{K}_\infty$-function $\alpha$ in Definition \ref{def: nss_lf} to a general $\mathcal{K}$-function, and replacing $\gamma$ with a $\Kcal_{[0,d)}$-function.

\begin{definition}(scNSS-Lyapunov Function) \label{def: scnss-lf}
    A size function $\mathcal{V}: \mathcal{S} \to \R_+$ is a scNSS-Lyapunov function if there exist $\alpha \in \mathcal{K}$ and $\gamma \in \mathcal{K}_{[0,d)}$ such that \eqref{eq: lyapnovFun} holds for all $\xi \in \mathcal{S}$ and all $\Theta \in \R^{m \times m}$ satisfying $\norm{\Theta \Theta^\top}<d$.
\end{definition}

By comparing the NSS-Lyapunov function in Definition \ref{def: nss_lf} with the scNSS-Lyapunov function in Definition \ref{def: scnss-lf}, we observe that the generator term $\mathcal{L}[\mathcal{V}](\xi,\Theta)$ may become unbounded as $\norm{\Theta \Theta^\top} \to d$ in the scNSS case, whereas it remains bounded for the NSS-Lyapunov function. The dissipative-type Lyapunov function for iNSS is introduced below. 
\begin{definition}(iNSS-Lyapunov Function)\label{def:integralnss_lf} 
    A size function $\mathcal{V}: \mathcal{S} \to \R_+$ is an iNSS-Lyapunov function if there exist $\alpha \in \mathcal{PD}$ and $\gamma \in \mathcal{K}$ such that \eqref{eq: lyapnovFun} holds for all $\xi \in \mathcal{S}$ and all $\Theta \in \R^{m \times m}$.
\end{definition}

The existence of a scNSS-Lyapunov function is sufficient to establish the scNSS property of a stochastic system. To facilitate the proof of the main result, we first introduce the following lemma.

\begin{lemma}\label{lm: probbeforeq1}
    Suppose there exists a scNSS-Lyapunov function $\Vcal$ for system \eqref{eq:nonLinearSys}. Let $c>1$, and assume $\normSup{\Sigma \Sigma^\top}  < \gamma^{-1} [\frac{1}{c}\sup_{r \in \R_+} \alpha(r)] =: d_1$. Define the set 
    \begin{align}
    \Dcal = \Big\{\xi \in \Scal|\, \Vcal(\xi) \le \alpha^{-1}\circ c\gamma(\normSup{\Sigma \Sigma^\top}) \Big\}
    \end{align}
    and the stopping time $q_1 = \inf\{t \ge 0|\,  \chi(t) \in \Dcal\}$. Then, for any $\epsilon \in (0,1)$, there exist $\beta_1 \in \Kcal \Lcal$ and $\gamma_1(r) = \alpha^{-1} \circ c\gamma(r) \in \Kcal_{[0,d_1)}$ such that 
    \begin{align}
        \Prob \Big\{ \Vcal(\chi(t \wedge q_1)) \le \beta_1(\Vcal(\chi_0), t) +  \gamma_1(\normSup{\Sigma \Sigma^\top}) \Big\}\ge 1-\epsilon/2, \, \forall t \ge 0.
    \end{align}
\end{lemma}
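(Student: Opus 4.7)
The plan is to exploit the strict negative drift of $\Vcal$ outside $\Dcal$, pass to expectation via It\^o's formula, and then convert the expected decay into the desired probability bound via Markov's inequality. I would WLOG assume $\chi_0 \notin \Dcal$ (else $q_1 = 0$ and the claim is trivial) and introduce the shifted process $Z(t) := \Vcal(\chi(t \wedge q_1)) - \gamma_1(\normSup{\Sigma \Sigma^\top})$, which is non-negative because sample-path continuity of $\chi$ (hence of $\Vcal \circ \chi$) forces $\Vcal(\chi(q_1)) = \gamma_1(\normSup{\Sigma \Sigma^\top})$ exactly at first hitting of $\partial \Dcal$; thus $Z(t) > 0$ on $\{t < q_1\}$ and $Z(t) = 0$ on $\{t \ge q_1\}$.

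First, on the pre-stopping window $\{s < q_1\}$ we have $\Vcal(\chi(s)) > \gamma_1(\normSup{\Sigma \Sigma^\top})$; applying the strictly increasing $\alpha$ and the definition of $d_1$ gives $\alpha(\Vcal(\chi(s))) > c\gamma(\normSup{\Sigma \Sigma^\top}) \ge c\gamma(\norm{\Sigma(s) \Sigma(s)^\top})$, so the scNSS-Lyapunov inequality sharpens to $\Lcal[\Vcal](\chi(s), \Sigma(s)) \le -\bar\alpha(\Vcal(\chi(s)))$ with $\bar\alpha := (1 - 1/c)\alpha \in \Kcal$. I then apply It\^o's formula to $\Vcal(\chi(\cdot))$ on $[0, t \wedge q_1]$, localizing with $\tau_N := \inf\{t : \Vcal(\chi(t)) \ge N\}$ to promote the It\^o integral to a true martingale, take expectation, and pass $N \to \infty$ via monotone convergence. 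Combined with the drift bound and the monotonicity $\bar\alpha(\Vcal) \ge \bar\alpha(Z)$ (since $\gamma_1 \ge 0$), this yields the integrated dissipation
\begin{equation*}
\E[Z(t)] + \E\Big[\int_0^{t \wedge q_1} \bar\alpha(Z(s)) \, \de s\Big] \le Z(0) \le \Vcal(\chi_0).
\end{equation*}

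A deterministic comparison with the scalar ODE $\dot{y} = -\bar\alpha(y)$, $y(0) = \Vcal(\chi_0)$---whose solution is $\Kcal\Lcal$ in $(y_0, t)$---then yields $\E[Z(t)] \le \tilde\beta(\Vcal(\chi_0), t)$ for some $\tilde\beta \in \Kcal\Lcal$. Markov's inequality gives $\Prob\{Z(t) > (2/\epsilon) \tilde\beta(\Vcal(\chi_0), t)\} \le \epsilon/2$; setting $\beta_1(r, t) := (2/\epsilon) \tilde\beta(r, t)$ (still $\Kcal\Lcal$ for each fixed $\epsilon$) and recalling $\Vcal(\chi(t \wedge q_1)) = Z(t) + \gamma_1(\normSup{\Sigma \Sigma^\top})$ closes the argument.

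The main obstacle will be the comparison step: since $\bar\alpha$ is only $\Kcal$ (not convex), Jensen's inequality cannot be directly invoked on $\phi(t) := \E[Z(t)]$ to deduce $\dot\phi \le -\bar\alpha(\phi)$. I would resolve this either by (a) constructing a convex $\Kcal$-minorant $\hat\alpha \le \bar\alpha$ on the bounded range $[0, \Vcal(\chi_0)]$ (on which $\E[Z(t)]$ lives by the supermartingale bound) and applying Jensen to $\hat\alpha$, or (b) using a change-of-variable $\psi(r) := \int_\eta^r \de s / \bar\alpha(s)$ to linearize the dissipation and inverting to recover the $\Kcal\Lcal$ profile; the latter requires another round of It\^o/localization applied to $\psi(\Vcal)$.
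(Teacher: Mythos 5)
Your architecture is genuinely different from the paper's: you aim to show that the \emph{mean} $\E[Z(t)]$ decays like a $\Kcal\Lcal$ function of $(\Vcal(\chi_0),t)$ and then apply Markov's inequality, whereas the paper never bounds the mean by a quantity decaying in $t$. Instead it proves (i) the supermartingale estimate $\E[\Vcal^+(\chi(t\wedge q_1))]\le \Vcal^+(\chi_0)$, which via Chebyshev gives a stability bound at each fixed $t$, and (ii) almost-sure attractivity $\Vcal^+(\chi(t\wedge q_1))\to 0$, obtained by contradiction from the integrated drift (if $q_1=\infty$ with positive probability, the drift integral grows like $(c-1)\gamma(\normSup{\Sigma\Sigma^\top})\,t\to\infty$, contradicting $\Vcal(\chi_0)<\infty$); it then packages (i) and (ii) into a $\Kcal\Lcal$ bound via Sontag's characterization of uniform asymptotic stability. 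Your localization with $\tau_N$, Fatou/monotone convergence, the identification $\Vcal(\chi(q_1))=\gamma_1(\normSup{\Sigma\Sigma^\top})$, and the sharpened drift $-(1-1/c)\alpha(\Vcal)$ outside $\Dcal$ all match the paper's Steps 1--3 and are sound.

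The gap sits exactly at the step you flag, and neither proposed fix closes it. Fix (a): Jensen requires the convex minorant $\hat\alpha\le\bar\alpha$ on the \emph{support of $Z(s)$}, not merely on $[0,\Vcal(\chi_0)]$ --- only the expectation is confined to that interval, while the diffusion can push sample paths above $\Vcal(\chi_0)$. Moreover, no globally convex positive-definite minorant of $\bar\alpha$ exists unless $\bar\alpha$ grows at least linearly: convexity with $\hat\alpha(0)=0$ makes $\hat\alpha(r)/r$ nondecreasing, so $\hat\alpha\le\bar\alpha$ with $\bar\alpha(r)/r\to 0$ (in particular any \emph{bounded} $\alpha$, which is precisely the scNSS regime this lemma targets) forces $\hat\alpha\equiv 0$. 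Worse, the intermediate target itself is in doubt: $Z$ is a nonnegative supermartingale converging a.s.\ to zero, but such processes need not have vanishing mean (cf.\ $\exp(B_t-t/2)$), and the integrated dissipation $\int_0^\infty\E[\bar\alpha(Z(s))]\,\de s<\infty$ does not force $\E[Z(t)]\to 0$ when $\bar\alpha$ is bounded, since mass can escape to large values where $\bar\alpha$ saturates. Fix (b) is more promising --- $\psi$ is concave, so the It\^o correction term has the favorable sign and $\Lcal[\psi\circ\Vcal]\le -1$ outside $\Dcal$ --- but it is only sketched and still needs care where $\psi\to-\infty$ near the origin and when $\Sigma\equiv 0$. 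The robust repair is the paper's: work at the level of probabilities (Chebyshev for stability at each $t$, a.s.\ convergence for attractivity, then the $\Kcal\Lcal$ lemma), avoiding any claim that the mean decays.
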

\begin{proof}
     Step 1: We establish that $\Vcal(\chi(t)) < \infty$ almost surely for all $t \in \R_+$. Let $\Scal_k = \{\xi \in \mathcal{S}|\, \Vcal(\xi) \le k \}$ for each integer $k \ge \Vcal(\chi_0)$. Fix the initial state $\chi_0 \in \Scal$, define the stopping time $\tau_k$ as the first time the sample path of the process $\{\chi(t)\}_{t \ge 0}$ exits $\Scal_k$, i.e.
    \begin{align}
        \tau_k = \inf\{ t \in \R_+| \, \Vcal(\chi(t)) > k \}.
    \end{align} 
    According to Dynkin's formula \cite[Lemma 3.2]{Book_khasminskii2012}, it gives
    \begin{align}\label{eq: barVcal_ItoInt}
        \E[\Vcal(\chi(t \wedge \tau_k))] &= \Vcal(\chi_0) + \E\bigg[\int_{0}^{t \wedge \tau_k} \Lcal[\Vcal](\chi(s), \Sigma(s)) \de s \bigg] , \, \forall t \in \R_+.
    \end{align}
    Due to the Lyapunov function in \eqref{eq: lyapnovFun}, it holds 
    \begin{align}\label{eq: barVcal_Diss}
    \begin{split}
         \E[\Vcal(\chi(t \wedge \tau_k))] &\le \Vcal(\chi_0) - \E\bigg[\int_{0}^{t \wedge \tau_k} \alpha(\Vcal(\chi(s))) \de s - \int_{0}^{t \wedge \tau_k} \gamma(\norm{\Sigma(s)\Sigma(s)^\top}) \de s  \bigg] \\
        &\le \Vcal(\chi_0) + t\gamma(\norm{\Sigma \Sigma^\top}_\infty), \, \forall t \in \R_+ .
    \end{split}
    \end{align}
    Using Chebyshev’s inequality \cite[p. 5]{Book_mao}, the probability that the process $\{ \chi(t \wedge \tau_k) \}_{t \ge 0 }$ stays within $\Scal_k$ is estimated as
    \begin{align}
        \Prob\big\{ \Vcal(\chi(t \wedge \tau_k)) < k \big\} \ge 1 - \frac{1}{k} \big(\Vcal(\chi_0) + t\gamma(\norm{\Sigma \Sigma^\top}_\infty) \big) , \, \forall t \in \R_+.
    \end{align}
    Since the condition $\Vcal(\chi(t \wedge \tau_k)) < k$ implies that $t < \tau_k$ (as otherwise, if $t \ge \tau_k$, then $\Vcal(\chi(t \wedge \tau_k)) = \Vcal(\chi(\tau_k)) = k$), it is concluded that
    \begin{align}
        \Prob\big\{ t < \tau_k \big\} \ge 1 - \frac{1}{k} \big(\Vcal(\chi_0) + t\gamma(\norm{\Sigma}_\infty) \big) , \, \forall t \in \R_+.
    \end{align}    
    Taking the limit $k \to \infty$, it follows that $\Prob\big\{  t < \lim_{k \to\infty}\tau_k \big\} = 1, \forall t \in \R_+$, which implies that $\Prob\big\{ \lim_{k \to \infty}\tau_k = \infty \big\} = 1$ and $\Prob\big\{ \Vcal(\chi(t)) < \infty \big\} = 1, \, \forall t \in \R_+$. 

    Step 2: we show that 
    \begin{align}\label{eq: barVcal}
        \E[\Vcal(\chi(t \wedge q_1))] \le \Vcal(\chi_0) - \E \bigg[\int_{0}^{t \wedge q_1 }  \alpha(\Vcal(\chi(s\wedge q_1))) - \gamma(\normSup{\Sigma \Sigma^\top}) \de s \bigg]. 
    \end{align}
    Since $\Prob\big\{ \lim_{k \to \infty}\tau_k = \infty \big\} = 1$, it follows that
    \begin{align}\label{eq: barVcal1}
    \begin{split}
        \E[\Vcal(\chi(t \wedge q_1))] &= \E \big\{ \Vcal[\chi( \liminf_{k\to\infty}(t \wedge q_1 \wedge \tau_k) )]\big\} = \E \big\{ \liminf_{k\to\infty} \Vcal[\chi( t \wedge q_1 \wedge \tau_k )]\big\} \\
        &\le \liminf_{k\to\infty}  \E \big\{ \Vcal[\chi( t \wedge q_1 \wedge \tau_k )]\big\},
    \end{split}
    \end{align}
    where the last inequality follows from Fatou's lemma \cite[p. 123]{Book_MaDonaldWeiss}. Plugging \eqref{eq: barVcal1} into \eqref{eq: barVcal_Diss} and applying the monotone convergence theorem \cite[p. 176]{Book_MaDonaldWeiss} result in \eqref{eq: barVcal}.

    Step 3: we demonstrate that for any $\epsilon \in (0,1)$, 
    \begin{align}\label{eq: Pstable}
        \Prob \bigg\{ \Vcal^+(\chi(t \wedge q_1)) \le \frac{2\Vcal^+(\chi_0)}{\epsilon} \bigg\} \ge 1 - \epsilon/2,
    \end{align}
    where $\Vcal^+: \Scal \to \R_+$ is defined as
    \begin{align}
        \Vcal^+(\xi) = \begin{cases} \Vcal(\xi)- \alpha^{-1} \circ c\gamma(\normSup{\Sigma \Sigma^\top}) ,\, &\text{if } \Vcal(\xi) \ge \alpha^{-1} \circ c\gamma(\normSup{\Sigma \Sigma^\top}) \\
        0,\, &\text{otherwise}.            
        \end{cases}
    \end{align}
    If $\chi_0 \in \Dcal$, then $q_1 = 0$ and $\chi(t \wedge q_1) = \chi_0$ for all $t\in \R_+$, so \eqref{eq: Pstable} holds trivially. If instead $\chi \in \Dcal^c$, note that since $t \wedge q_1 \le q_1$, the trajectory $\chi(t \wedge q_1)$ remains outside $\Dcal$ for all $t \in \R_+$. Consequently, by \eqref{eq: barVcal} and the definition of a scNSS-Lyapunov function in Definition \ref{def: scnss-lf},
    \begin{align}\label{eq:VchiExp}
        \E[\Vcal^+(\chi(t \wedge q_1))] \le \Vcal^+(\chi_0).
    \end{align}
    Combining \eqref{eq:VchiExp} with \eqref{eq: barVcal} and applying Chebyshev’s inequality \cite[p. 5]{Book_mao} establishes \eqref{eq: Pstable}.

    Step 4: we show that 
    \begin{align}\label{eq:asymptoticConv}
        \Prob \big\{ \lim_{t\to \infty}  \Vcal^+(\chi(t \wedge q_1)) = 0\big\} = 1.
    \end{align}
    It is clear that \eqref{eq:asymptoticConv} holds for $\chi_0 \in \Dcal$. Under the case of $\chi_0 \in \Dcal^c$, let $p_1 = \Prob\{\omega \in \Omega|\, q_1(\omega) < \infty \}$ and $p_2 = 1-p_1 = \Prob\{\omega \in \Omega|\, q_1(\omega) = \infty \}$. If $p_1=1$, then \eqref{eq:asymptoticConv} follows directly, since 
    $$
    \lim_{t \to \infty}\Vcal(\chi(t \wedge q_1)) = \Vcal(\chi(q_1)) = \alpha^{-1} \circ c\gamma(\normSup{\Sigma \Sigma^\top}).
    $$ 
    Now suppose that $p_1 < 1$. Then, it follows from \eqref{eq: barVcal} that
    \begin{align}
    \begin{split}
         \Vcal(\chi_0) &\ge \lim_{t\to \infty}\E \bigg[\int_{0}^{t \wedge q_1}  \alpha(\Vcal(\chi(s))) - \gamma(\normSup{\Sigma \Sigma^\top}) \de s \bigg] \\
         &\ge \lim_{t\to \infty}\E \bigg[\Idc_{\{\omega \in \Omega|\, q_1(\omega) = \infty \}}\int_{0}^{t}  \alpha(\Vcal(\chi(s))) - \gamma(\normSup{\Sigma \Sigma^\top}) \de s \bigg] \\
         &\ge \lim_{t \to \infty}\E \bigg[\Idc_{\{\omega \in \Omega|\, q_1(\omega) = \infty \}}\int_{0}^{t}   (c-1)\gamma(\normSup{\Sigma \Sigma^\top}) \de s \bigg]\\
         &= \lim_{t \to \infty}p_2(c-1)\gamma(\normSup{\Sigma \Sigma^\top}) t = \infty,
    \end{split}
    \end{align}
    which leads to a contradiction since $\Vcal(\chi_0) < \infty$. Hence, we must have $p_1=1$ and \eqref{eq:asymptoticConv} holds. 
    
    Step 5: we complete the proof. The relations in \eqref{eq: Pstable} and \eqref{eq:asymptoticConv} can be combined into
    \begin{align}\label{eq:GASconditions}
    \begin{split}
        &\Prob \bigg\{ \Vcal^+(\chi(t \wedge q_1)) \le \frac{2\Vcal^+(\chi_0)}{\epsilon} \text{ and } \lim_{t \to \infty} \Vcal^+(\chi(t \wedge q_1)) = 0 \bigg\} \ge 1 - \epsilon/2.
    \end{split}
    \end{align}
    The first condition in \eqref{eq:GASconditions} establishes the stability of the process $\chi(t \wedge q_1)$ with respect to the set $\Dcal$ under the topology induced by the size function $\Vcal$. Specifically, there exists a $\Kcal_\infty$-function $\delta_\epsilon(\epsilon_1) = \epsilon \epsilon_1/2$ such that, for any $\epsilon_1 \ge 0$, 
    \begin{align}
        \Vcal^+(\chi(t \wedge q_1)) \le \epsilon_1
    \end{align}
    whenever $\Vcal^+(\chi_0) \le \delta_\epsilon(\epsilon_1)$. The second condition in \eqref{eq:GASconditions} guarantees the attractivity of the process $\chi(t \wedge q_1)$ to the set $\Dcal$. That is, for any $r>0$ and $\epsilon_1>0$, there is a $T>0$, such that $\Vcal^+(\chi(t \wedge q_1)) \le \epsilon_1$ whenever $\Vcal^+(\chi_0) < r$ and $t \ge T$. Therefore, by \cite[Proposition 2.5]{Sontag_SIAM_1996}, there exists a function $\beta_1 \in \Kcal\Lcal$ such that
    \begin{align}
        \Prob \bigg\{ \Vcal^+(\chi(t\wedge q_1)) \le  \beta_1(\Vcal(\chi_0),t) \bigg\} \ge 1 - \epsilon/2. 
    \end{align}
    which concludes the proof with $\gamma_1(r) = \alpha^{-1} \circ c\gamma(r)$. 
\end{proof}

\begin{figure}
    \centering
    \includegraphics[width=0.4\linewidth]{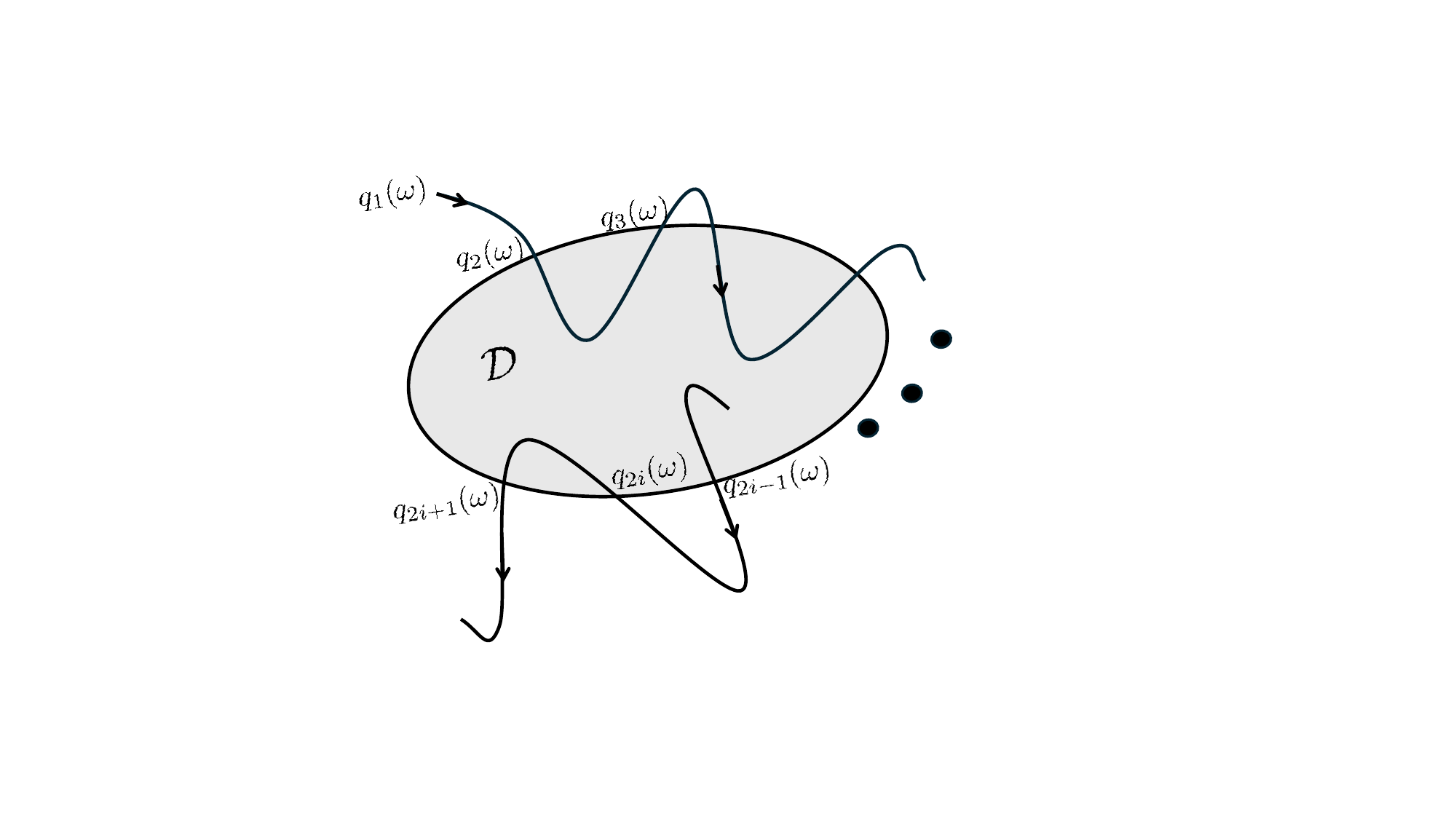}
    \caption{A sampled trajectory of the stochastic dynamics.}
    \label{fig:Stochastic Trajectory}
\end{figure}

We are now ready to state the main result of this section.
\begin{theorem}\label{thm: scNSS}
    System \eqref{eq:nonLinearSys} is scNSS if there exists a scNSS-Lyapunov function for it.
\end{theorem}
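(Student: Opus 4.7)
The plan is to extend the estimate of Lemma \ref{lm: probbeforeq1} past the first hitting time $q_1$ of $\Dcal$. Lemma \ref{lm: probbeforeq1} already supplies $\beta_1 \in \Kcal\Lcal$ and $\gamma_1 = \alpha^{-1}\circ c\gamma \in \Kcal_{[0,d_1)}$ such that, with probability at least $1-\epsilon/2$, the inequality $\Vcal(\chi(t\wedge q_1)) \le \beta_1(\Vcal(\chi_0),t) + \gamma_1(\normSup{\Sigma \Sigma^\top})$ holds for every $t\ge 0$. Hence for $t < q_1$ the desired bound is immediate, and the remaining task is to control $\Vcal(\chi(t))$ on $\{t\ge q_1\}$ on an event of probability at least $1-\epsilon/2$, after which a union bound completes the argument.

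For the post-$q_1$ regime, the key observation is that on $\Dcal^c = \{\xi : \Vcal(\xi) > \gamma_1(\normSup{\Sigma\Sigma^\top})\}$ the scNSS-Lyapunov inequality combined with the bound $\norm{\Theta\Theta^\top}\le \normSup{\Sigma\Sigma^\top}$ forces
\[
\Lcal[\Vcal](\xi,\Theta) \le -\alpha(\Vcal(\xi)) + \gamma(\normSup{\Sigma\Sigma^\top}) \le -(1-1/c)\alpha(\Vcal(\xi)) < 0.
\]
Thus $\Vcal(\chi(\cdot))$ is a local supermartingale during every excursion of $\chi$ out of $\Dcal$. Fixing $R = 2/\epsilon$ and defining the stopping time
\[
p_R = \inf\{t\ge q_1:\, \Vcal(\chi(t))\ge R\gamma_1(\normSup{\Sigma\Sigma^\top})\},
\]
I would apply Doob's supermartingale inequality to $\Vcal(\chi(\cdot\wedge p_R))$, combined with the strong Markov property at each re-entry of $\chi$ into $\Dcal$, to show $\Prob\{p_R<\infty\}\le 1/R = \epsilon/2$. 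This yields $\sup_{t\ge q_1}\Vcal(\chi(t)) < R\gamma_1(\normSup{\Sigma\Sigma^\top})$ with probability at least $1-\epsilon/2$.

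A union bound then gives, with probability at least $1-\epsilon$,
\[
\Vcal(\chi(t)) \le \beta_1(\Vcal(\chi_0),t) + R\gamma_1(\normSup{\Sigma\Sigma^\top}) \quad \forall t\ge 0,
\]
so the scNSS estimate is established with $\beta = \beta_1 \in \Kcal\Lcal$ and $\tilde\gamma(r) = R\gamma_1(r) = (2/\epsilon)\alpha^{-1}\circ c\gamma(r)$, which remains in $\Kcal_{[0,d_1)}$ since multiplication by a positive constant preserves this class and its domain of definition.

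The chief obstacle is making the Doob-type argument rigorous when $\chi$ can undergo countably many excursions out of $\Dcal$, because the strict dissipation $\Lcal[\Vcal]<0$ only holds on $\Dcal^c$ and the supermartingale structure appears to reset every time $\chi$ re-enters $\Dcal$. A clean resolution is to imitate Step~3 of the proof of Lemma \ref{lm: probbeforeq1}: apply a single Dynkin–Chebyshev estimate to the stopped process $\Vcal(\chi(\cdot\wedge p_R))$ by exploiting the surrogate $\Vcal^+$ introduced there, observing that sojourns inside $\Dcal$ contribute zero expected increment to $\Vcal^+$ while sojourns outside contribute a strictly negative term. Alternatively, one can mollify $\Vcal^+$ near $\partial\Dcal$ to restore $C^2$-regularity and then invoke the classical nonnegative supermartingale inequality globally in time. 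Either route will yield the required $\epsilon/2$ tail bound and complete the theorem.
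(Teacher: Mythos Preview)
Your architecture matches the paper's exactly: Lemma~\ref{lm: probbeforeq1} handles the pre-hitting regime, a separate argument handles the post-hitting regime, and a union bound combines them to produce the bound $\beta_1(\Vcal(\chi_0),t)+\tfrac{2}{\epsilon}\,\alpha^{-1}\!\circ c\gamma(\normSup{\Sigma\Sigma^\top})$ with probability at least $1-\epsilon$.

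The difference lies in the post-hitting step, and there you have over-engineered. The scNSS definition asks only that the bound hold for each \emph{fixed} $t$, not uniformly in $t$; consequently the paper never attempts to control $\sup_{t\ge q_1}\Vcal(\chi(t))$ or to show $\Prob\{p_R<\infty\}\le\epsilon/2$. Instead it fixes $t$, introduces the entry/exit times $\{q_i\}$ of $\Dcal$, and partitions $\Omega=\bigcup_i \Acal_i(t)$ according to which interval $[q_i,q_{i+1})$ contains $t$. On each $\Acal_{2i}(t)$ the process is inside $\Dcal$, so $\Vcal(\chi(t))\le\gamma_1(\normSup{\Sigma\Sigma^\top})$ trivially; on each $\Acal_{2i-1}(t)$ with $i\ge 2$ the process is mid-excursion, and the supermartingale property of $\Vcal(\chi((t\vee q_{2i-1})\wedge q_{2i}))$ combined with $\Vcal(\chi(q_{2i-1}))=\gamma_1(\normSup{\Sigma\Sigma^\top})$ gives $\E[\Vcal(\chi(t))\mathbf{1}_{\Acal_{2i-1}(t)}]\le\gamma_1(\normSup{\Sigma\Sigma^\top})\Prob\{\Acal_{2i-1}(t)\}$. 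Summing yields $\E[\Vcal(\chi(t\vee q_2))]\le\gamma_1(\normSup{\Sigma\Sigma^\top})$, and a single Chebyshev inequality finishes.

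This pointwise-$t$ route dissolves your ``chief obstacle'': although there may be countably many excursions, only the one containing the fixed time $t$ contributes, so no Doob maximal inequality, no mollification of $\Vcal^+$, and no strong-Markov iteration is needed. Your proposed Doob argument for $\Prob\{p_R<\infty\}\le 1/R$ is in fact delicate precisely because a naive union bound over excursions diverges, whereas the mollified-$\Vcal^+$ route would work but is unnecessary machinery. The paper's expectation-then-Chebyshev computation is both simpler and sufficient.
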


\begin{proof}
Assume the system \eqref{eq:nonLinearSys} admits a scNSS–Lyapunov function. Fix any constant $c > 1$, and let $d_1$ and $\Dcal$ be as defined in Lemma \ref{lm: probbeforeq1}. We divide the proof into two cases according to the initial condition $\chi_0$: $\chi_0 \in \Dcal^c$ and $\chi_0 \in \Dcal$. 

\textbf{Case 1 ($\chi_0 \in \Dcal^c$):} As shown in Fig. \ref{fig:Stochastic Trajectory}, let $q_1=0$, and for all $i \in \mathbb{Z}_+$, define $q_{2i}$ and $q_{2i+1}$ as the sequences of stopping times corresponding to the $i$th entry into and exit from the set $\Dcal$, respectively. That is 
\begin{align}\label{eq: entryExitTime}
\begin{split}
    q_{2i} &= \begin{cases}
        \inf \{t > q_{2i-1}|\, \chi(t) \in  \Dcal\} &\text{ if } \{t > q_{2i-1}|\, \chi(t) \in  \Dcal\} \neq \emptyset \\
        \infty &\text{ otherwise }
    \end{cases} \\
    q_{2i+1} &= \begin{cases}
        \inf \{t > q_{2i}|\, \chi(t) \in  \Dcal^c\backslash \partial \Dcal\} &\text{ if } \{t > q_{2i}|\, \chi(t) \in  \Dcal^c\backslash \partial \Dcal \} \neq \emptyset \\
        \infty &\text{ otherwise }
    \end{cases}   
\end{split}
\end{align}
By the definition of scNSS-Lyapunov function, the following properties hold. For any fixed $t \ge 0$, under the event $\mathcal{A}_{2i-1}(t) = \{\omega \in \Omega|\, t \in [q_{2i-1}(\omega), q_{2i}(\omega))\}$ ($i \in \mathbb{Z}_+$), the states $\chi(\omega, t)$ are outside of $\Dcal$ and it follows that
\begin{align}\label{eq:LVoutsideS}
\begin{split}
    \Lcal[\Vcal](\chi(t), \Sigma(t)) &\le - \alpha(\Vcal(\chi(t))) + \gamma(\Sigma(t)\Sigma(t)^\top) \\
    &\le -\big( 1 - {1}/{c}\big) c \gamma(\normSup{\Sigma \Sigma^\top}).
\end{split}
\end{align}
For any fixed $t \ge 0$, under the event $\mathcal{A}_{2i}(t) = \{\omega \in \Omega|\, t \in [q_{2i}(\omega), q_{2i+1}(\omega))\}$ ($i \in \mathbb{Z}_+$), the states $\chi(\omega, t)$ are inside of $\Dcal$ and 
\begin{align}\label{eq: insideIneq}
    \Vcal(\chi(t)) \le \alpha^{-1} \circ c\gamma(\normSup{\Sigma \Sigma^\top}).
\end{align}

We will first analyze the stochastic process $\chi((t \vee q_{2i-1}) \wedge q_{2i})$, which lies outside the set $\Dcal$. By \eqref{eq:ItoIntegral}, it holds: 
\begin{align}\label{eq: twoItos}
\begin{split}
    \Vcal(\chi(t \wedge q_{2i-1})) &= \Vcal(\chi_0) + \int_{0}^{t \wedge q_{2i-1}} \Lcal[\Vcal](\chi(s), \Sigma(s)) \de s + \int_{0}^{t \wedge q_{2i-1}}  \innprod{\nabla \mathcal{V}(\chi(s))}{g(\chi(s)) \Sigma(s) \de B(s)}\\
    \Vcal(\chi(t \wedge q_{2i})) &= \Vcal(\chi_0) + \int_{0}^{t \wedge q_{2i}} \Lcal[\Vcal](\chi(s), \Sigma(s)) \de s + \int_{0}^{t \wedge q_{2i}}  \innprod{\nabla \mathcal{V}(\chi(s))}{g(\chi(s)) \Sigma(s) \de B(s)}
\end{split}
\end{align}
Taking the difference between the two equations in \eqref{eq: twoItos} results in 
\begin{align}
\begin{split}
     &\Vcal(\chi(t \wedge q_{2i})) - \Vcal(\chi(t \wedge q_{2i-1})) = \int_{t \wedge q_{2i-1}}^{t \wedge q_{2i}} \Lcal[\Vcal](\chi(s), \Sigma(s)) \de s \\
     &\quad + \int_{t \wedge q_{2i-1}}^{t \wedge q_{2i}}  \innprod{\nabla \mathcal{V}(\chi(s))}{g(\chi(s)) \Sigma(s) \de B(s)},
\end{split}
\end{align}
which can be rewritten as
\begin{align}
\begin{split}
    &\Vcal(\chi((t \vee q_{2i-1}) \wedge q_{2i})) - \Vcal(\chi(q_{2i-1})) =  \int_{q_{2i-1}}^{(t \vee q_{2i-1}) \wedge q_{2i}} \Lcal[\Vcal](\chi(s), \Sigma(s)) \de s \\
    &\quad + \int_{q_{2i-1}}^{(t \vee q_{2i-1}) \wedge q_{2i}}  \innprod{\nabla \mathcal{V}(\chi(s))}{g(\chi(s)) \Sigma(s) \de B(s)}.
\end{split}
\end{align}
Since $\Lcal[\Vcal](\chi(s), \Sigma(s)) \le 0$ for all $s \in [q_{2i-1}, (t \vee q_{2i-1}) \wedge q_{2i}]$, $\Vcal(\chi((t \vee q_{2i-1}) \wedge q_{2i}))$ is a supermartingale, which yields
\begin{align}\label{eq:boundoutside}
    \E[\Vcal(\chi((t \vee q_{2i-1}) \wedge q_{2i}))] \le \E[\Vcal(\chi(q_{2i-1}))] = \alpha^{-1} \circ c\gamma(\normSup{\Sigma \Sigma^\top}).
\end{align}

We now present the main result. For any $\omega \in \Omega$, the time axis admits the disjoint decomposition
$$
[0,\infty) = \cup_{i=1}^\infty [q_i(\omega), q_{i+1}(\omega)).
$$ 
Fix $t \ge 0$. Then there exists $\bar{i}(t,\omega) \in \mathbb{Z}_+$, such that $t \in [q_{\bar{i}}(\omega), q_{\bar{i}+1}(\omega))$.  By construction, this implies that for any $\omega \in \Omega$
$$
\omega \in \Acal_{\bar{i}}(t) = \{\omega \in \Omega|\, t \in [q_{\bar{i}}(\omega), q_{\bar{i}+1}(\omega))\} \subset \cup_{i=1}^\infty \Acal_i(t).
$$
Moreover, the events $\Acal_{i}(t)$ and $\Acal_{j}(t)$ are mutually disjoint for $i \neq j$. Consequently,
$$
\Omega = \sum_{i=1}^\infty \Acal_i(t).
$$
It follows that
\begin{align}\label{eq: Vafterq1}
\begin{split}
    &\E[\Vcal(\chi(t\vee q_2))] =  \E[\Vcal(\chi(t\vee q_2))\Idc_{\Acal_{1}(t)}(\omega)]  + \sum_{i=1}^\infty \E[\Vcal(\chi(t\vee q_2))\Idc_{\Acal_{2i}(t)}(\omega)] \\
    &\quad + \sum_{i=2}^\infty \E[\Vcal(\chi(t\vee q_2))\Idc_{\Acal_{2i-1}(t)}(\omega)]\\
    &\quad \le \alpha^{-1}\circ c\gamma(\normSup{\Sigma \Sigma^\top}) \bigg[ \Prob\{\Acal_{1}(t)\} +  \sum_{i=1}^\infty\Prob\{\Acal_{2i}(t)\} \bigg] +  \sum_{i=2}^\infty \E[\Vcal(\chi(t))\Idc_{\Acal_{2i-1}(t)}(\omega)]
\end{split}
\end{align}
where the inequality follows from \eqref{eq: insideIneq}. The last term in \eqref{eq: Vafterq1} can be bounded as follows:
\begin{align}\label{eq: Vexpoutside}
\begin{split}
    \E[\Vcal(\chi(t))\Idc_{\Acal_{2i-1}(t)}(\omega)] &= \E[\Vcal(\chi((t \vee q_{2i-1})\wedge q_{2i}))\Idc_{\Acal_{2i-1}(t)}(\omega)] \\
    &=\E[\Vcal(\chi((t \vee q_{2i-1})\wedge q_{2i}))] - \E[\Vcal(\chi((t \vee q_{2i-1})\wedge q_{2i}))\Idc_{\Acal_{2i-1}(t)^c}(\omega)] \\
    &=\E[\Vcal(\chi((t \vee q_{2i-1})\wedge q_{2i}))] - \alpha^{-1}\circ c\gamma(\normSup{\Sigma \Sigma^\top}) \Prob\{\Acal_{2i-1}(t)^c\} \\
    &\le  \alpha^{-1}\circ c\gamma(\normSup{\Sigma \Sigma^\top})\Prob\{\Acal_{2i-1}(t)\}
\end{split}
\end{align}
where the third line uses the fact that $\Vcal(\chi(q_{2i-1})) = \Vcal(\chi(q_{2i})) = \alpha^{-1}\circ c\gamma(\normSup{\Sigma})$, and the last inequality follows from \eqref{eq:boundoutside}. Substituting \eqref{eq: Vexpoutside} into \eqref{eq: Vafterq1}, we obtain
\begin{align}
    \E[\Vcal(\chi(t\vee q_2))] \le \alpha^{-1}\circ c\gamma(\normSup{\Sigma \Sigma^\top})
\end{align}
which, by Chebyshev’s inequality \cite[p. 5]{Book_mao}, implies
\begin{align}\label{eq: probafterq1}
    \Prob\bigg\{ \Vcal(\chi(t\vee q_2)) \le \frac{2}{\epsilon} \alpha^{-1}\circ c\gamma(\normSup{\Sigma \Sigma^\top})\bigg\} \ge 1 - \epsilon/2.
\end{align}
Combining \eqref{eq: probafterq1} with Lemma~\ref{lm: probbeforeq1}, we obtain
\begin{align}\label{eq:NSSxoutD}
\begin{split}
    &\Prob\big\{ \Vcal(\chi(t)) \le \beta_1(\Vcal(\chi_0),t) + \frac{2}{\epsilon} \alpha^{-1}\circ c\gamma(\normSup{\Sigma \Sigma^\top})\big\} \\
    &\ge \Prob\bigg\{ [t\le q_2] \cap \big[ \Vcal(\chi(t \wedge q_2)) \le \beta_1(\Vcal(\chi_0),t) + \frac{2}{\epsilon} \alpha^{-1}\circ c\gamma(\normSup{\Sigma \Sigma^\top})\big] \\
    &\quad \cup [t > q_2] \cap \big[\Vcal(\chi(t \vee q_2)) \le \beta_1(\Vcal(\chi_0),t) + \frac{2}{\epsilon} \alpha^{-1}\circ c\gamma(\normSup{\Sigma \Sigma^\top})\big] \bigg\} \\
    &=1 - \Prob\bigg\{ [t> q_2] \cap \big[ \Vcal(\chi(t \vee q_2)) > \beta_1(\Vcal(\chi_0),t) + \frac{2}{\epsilon} \alpha^{-1}\circ c\gamma(\normSup{\Sigma \Sigma^\top})\big] \\
    &\quad \cup [t \le q_2] \cap \big[\Vcal(\chi(t \wedge q_2)) > \beta_1(\Vcal(\chi_0),t)+\frac{2}{\epsilon} \alpha^{-1}\circ c\gamma(\normSup{\Sigma \Sigma^\top})\big] \bigg\} \\
    &\ge 1 - \epsilon 
\end{split}
\end{align}
which completes the proof for $\chi_0 \in \Dcal^c$.

\textbf{Case 2 ($\chi_0 \in \Dcal$):} In this case, $q_2 = 0$ almost surely. Therefore, by \eqref{eq: probafterq1}, it follows that
\begin{align}\label{eq:NSSxinD}
    \Prob\bigg\{ \Vcal(\chi(t)) \le \frac{2}{\epsilon} \alpha^{-1}\circ c\gamma(\normSup{\Sigma \Sigma^\top})\bigg\} \ge 1 - \epsilon/2.
\end{align}
In conclusion, when $\normSup{\Sigma \Sigma^\top} < d_1$, 
$$
\Prob\big\{ \Vcal(\chi(t)) \le \beta_1(\Vcal(\chi_0),t) + \frac{2}{\epsilon} \alpha^{-1}\circ c\gamma(\normSup{\Sigma \Sigma^\top})\big\} \ge 1-\epsilon
$$
holds in both cases. Therefore, the stochastic system is scNSS.
\end{proof}

The existence of an NSS-Lyapunov function implies that system \eqref{eq:nonLinearSys} is NSS. This result can be viewed as a special case of the scNSS property.
\begin{theorem}\label{thm: NSS}
    System \eqref{eq:nonLinearSys} is NSS if there exists a NSS-Lyapunov function for it.
\end{theorem}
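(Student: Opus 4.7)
The plan is to deduce this theorem directly from Theorem~\ref{thm: scNSS} by observing that an NSS-Lyapunov function is a scNSS-Lyapunov function for which every noise covariance is admissible. Comparing Definitions~\ref{def: nss_lf} and~\ref{def: scnss-lf}, every $\Kcal_\infty$-function is a fortiori a $\Kcal$-function, and a $\Kcal$-function defined on all of $\R_+$ qualifies as a $\Kcal_{[0,d)}$-function with $d=\infty$. Hence the dissipation inequality \eqref{eq: lyapnovFun} holds for every $\Theta \in \R^{m\times m}$ with no restriction on $\norm{\Theta\Theta^\top}$.

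Next, I would retrace Lemma~\ref{lm: probbeforeq1} and the proof of Theorem~\ref{thm: scNSS} to confirm that the covariance constraint becomes vacuous in the NSS setting. In Lemma~\ref{lm: probbeforeq1}, the threshold $d_1 = \gamma^{-1}[\tfrac{1}{c}\sup_{r\in\R_+}\alpha(r)]$ equals $+\infty$, because $\alpha \in \Kcal_\infty$ forces $\sup_{r\in\R_+}\alpha(r)=\infty$. Thus the hypothesis $\normSup{\Sigma\Sigma^\top} < d_1$ imposes no restriction. Moreover, since $\alpha^{-1}$ is now defined on all of $\R_+$, the comparison function $\gamma_1(r) = \alpha^{-1}\circ c\gamma(r)$ belongs to $\Kcal$ rather than only $\Kcal_{[0,d)}$.

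Invoking the conclusion of Theorem~\ref{thm: scNSS} with these observations yields, for every $\epsilon\in(0,1)$,
\begin{align*}
\Prob\big\{ \Vcal(\chi(t)) \le \beta_1(\Vcal(\chi_0),t) + \tfrac{2}{\epsilon}\,\alpha^{-1}\circ c\gamma\big(\normSup{\Sigma\Sigma^\top}\big) \big\} \ge 1-\epsilon
\end{align*}
for all $\chi_0 \in \Scal$, all $t \ge 0$, and every Borel-measurable, locally essentially bounded $\Sigma$, with $\beta_1 \in \Kcal\Lcal$ and $\tfrac{2}{\epsilon}\,\alpha^{-1}\circ c\gamma \in \Kcal$. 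This is exactly the NSS bound in the corresponding definition.

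The only subtle point is to verify that no martingale, Fatou, or Chebyshev step within the earlier proofs silently used a finite upper bound on $\normSup{\Sigma\Sigma^\top}$; inspection shows those arguments rely only on the local essential boundedness of $\Sigma$ already built into \eqref{eq:nonLinearSys}, so the reduction transfers verbatim and completes the proof.
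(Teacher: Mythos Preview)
Your proposal is correct and matches the paper's own proof essentially verbatim: the paper also observes that $d_1=\gamma^{-1}[\tfrac{1}{c}\sup_{r\in\R_+}\alpha(r)]=\infty$ because $\alpha\in\Kcal_\infty$, and then invokes the conclusion of Theorem~\ref{thm: scNSS} (specifically the bounds \eqref{eq:NSSxoutD} and \eqref{eq:NSSxinD}) for arbitrary $\normSup{\Sigma\Sigma^\top}$. Your additional remarks about $\gamma_1\in\Kcal$ and the absence of hidden finite-covariance assumptions are correct and slightly more thorough than what the paper writes, but the underlying argument is identical.
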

\begin{proof}
    The existence of an NSS-Lyapunov function directly implies that the constant $d_1$ defined in Lemma \ref{lm: probbeforeq1} satisfies 
    $$
    d_1 =\gamma^{-1} [\frac{1}{c}\sup_{r \in \R_+} \alpha(r)] = \infty.
    $$
   Consequently, by the proof of Theorem~\ref{thm: scNSS}, both \eqref{eq:NSSxoutD} and \eqref{eq:NSSxinD} hold for any $\normSup{\Sigma \Sigma^\top} < d_1 = \infty$. This establishes the NSS property of the dynamical system.
\end{proof}

As shown in \cite[Theorem 2]{Ito2020}, the existence of an iNSS-Lyapunov function guarantees that system \eqref{eq:nonLinearSys} is iNSS.
\begin{theorem}[\cite{Ito2020}]\label{thm: iNSS}
    System \eqref{eq:nonLinearSys} is iNSS if there exists an iNSS-Lyapunov function for it.
\end{theorem}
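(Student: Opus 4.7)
The plan is to follow the strategy used for deterministic integral input-to-state stability (iISS) adapted to the stochastic setting, which is the route taken in \cite{Ito2020}. The core idea is to construct a modified process that absorbs the cumulative noise energy and then extract a $\mathcal{KL}$-bound from the positive-definite dissipation, exactly as in Sontag's iISS characterization, before passing from expected-value bounds to in-probability bounds via Chebyshev's inequality.

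First I would introduce the auxiliary process
\[
W(t) \;:=\; \mathcal{V}(\chi(t)) \;-\; \int_{0}^{t} \gamma\!\left(\norm{\Sigma(s)\Sigma(s)^{\top}}\right) ds,
\]
and apply Itô's formula \eqref{eq:ItoIntegral} together with Definition \ref{def:integralnss_lf} to obtain
\[
W(t) \;\le\; \mathcal{V}(\chi_{0}) \;-\; \int_{0}^{t} \alpha(\mathcal{V}(\chi(s)))\, ds \;+\; M(t),
\]
where $M(t)$ is the local martingale arising from the stochastic integral. Following Step 1 of Lemma \ref{lm: probbeforeq1}, I would localize with the stopping times $\tau_k = \inf\{t : \mathcal{V}(\chi(t)) > k\}$, verify non-explosion, and then use Dynkin's formula together with Fatou's lemma to pass to $k \to \infty$, arriving at
\[
\mathbb{E}\!\left[\mathcal{V}(\chi(t))\right] + \mathbb{E}\!\left[\int_{0}^{t}\alpha(\mathcal{V}(\chi(s)))\,ds\right] \le \mathcal{V}(\chi_{0}) + \mathbb{E}\!\left[\int_{0}^{t} \gamma\!\left(\norm{\Sigma(s)\Sigma(s)^{\top}}\right) ds\right].
\]

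The second step is to convert this dissipation inequality into a $\mathcal{KL}$-bound despite $\alpha$ being only positive definite (not necessarily monotone or of class $\mathcal{K}$). I would invoke Sontag's lemma that for any $\alpha \in \mathcal{PD}$ there exist $\rho \in \mathcal{K}_\infty$ and $\sigma \in \mathcal{KL}$ such that the scalar comparison equation $\dot y = -\rho(y)$ driven by $\alpha(\mathcal{V})$-dissipation yields $\mathcal{V}(\chi(t)) \le \sigma(\mathcal{V}(\chi_{0}), t) + \int_{0}^{t}\gamma(\norm{\Sigma \Sigma^{\top}})ds$ pathwise along sample trajectories for which the martingale $M$ is controlled. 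Concretely, I would separately analyze the sets $\{\mathcal{V}(\chi(s)) \text{ remains above level }\eta\}$ and $\{\mathcal{V}(\chi(s)) \text{ hits }\eta\}$, so that on the first set the integrated dissipation forces decay while on the second set the process is already small, and then glue the two estimates into a single $\mathcal{KL}$-function $\beta$.

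Finally, I would translate the expectation bound into a probability bound: applying Chebyshev's inequality as in Step 3 of Lemma \ref{lm: probbeforeq1} gives, for any $\epsilon \in (0,1)$,
\[
\mathbb{P}\!\left\{ \mathcal{V}(\chi(t)) \le \tfrac{2}{\epsilon}\,\beta(\mathcal{V}(\chi_{0}), t) + \int_{0}^{t} \gamma\!\left(\norm{\Sigma(s)\Sigma(s)^{\top}}\right) ds \right\} \ge 1 - \epsilon,
\]
after absorbing $2/\epsilon$ into a rescaled $\mathcal{KL}$-function $\tilde{\beta}$, which is precisely the iNSS property. The main obstacle is the second step: unlike the NSS and scNSS proofs, where the monotonicity of $\alpha$ permits a clean partition of state space into the set $\mathcal{D}$ and its complement through level sets and stopping-time entries/exits, the $\mathcal{PD}$-only structure forces a much more delicate argument. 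One cannot directly conclude that $\mathcal{V}$ decreases whenever it is above a fixed threshold, so the comparison with a scalar ODE (Sontag's iISS construction) must be lifted to the stochastic setting while keeping the martingale contribution under control, which is where the technical weight of the proof lies.
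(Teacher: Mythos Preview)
The paper does not prove this theorem at all: it simply cites \cite[Theorem~2]{Ito2020} and moves on. There is therefore no proof in the paper to compare your proposal against.

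As for the proposal itself, the overall skeleton---localize with the stopping times $\tau_k$, obtain the dissipation inequality in expectation, then pass to probability via Chebyshev---is the natural route and is indeed how the cited reference proceeds. Two points are worth flagging. First, in your final Chebyshev step the factor $2/\epsilon$ should multiply the \emph{entire} right-hand side, including the integral term $\int_0^t\gamma(\norm{\Sigma\Sigma^\top})\,ds$, not just $\beta$; this is harmless here because the paper's iNSS definition quantifies $\beta,\gamma$ \emph{after} $\epsilon$, so the $\epsilon$-dependent gain can be absorbed, but the inequality as you wrote it does not follow from Markov's inequality. Second, your Step~2 is the genuinely delicate part, and the sketch you give (``separately analyze the sets where $\mathcal V$ stays above $\eta$ or hits $\eta$, then glue'') conflates pathwise and in-expectation arguments: Sontag's deterministic iISS comparison lemma works on individual trajectories of an ODE, whereas here you only have the bound $\mathbb E[\mathcal V(\chi(t))]+\mathbb E[\int_0^t\alpha(\mathcal V)\,ds]\le \mathcal V(\chi_0)+\int_0^t\gamma\,ds$ in expectation. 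The standard fix (and what \cite{Ito2020} does) is to first rescale the Lyapunov function by a smooth $\Kcal_\infty$ function $\rho$ chosen so that $\rho'\cdot\alpha$ dominates a class-$\Kcal$ function of $\rho(\mathcal V)$, apply It\^o to $\rho(\mathcal V)$, and only then run the comparison argument; your write-up does not mention this rescaling, and without it the ``gluing'' you describe does not go through.
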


In the next section, we apply the developed notions of NSS to analyze the robustness of a specific class of stochastic dynamics: stochastic gradient dynamics.

\section{NSS for Stochastic Gradient Dynamics}\label{sec:NSSStochasticGradient}
In this section, we apply different notions of NSS to analyze gradient flows under stochastic noise with time-varying variance.

\subsection{Robustness of Overdamped Langevin Diffusion}
Gradient flows are efficient for solving the optimization problem
\begin{align}\label{eq: optimization}
    \min_{z \in \Zcal}\Jcal(z) 
\end{align}
where $z \in \R^n$ is the decision variable and $\Zcal \subset \R^n$ is a feasible set that is diffeomorphic to $\R^n$. By continuously updating the decision variable in the direction of gradient descent
\begin{align}\label{eq:gradientFLow}
    \de z(t) = - \nabla \Jcal(z(t)) \de t,
\end{align}
the algorithm drives the decision variables toward the set of critical points $\Zcal_c = \{z \in \Zcal| \nabla \Jcal(z) = 0 \} $ under mild regularity conditions:
\begin{assumption}\label{ass: regularity}
    The objective function $\Jcal$ is twice continuously differentiable, bounded below, and coercive on the set $\Zcal$. 
\end{assumption}
Moreover, in the deterministic case, Assumption \ref{ass: regularity} helps rule out convergence to strict saddle points or local maxima \cite{Arthur2024convergence,Arthur2024CDC}, thereby ensuring convergence to a local minimum. To guarantee convergence to the global minimum $\Jcal^*$ of problem \eqref{eq: optimization}, additional assumptions are required--most notably, convexity of $\Jcal$. Another sufficient condition, which ensures exponential convergence to the global minimum, was proposed in \cite{Lojasiewicz1963,Polyak1963}:
\begin{definition}(Polyak-\L ojasiewicz (PL) Condition)
    The function $\Jcal$ satisfies the Polyak-\L ojasiewicz (PL) condition if there exists $c > 0$ such that
    \begin{align} \label{eq: PLInequality}
        \norm{\nabla \Jcal(z)} \ge \mu(\Jcal(z) - \Jcal^*),
    \end{align}
    where $\mu(r) = \sqrt{cr}$ for all $r \in \R_+$.
\end{definition}

While convergence is a key consideration, robustness is another critical factor that determines the efficacy of a gradient algorithm. In the presence of noise, the algorithm should still converge to a solution that is close to the optimum. As noted by \cite{book_Polyak}, many disturbances affecting gradient descent methods can be modeled as random noise. In such cases, the gradient dynamics can be represented by a stochastic differential equation, often referred to as the \emph{overdamped Langevin diffusion}:
\begin{align}\label{eq: gradientDynamics}
    \de z(t) = -\nabla \Jcal(z(t)) \de t + G(z(t)) \Sigma(t) \de B(t)
\end{align}
where $B$ is a standard, independent $n$-dimensional Brownian motion, $G: \Zcal \to \R^{n \times n}$ is locally Lipschitz continuous and locally bounded, and $\Sigma: \R_+ \to \R^{n \times n}$ is Borel measurable and locally essentially bounded, which characterizes the intensity of the stochastic perturbations. Under persistent stochastic noise, the gradient dynamics can never remain exactly at a critical point where $\nabla \Jcal(z)=0$. Instead, we aim for the algorithm to remain within a neighborhood of the optimum with high probability. The size of this neighborhood is governed by the intensity of the noise, characterized by $\normSup{\Sigma \Sigma^\top}$. Formally, this requirement can be expressed by demanding that the stochastic gradient dynamics exhibit NSS.

We aim to establish a connection between the PL condition and the NSS of stochastic gradient dynamics. The classical PL condition requires $\mu$ to be the square-root function, which can be restrictive and limit its applicability in broader contexts where such a strict condition may not hold. To address this limitation, we introduce relaxed versions of the PL condition by employing comparison functions.

\begin{definition}($\Kcal_\infty$-PL Condition)
    The function $\Jcal$ satisfies the $\Kcal_\infty$-PL condition if there exists a $\mu \in \Kcal_\infty$ such that \eqref{eq: PLInequality} holds.
\end{definition}
The $\Kcal_\infty$-PL condition can be further relaxed by removing the unboundedness requirement of the $\Kcal_\infty$-function.
\begin{definition}($\Kcal$-PL Condition)
    The function $\Jcal$ satisfies the $\Kcal$-PL condition if there exists a $\mu \in \Kcal$ such that
    such that \eqref{eq: PLInequality} holds.   
\end{definition}
The $\Kcal$-PL condition can be further relaxed by removing the monotonicity requirement of the $\Kcal$-function.
\begin{definition}($\mathcal{PD}$-PL Condition)
    The function $\Jcal$ satisfies the $\mathcal{PD}$-PL condition if there exists a $\mu \in \mathcal{PD}$ such that \eqref{eq: PLInequality} holds.   
\end{definition}

\subsubsection{Robustness Analysis with Global Lipchitz Condition}
The following main theorem establishes the connection between NSS and the different variants of the PL conditions by assuming that $\nabla \Jcal$ is globally Lipchitz continuous.
\begin{theorem}\label{thm:overLangevinLipchitz}
    Let the objective function $\Jcal$ have a global $L$-Lipschitz continuous gradient, and let $G(z)$ be globally bounded, i.e., $\normF{G} \le K_G$. Then:
    \begin{enumerate}
        \item If the function $\Jcal$ satisfies the $\Kcal_\infty$-PL condition, the overdamped Langevin diffusion \eqref{eq: gradientDynamics} is NSS;
        \item If the function $\Jcal$ satisfies the $\Kcal$-PL condition, the overdamped Langevin diffusion \eqref{eq: gradientDynamics} is scNSS and iNSS.
        \item If the function $\Jcal$ satisfies the $\mathcal{PD}$-PL condition, the overdamped Langevin diffusion \eqref{eq: gradientDynamics} is iNSS.
    \end{enumerate}
\end{theorem}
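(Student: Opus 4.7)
The plan is to use the suboptimality gap $\Vcal(z) := \Jcal(z) - \Jcal^*$ as a single candidate Lyapunov function for all three cases, and then invoke Theorems~\ref{thm: scNSS}, \ref{thm: NSS}, and \ref{thm: iNSS} according to the class to which the PL modulus $\mu$ belongs. The first step is to verify that $\Vcal$ qualifies as a size function in the sense of Definition~\ref{def:sizeFunc}: under Assumption~\ref{ass: regularity} it is $\mathcal{C}^2$, nonnegative, and coercive, and any of the three PL inequalities forces every critical point of $\Jcal$ to lie in $\{\Jcal = \Jcal^*\}$, so the zero set of $\Vcal$ coincides with the equilibrium set of the drift $-\nabla \Jcal$ (under a uniqueness-of-minimizer assumption, $\Vcal$ is strictly positive definite with respect to $z^*$; otherwise one works with the natural extension to equilibrium sets).

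Next, I would compute the infinitesimal generator of $\Vcal$ along \eqref{eq: gradientDynamics}:
\begin{align*}
    \Lcal[\Vcal](z,\Sigma) = -\norm{\nabla \Jcal(z)}^2 + \tfrac{1}{2}\Tr{G(z)^\top \nabla^2 \Jcal(z)\, G(z)\,\Sigma \Sigma^\top}.
\end{align*}
The $L$-Lipschitz property of $\nabla \Jcal$ gives $\nabla^2 \Jcal(z) \preceq L I_n$, so $G(z)^\top \nabla^2 \Jcal(z)\, G(z) \preceq L\, G(z)^\top G(z)$. Combining Lemma~\ref{lm:traceIneq} with the identity $\eigmax{\Sigma\Sigma^\top} = \norm{\Sigma\Sigma^\top}$ for symmetric positive semidefinite matrices bounds the diffusion term by $\tfrac{L}{2}\normF{G(z)}^2\,\norm{\Sigma\Sigma^\top} \le \tfrac{L K_G^2}{2}\,\norm{\Sigma\Sigma^\top}$. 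Substituting the PL bound $\norm{\nabla \Jcal(z)} \ge \mu(\Vcal(z))$ then produces the dissipation estimate
\begin{align*}
    \Lcal[\Vcal](z,\Sigma) \le -\mu(\Vcal(z))^2 + \tfrac{L K_G^2}{2}\,\norm{\Sigma\Sigma^\top}.
\end{align*}

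Setting $\alpha := \mu^2$ and $\gamma(r) := \tfrac{L K_G^2}{2}\, r$, the three items fall out of the Lyapunov theorems of Section~3. For (i), $\mu \in \Kcal_\infty$ gives $\alpha \in \Kcal_\infty$, so $\Vcal$ is an NSS-Lyapunov function and Theorem~\ref{thm: NSS} yields NSS. For (ii), $\mu \in \Kcal$ gives $\alpha \in \Kcal$; restricting $\gamma$ to an interval $[0,d)$ with $d$ chosen as in Lemma~\ref{lm: probbeforeq1} makes $\Vcal$ an scNSS-Lyapunov function, so Theorem~\ref{thm: scNSS} delivers scNSS, and since $\Kcal \subset \PDcal$ the same $\Vcal$ is simultaneously an iNSS-Lyapunov function, so Theorem~\ref{thm: iNSS} also gives iNSS. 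For (iii), $\mu \in \PDcal$ gives $\alpha = \mu^2 \in \PDcal$ and Theorem~\ref{thm: iNSS} again gives iNSS.

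The main technical point is really just the generator bound; all of the probabilistic heavy lifting has already been done by Section~3. The only subtle step in the computation is converting the Hessian-weighted trace $\Tr{G^\top \nabla^2 \Jcal\, G\, \Sigma\Sigma^\top}$ into a clean bound involving $\norm{\Sigma\Sigma^\top}$ rather than $\Tr{\Sigma\Sigma^\top}$, which the identity $\eigmax{\Sigma\Sigma^\top} = \norm{\Sigma\Sigma^\top}$ makes possible; the remainder of the proof is essentially bookkeeping about which comparison class $\mu^2$ inherits from $\mu$.
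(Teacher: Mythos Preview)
Your proposal is correct and follows essentially the same approach as the paper: use $\Vcal=\Jcal-\Jcal^*$ as the Lyapunov candidate, bound the generator via $\|\nabla^2\Jcal\|\le L$ and the trace inequality (Lemma~\ref{lm:traceIneq}) to obtain $\Lcal[\Vcal](z,\Sigma)\le -\mu(\Vcal(z))^2+\tfrac{L K_G^2}{2}\|\Sigma\Sigma^\top\|$, and then read off the three cases from Theorems~\ref{thm: NSS}, \ref{thm: scNSS}, \ref{thm: iNSS}. If anything, you are slightly more careful than the paper about the size-function verification (the uniqueness-of-minimizer caveat) and about why the linear $\gamma$ fits the $\Kcal_{[0,d)}$ requirement in Definition~\ref{def: scnss-lf}.
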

\begin{proof}
    Assumption \ref{ass: regularity} ensures that $\Jcal - \Jcal^*$ is a size function. The $L$-Lipschitz continuity of $\nabla \Jcal(z)$ implies that $\norm{\nabla^2\Jcal(z)} \le L$ \cite[Lemma 1.2.2]{nesterov2013introductory}. We choose $\Jcal - \Jcal^*$ as a candidate Lyapunov function, and applying the infinitesimal generator $\Lcal$ to it yields:
    \begin{align}\label{eq: LJoperator}
    \begin{split}
        \Lcal[\Jcal](z(t), \Sigma(t)) &= -\norm{\nabla \Jcal(z(t))}^2 + \frac{1}{2}\innprod{G(z(t))\Sigma(t)}{\nabla^2 \Jcal(z(t)) G(z(t))\Sigma(t)} \\
        &\le -\mu(\Jcal(z(t)) - \Jcal^*)^2 + \frac{1}{2}LK_G^2\norm{\Sigma(t)\Sigma(t)^\top}
    \end{split}
    \end{align}
    where the inequality follows from a suitable variant of the PL conditions and Lemma \ref{lm:traceIneq}. When $\mu$ is a $\Kcal_\infty$-function, $\Jcal - \Jcal^*$ qualifies as an NSS-Lyapunov function as defined in Definition \ref{def: nss_lf},  which, by Theorem \ref{thm: NSS}, implies that the gradient dynamics \eqref{eq: gradientDynamics} is NSS. 

    When $\mu$ is a $\Kcal$-function, $\Jcal - \Jcal^*$ is a scISS-Lyapunov function as defined in Definition \ref{def: scnss-lf}, and gradient dynamics \eqref{eq: gradientDynamics} is scNSS by Theorem \ref{thm: scNSS}. 

     When $\mu$ is a $\mathcal{PD}$-function, $\Jcal - \Jcal^*$ is an iISS-Lyapunov function as defined in Definition \ref{def:integralnss_lf}, and gradient dynamics \eqref{eq: gradientDynamics} is iNSS by Theorem \ref{thm: iNSS}.
\end{proof}

\subsubsection{Robustness Analysis without Global Lipchitz Condition}
In the preceding discussion, the global Lipschitz continuity of $\nabla \Jcal(z)$ is required to establish the connection between various versions of NSS and the generalized PL conditions. However, this global assumption may limit practical applicability. To address this, we relax the global Lipschitz requirement by appropriately tuning the learning rate in the gradient dynamics. Specifically, we allow for a state-dependent learning rate $\eta(\Jcal(z))>0$ in the dynamics:
\begin{align}\label{eq: gradientwithTuning}
    \de z(t) = -\eta(\Jcal(z(t)))\nabla \Jcal(z(t)) \de t + G(z(t))\Sigma(t) \de B(t)
\end{align}

Since $\Jcal$ is twice continuously differentiable, its gradient is locally Lipschitz and its Hessian is locally bounded. To facilitate the robustness analysis, define the sublevel set
\begin{align}\label{eq:Zsublevelset}
\Zcal_h = \{z \in \Zcal|\, \Jcal(z) - \Jcal^* \le h \}.
\end{align}
Note that for any $z \in \Zcal$, we have $z \in \Zcal_{\Jcal(z)-\Jcal^*}$. Over the sublevel set, define
\begin{align}\label{eq:Hessensup}
\bar{L}(h) = \frac{1}{2}\max_{z \in \Zcal_h} \norm{\nabla^2 \Jcal(z)} \normF{G(z)}^2
\end{align}
which is a continuous and nondecreasing function of $h$. Then, the function 
\begin{align}
\tilde{L}(h) = \bar{L}(h) - \bar{L}(0)
\end{align} 
can be assumed to be of $\Kcal_\infty$, since otherwise it can be modify by adding an identity function, ensuring that $\tilde{L}(h)+h \in \Kcal_\infty$.

\begin{theorem}\label{thm:NSSwithoutLipchitz}
    Suppose $\Jcal(z)$ satisfies the $\Kcal$-PL condition. Then,  
    \begin{enumerate}
        \item if $\lim_{h \to \infty } \frac{\tilde{L}(h)}{\eta(h)\mu(h)^2} = d_2>0$, the overdamped Langevin diffusion in \eqref{eq: gradientwithTuning} is scNSS;
        \item if $\lim_{h \to \infty } \frac{\tilde{L}(h)}{\eta(h)\mu(h)^2} = 0$, the  overdamped Langevin diffusion in \eqref{eq: gradientwithTuning} is NSS.
    \end{enumerate}
\end{theorem}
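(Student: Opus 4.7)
The plan is to show that $\Vcal(z) = \Jcal(z)-\Jcal^*$ is a scNSS--Lyapunov function in Case 1 and an NSS--Lyapunov function in Case 2; the conclusions then follow directly from Theorems \ref{thm: scNSS} and \ref{thm: NSS}. Assumption \ref{ass: regularity} already guarantees that $\Vcal$ is a size function on $\Zcal$, so the entire work is in the dissipation estimate. The starting point is a direct computation of the generator along \eqref{eq: gradientwithTuning}: writing $h := \Jcal(z)-\Jcal^*$, bounding $\norm{\nabla \Jcal(z)}^2$ below by $\mu(h)^2$ via the $\Kcal$-PL condition and bounding the diffusion quadratic form above by $\bar{L}(h)\norm{\Sigma\Sigma^\top}$ via Lemma \ref{lm:traceIneq} and the definition \eqref{eq:Hessensup} yields $\Lcal[\Vcal](z,\Sigma) \le -\eta(h)\mu(h)^2 + \bar{L}(0)\norm{\Sigma\Sigma^\top} + \tilde{L}(h)\norm{\Sigma\Sigma^\top}$. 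The whole game is to absorb the state-dependent cross term $\tilde{L}(h)\norm{\Sigma\Sigma^\top}$ into the dissipation $\eta(h)\mu(h)^2$, and the two cases differ only in how much noise the absorption can tolerate.

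For Case 1, I would set $d := 1/(4 d_2)$. The limit assumption supplies $h_0 > 0$ with $\tilde{L}(h) \le 2 d_2\,\eta(h)\mu(h)^2$ for every $h \ge h_0$, so whenever $\norm{\Sigma\Sigma^\top} < d$ one has $\tilde{L}(h)\norm{\Sigma\Sigma^\top} \le \tfrac12\eta(h)\mu(h)^2$ on $\{h \ge h_0\}$. On the bounded regime $\{h < h_0\}$, continuity of $\tilde{L}$ together with $\tilde{L}(0)=0$ furnishes $M := \max_{s\in[0,h_0]}\tilde{L}(s)$, so $\tilde{L}(h)\norm{\Sigma\Sigma^\top} \le M\norm{\Sigma\Sigma^\top}$ there. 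Combining the two regimes gives $\Lcal[\Vcal] \le -\tfrac12\eta(h)\mu(h)^2 + (\bar{L}(0)+M)\norm{\Sigma\Sigma^\top}$ for every $z$ and every $\Sigma$ with $\norm{\Sigma\Sigma^\top} < d$. Taking $\alpha(h):=\tfrac12\eta(h)\mu(h)^2$ (replacing it by a $\Kcal$ minorant if strict monotonicity fails) and $\gamma(r):=(\bar{L}(0)+M)\,r \in \Kcal_{[0,d)}$ exhibits $\Vcal$ as a scNSS--Lyapunov function, and Theorem \ref{thm: scNSS} delivers scNSS.

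For Case 2, the absorption must work for arbitrary $T := \norm{\Sigma\Sigma^\top}$, so the threshold becomes noise-dependent. Because the ratio $\tilde{L}/(\eta\mu^2)$ tends to zero, for every $T > 0$ there is $h_T$ with $\tilde{L}(h) \le \eta(h)\mu(h)^2/(2T)$ on $\{h \ge h_T\}$; set $M_T := \max_{s\in[0,h_T]}\tilde{L}(s)$. The same splitting yields $\Lcal[\Vcal] \le -\tfrac12\eta(h)\mu(h)^2 + (\bar{L}(0)+M_T)\,T$ for every $\Sigma$, and Theorem \ref{thm: NSS} applies once $\gamma(T):=(\bar{L}(0)+M_T)\,T$ is shown to be (dominated by) a $\Kcal$-function. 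The main technical obstacle I anticipate is precisely this last verification: one needs $M_T \to 0$ as $T \to 0$ and enough monotonicity in $T$ to majorize $\gamma$ by a continuous, strictly increasing function that vanishes at the origin. These properties should flow from continuity of $\tilde{L}$, the canonical monotone choice $h_T = \inf\{h \ge 0 \,:\, \tilde{L}(s) \le \eta(s)\mu(s)^2/(2T) \text{ for all } s \ge h\}$, and $\tilde{L}(0)=0$, but the argument is delicate and will be the technical heart of the proof.
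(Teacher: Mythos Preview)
Your proposal is correct and follows the same overall strategy as the paper: take $\Vcal=\Jcal-\Jcal^*$ as the Lyapunov candidate, compute the generator to obtain $\Lcal[\Vcal]\le -\eta(h)\mu(h)^2+(\bar L(0)+\tilde L(h))\norm{\Sigma\Sigma^\top}$, and then absorb the state-dependent cross term $\tilde L(h)\norm{\Sigma\Sigma^\top}$ into the dissipation. The only real difference is the absorption device. You split into the two regimes $\{h\ge h_0\}$ and $\{h<h_0\}$ (respectively $\{h\ge h_T\}$, $\{h<h_T\}$), while the paper works with the monotone envelope $m_1(h)=\inf_{r\ge h}\eta(r)\mu(r)^2/\tilde L(r)$, picks a $\Kcal$ (resp.\ $\Kcal_\infty$) minorant $m_2\le m_1$, sets $m_3=\tfrac12 m_2\circ\tilde L^{-1}$, and uses the comparison-function splitting $\tilde L(h)\norm{\Sigma\Sigma^\top}\le m_3^{-1}(\norm{\Sigma\Sigma^\top})\norm{\Sigma\Sigma^\top}+\tilde L(h)m_3(\tilde L(h))$ to arrive at $\Lcal[\Vcal]\le -\tfrac12\eta(h)\mu(h)^2+(\bar L(0)+m_3^{-1}(\norm{\Sigma\Sigma^\top}))\norm{\Sigma\Sigma^\top}$. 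For Case~1 your route is arguably cleaner, since it yields a linear gain $\gamma(r)=(\bar L(0)+M)r$. For Case~2 the paper's route has the edge: the gain $\rho(r)=(\bar L(0)+m_3^{-1}(r))r$ is manifestly $\Kcal_\infty$ because $m_3\in\Kcal_\infty$, so no monotonization step is needed---precisely the ``technical heart'' you flagged is handled automatically by building the comparison functions from $m_1$ at the outset. Your sketch of that monotonization (choose $h_T$ canonically so it is nondecreasing in $T$, use $\tilde L(0)=0$ and continuity) is sound; note that $M_T$ need not tend to $0$, only remain bounded as $T\to0$, since the factor $T$ already forces $\gamma(T)\to0$.
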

\begin{proof}
In the proof, we assume without loss of generality that $\Jcal^*=0$ and omit the time index $t$ for simplicity of notation. From \eqref{eq: LJoperator} and \eqref{eq:Hessensup}, it follows that:
\begin{align}\label{eq:LJwithStep}
\begin{split}
    \Lcal[\Jcal](z, \Sigma) &\le - \eta(\Jcal(z))\mu(\Jcal(z))^2 + \frac{1}{2}\norm{\nabla \Jcal(z)} \normF{G(z)}^2 \norm{\Sigma\Sigma^\top} \\
    &\le - \eta(\Jcal(z))\mu(\Jcal(z))^2 +  (\bar{L}(0) + \tilde{L}(\Jcal(z))) \norm{\Sigma\Sigma^\top}.
\end{split}
\end{align}    

Define
\begin{align}\label{eq:m0Define}
    m_1(h) = \inf_{r \ge h} \frac{\eta(r) \mu(r)^2}{ \tilde{L}(r)} \le \frac{ \eta(h) \mu(h)^2}{\tilde{L}(h)}
\end{align}
which is continuous and nondecreasing. If $\lim_{h \to \infty } \frac{\tilde{L}(h)}{\eta(h)\mu(h)^2} = d_2$, then $\lim_{h \to \infty } m_1(h) = {1}/{d_2}$. Consequently, there exists a $\Kcal$-function $m_2$ with range $[0,1/d_2)$ such that $m_1(h) \ge m_2(h)$ for all $h \ge 0$. Define $m_3 = \frac{1}{2}m_2\circ \tilde{L}^{-1}$ which is of class $\Kcal$ and with range $[0, {1}/{(2d_2)})$. For any signal $\Sigma$ with $\normSup{\Sigma \Sigma^\top} < 1/(2d_2)$, it follows from \eqref{eq:LJwithStep} that
\begin{align}\label{eq:LJwithStep2}
\begin{split}
    &\Lcal[\Jcal](z(t), \Sigma(t)) \le  - \eta(\Jcal(z))\mu(\Jcal(z))^2 + (\bar{L}(0) + m_3^{-1}(\norm{\Sigma \Sigma^\top})) \norm{\Sigma \Sigma^\top} + \tilde{L}(\Jcal(z))m_3( \tilde{L}(\Jcal(z))).
\end{split}
\end{align}
Since $\tilde{L}(h) m_2(h) \le \eta(h) \mu(h)^2$ for any $h\ge0$, \eqref{eq:LJwithStep2} can be rewritten as
\begin{align}
     \Lcal[\Jcal](z(t), \Sigma(t)) &\le - \frac{1}{2} \eta(\Jcal(z))\mu(\Jcal(z))^2 + (\bar{L}(0) + m_3^{-1}(\norm{\Sigma \Sigma^\top})) \norm{\Sigma \Sigma^\top}.
\end{align}
By \eqref{eq:m0Define}, we have $\eta(h)\mu(h)^2 \ge m_2(h)\tilde{L}(h),\, \forall h \ge 0 $. Since $m_2$ is of class $\Kcal$ and $\tilde{L}$ is of class $\Kcal_\infty$, it follows that $\eta(h)\mu(h)^2$ is lower bounded by a $\Kcal$-function. In addition, since $m_3$ is a $\Kcal$-function with range $[0,1/(2d_2))$, the function $\rho(h) = (\bar{L}(0) + m_3^{-1}(h))h$ belongs to $\Kcal_{[0, 1/(2d_2))}$. Therefore, according to Definition~\ref{def: scnss-lf}, $\Jcal$ is a scNSS-Lyapunov function. Consequently, by Theorem~\ref{thm: scNSS}, the stochastic system \eqref{eq: gradientwithTuning} is scNSS.

If $\lim_{h \to \infty } \frac{\tilde{L}(h)}{\eta(h)\mu(h)^2} = 0$, then $\lim_{h \to \infty } m_1(h) = \infty$. Hence, $m_1$ can be lower bounded by a $\Kcal_\infty$-function $m_2$, and $m_3 = \frac{1}{2}m_2\circ \tilde{L}^{-1}$ is of class $\Kcal_\infty$. Since $\eta(h)\mu(h)^2 \ge m_2(h)\tilde{L}(h)$, where $m_2(\cdot)\tilde{L}(\cdot)$ is a $\Kcal_\infty$-function, and $\rho(h) = (\bar{L}(0) + m_3^{-1}(h))h$ is of class $\Kcal_\infty$, according to Definition~\ref{def: nss_lf}, $\Jcal$ is a NSS-Lyapunov function. Consequently, by Theorem~\ref{thm: NSS}, the stochastic system \eqref{eq: gradientwithTuning} is NSS.
\end{proof}

As a concrete example, one may choose $\eta(h) = \tilde{L}(h)$ such that $\lim_{h \to \infty } \frac{\tilde{L}(h)}{\eta(h)\mu(h)^2} = \lim_{h \to \infty } \frac{1}{\mu(h)^2} = d_2$. Then, the stochastic gradient dynamics is scNSS. If $\eta(h) = h\tilde{L}(h)$, $\lim_{h \to \infty } \frac{\tilde{L}(h)}{\eta(h)\mu(h)^2} = 0$ and the stochastic gradient dynamics is NSS.

\subsection{Robustness of Underdamped Langevin Diffusion }
The heavy ball optimization algorithm was derived from the physical analogy of a ball moving in the potential field $\Jcal(z)$ under the influence of friction~\cite{Polyak1964HeavyBall, polyak2017lyapunov}. It modifies standard gradient descent by incorporating a momentum term that accelerates convergence. The heavy ball method can be viewed as the Euler discretization of the following second-order differential equation:
\begin{align}\label{eq:heavyballFlow}
\begin{split}
    \de{z}(t) &= v(t) \de t, \\
    \de{v}(t) &= -\eta \nabla \Jcal(z(t))\de t - c v(t)\de t,
\end{split}
\end{align}
where $\eta>0$ is a learning-rate parameter and $c>0$ is the damping coefficient. It is noted that the gradient flow in~\eqref{eq:heavyballFlow} can be obtained as a singular perturbation of~\eqref{eq:gradientFLow} in the limit $c \to \infty$. To see this, let $\epsilon = \tfrac{1}{c}$ and rewrite~\eqref{eq:heavyballFlow} as
\begin{align}
\begin{split}
    \dot{z}(t) &= v(t), \\
    \epsilon \dot{v}(t) &= - \tfrac{\eta}{c}\,\nabla \Jcal(z(t)) - v(t).    
\end{split}
\end{align}
Here, $v(t)$ evolves on the fast time scale and rapidly converges to its quasi-steady state $\bar{v}(z) = -\frac{\eta}{c}\nabla \Jcal(z)$. By Tikhonov's theorem~\cite[Theorem~11.1]{book_Khalil}, we obtain 
\[
    z(t) - \bar{z}(t) = O(\epsilon),
\] 
where $\bar{z}(t)$ satisfies the reduced (slow) dynamics
\begin{align}
    \dot{\bar{z}}(t) = -\tfrac{\eta}{c} \nabla \Jcal(\bar{z}(t)).
\end{align}
This is precisely the standard first-order gradient flow in \eqref{eq:gradientFLow} with a learning rate $\eta_1 = \eta/c$.

As noted by~\cite{book_Polyak}, many disturbances affecting gradient descent methods can be modeled as random noise. In such cases, the gradient dynamics in \eqref{eq:heavyballFlow} can be represented by a stochastic differential equation, often referred to as the \emph{underdamped Langevin diffusion}:
\begin{align}\label{eq:HeavyballFlowStochastic}
\begin{split}
    \de z(t) &= v(t) \de t\\
    \de v(t) &= -\eta \nabla \Jcal(z(t)) \de t - cv(t) \de t  + G(z(t), v(t)) \Sigma(t)\de B(t).
\end{split}
\end{align}

\subsubsection{Robustness Analysis with Global Lipchitz Condition}
If the objective has a globally Lipschitz-continuous gradient, we can connect generalized PL conditions to NSS for the underdamped Langevin diffusion.

\begin{theorem}\label{thm:DampedLangevinLipchitz}
    Let the objective function $\Jcal$ have a global $L$-Lipschitz continuous gradient, and let $G(z,v)$ be globally bounded, i.e., $\normF{G} \le K_G$. Then:
    \begin{enumerate}
        \item If the function $\Jcal$ satisfies the $\Kcal_\infty$-PL condition, the underdamped Langevin diffusion \eqref{eq:HeavyballFlowStochastic} is NSS;
        \item If the function $\Jcal$ satisfies the $\Kcal$-PL condition, the underdamped Langevin diffusion \eqref{eq:HeavyballFlowStochastic} is scNSS and iNSS.
        \item If the function $\Jcal$ satisfies the $\mathcal{PD}$-PL condition, the underdamped Langevin diffusion \eqref{eq:HeavyballFlowStochastic} is iNSS.
    \end{enumerate}
\end{theorem}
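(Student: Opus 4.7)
The plan is to mirror the overdamped proof of Theorem~\ref{thm:overLangevinLipchitz}, but with a Lyapunov function that captures both the position $z$ and the momentum $v$. The candidate $\Jcal(z)-\Jcal^*+\tfrac{1}{2}\|v\|^2$ alone does not suffice: its generator only yields a dissipation term $-c\|v\|^2$, giving no direct decay in $z$. I would therefore introduce the modified (crossed) Lyapunov function
\begin{align*}
    \Vcal(z,v) \;=\; (\eta + bc)\,(\Jcal(z)-\Jcal^*) \;+\; \tfrac{1}{2}\|v\|^2 \;+\; b\,\innprod{v}{\nabla \Jcal(z)},
\end{align*}
where $b>0$ is a small tuning constant. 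The coefficient $\eta+bc$ in front of $\Jcal-\Jcal^*$ is chosen so that, when one computes $\Lcal[\Vcal]$, the indefinite cross term $(a-\eta-bc)\innprod{v}{\nabla\Jcal}$ cancels exactly.

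First I would verify that $\Vcal$ is a size function on $\R^{n}\times\R^{n}$. Using Young's inequality on $b\innprod{v}{\nabla\Jcal}$ together with the descent lemma $\|\nabla \Jcal(z)\|^2\le 2L(\Jcal(z)-\Jcal^*)$ (a consequence of the global $L$-Lipschitz gradient and Assumption~\ref{ass: regularity}), I can produce two-sided bounds of the form
\begin{align*}
    c_1(\Jcal(z)-\Jcal^*)+c_2\|v\|^2 \;\le\; \Vcal(z,v) \;\le\; C_1(\Jcal(z)-\Jcal^*)+C_2\|v\|^2,
\end{align*}
with positive constants $c_i,C_i$ whenever $b$ is sufficiently small. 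Coercivity of $\Jcal$ then promotes $\Vcal$ to a size function for $(\R^{2n},(z^*,0))$.

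Next I would compute the infinitesimal generator along \eqref{eq:HeavyballFlowStochastic}. Using $\nabla_v\Vcal=v+b\nabla\Jcal$, $\nabla_z\Vcal=(\eta+bc)\nabla\Jcal+b\nabla^2\Jcal\,v$, and $\nabla^2_{vv}\Vcal=I$, together with Lemma~\ref{lm:traceIneq}, the cross-term cancellation, and the Hessian bound $\|\nabla^2\Jcal\|\le L$, I expect to land on an estimate
\begin{align*}
    \Lcal[\Vcal](z,v,\Sigma) \;\le\; -(c-bL)\|v\|^2 \;-\; b\eta\,\|\nabla\Jcal(z)\|^2 \;+\; \tfrac{1}{2}K_G^{2}\,\norm{\Sigma\Sigma^\top},
\end{align*}
valid as soon as $b<c/L$. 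Invoking the appropriate PL inequality then replaces $\|\nabla\Jcal\|^2$ by $\mu(\Jcal-\Jcal^*)^2$, giving a dissipation rate that is negative in both $\|v\|^2$ and $\mu(\Jcal-\Jcal^*)^2$.

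The main obstacle, and the point where each of the three cases is distinguished, is converting this two-variable dissipation into a scalar bound of the form $-\tilde\alpha(\Vcal)+\gamma(\norm{\Sigma\Sigma^\top})$. I would split on which of $\Jcal-\Jcal^*$ or $\|v\|^2$ dominates in the upper bound $\Vcal\le C_1(\Jcal-\Jcal^*)+C_2\|v\|^2$: either $\Jcal-\Jcal^*\ge \Vcal/(2C_1)$ (so the $\mu$-term dominates) or $\|v\|^2\ge \Vcal/(2C_2)$ (so the $c-bL$ term dominates). The resulting comparison function
\begin{align*}
    \tilde\alpha(s) \;=\; \min\!\left\{\, b\eta\,\mu\!\left(\tfrac{s}{2C_1}\right)^{2},\; \tfrac{c-bL}{2C_2}\,s\,\right\}
\end{align*}
inherits its class from $\mu$: it is $\Kcal_\infty$ when $\mu\in\Kcal_\infty$, $\Kcal$ when $\mu\in\Kcal$, and $\mathcal{PD}$ when $\mu\in\mathcal{PD}$. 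Identifying $\gamma(r)=\tfrac{1}{2}K_G^{2}r\in\Kcal$, the three cases follow by applying Theorems \ref{thm: NSS}, \ref{thm: scNSS}, and \ref{thm: iNSS}, respectively; the iNSS conclusion in case (2) is obtained by viewing the $\Kcal$-function $\tilde\alpha$ as an element of $\mathcal{PD}$. The delicate tuning of the cross-coupling constant $b$ (small enough to preserve positive definiteness of $\Vcal$ and of the dissipation, yet large enough to produce decay in $z$) is the crux of the argument.
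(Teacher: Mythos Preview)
Your approach is essentially the paper's: the same cross-coupled Lyapunov function $\Vcal = a(\Jcal-\Jcal^*) + b\innprod{v}{\nabla\Jcal} + \tfrac{\lambda}{2}\|v\|^2$ with the coefficients tuned so that the indefinite $\innprod{\nabla\Jcal}{v}$ terms cancel in $\Lcal[\Vcal]$, the same use of the descent lemma to certify two-sided bounds and hence that $\Vcal$ is a size function, and the same dissipation estimate $-c_1\mu(\Jcal-\Jcal^*)^2 - c_2\|v\|^2 + \tfrac{1}{2}K_G^2\|\Sigma\Sigma^\top\|$. The only notable difference is how the two-variable dissipation is collapsed into a scalar $-\tilde\alpha(\Vcal)$: the paper uses the weak triangle inequality (Lemma~\ref{lm:weakTriangle}) in the $\Kcal$/$\Kcal_\infty$ cases and a subadditive envelope $\mu_2(r)=\min_{s\in[0,r]}\{\mu_1(s)+\mu_1(r-s)\}$ in the $\mathcal{PD}$ case, whereas you use a dominance case split.

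There is one genuine gap in your $\mathcal{PD}$ case. Your case split yields $J:=\Jcal-\Jcal^* \ge \Vcal/(2C_1)$ in the first branch, and you then need $\mu(J)^2 \ge \mu(\Vcal/(2C_1))^2$ to conclude that the dissipation dominates $\tilde\alpha(\Vcal)$. This inference requires $\mu$ to be nondecreasing, which is precisely what is \emph{not} assumed when $\mu\in\mathcal{PD}$. So $\tilde\alpha$ as you defined it is indeed positive definite, but the inequality $\Lcal[\Vcal]\le -\tilde\alpha(\Vcal)+\gamma$ does not follow from your split. The fix is either the paper's subadditive-envelope trick, or to exploit the \emph{two-sided} sandwich $c(J+\|v\|^2)\le \Vcal \le C(J+\|v\|^2)$ you already derived and set $\alpha(r)=\min_{s\in[r/C,\,r/c]}\{c_1\mu_1(s)\}$ with $\mu_1(h)=\min\{\mu(h)^2,\kappa h\}$; compactness of $[r/C,r/c]$ and positive definiteness of $\mu_1$ make $\alpha\in\mathcal{PD}$, and then Theorem~\ref{thm: iNSS} applies.
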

\begin{proof}
We use the mixed Lyapunov candidate
\begin{align}\label{eq:LyapunovHeavyBall}
    \Vcal_2(z,v) = \Jcal(z) - \Jcal^* +  \lambda_1\innprod{v}{\nabla \Jcal(z)} + \frac{\lambda_2}{2}\innprod{v}{v}
\end{align}
where the weights $\lambda_1$ and $\lambda_2$ are chosen as
\begin{align}
\begin{split}
    &0 < \lambda_1 \le \min\bigg\{\frac{1}{2\eta + c}, \frac{1}{2L}, \frac{c}{2(\eta L + c^2)} \bigg\}, \\
    &\lambda_2 = \frac{1-\lambda_1 c}{\eta}.
\end{split}
\end{align}
The $L$-smoothness of $\nabla \Jcal$ gives 
\begin{align}\label{eq:Lsmoothness}
\Jcal(z) - \Jcal^* \ge \frac{1}{2L}\innprod{\nabla \Jcal(z)}{\nabla \Jcal(z)}.
\end{align}
Plugging \eqref{eq:Lsmoothness} into \eqref{eq:LyapunovHeavyBall} and using Young's inequality, we can ensure $\Vcal_2$ is a size function over $\Zcal \times \R^n$ by
\begin{align}
\begin{split}
    \Vcal_2(z,v) &\ge \frac{1}{2}(\Jcal(z) - \Jcal^*) + \frac{1}{4L} \innprod{\nabla \Jcal(z)}{\nabla \Jcal(z)} + \lambda_1\innprod{v}{\nabla \Jcal(z)}  + \frac{\lambda_2}{2}\innprod{v}{v} \\
    &\ge \frac{1}{2}(\Jcal(z) - \Jcal^*) + \frac{\lambda_2}{4}\innprod{v}{v}. 
\end{split}
\end{align}

Along the trajectories of \eqref{eq:HeavyballFlowStochastic}, and under the restrictions on the weights $\lambda_1$ and $\lambda_2$, the generator $\Lcal$ applied to $\Vcal_2$ satisfies
\begin{align}
\begin{split}
    \Lcal[\Vcal_2](z,v) &= \innprod{\nabla \Jcal (z)}{ v}  + \lambda_1\innprod{v}{\nabla^2 \Jcal(z) v} - \lambda_1 \eta \innprod{\nabla \Jcal(z)}{\nabla \Jcal(z)}   \\
    &\quad -\lambda_1 c \innprod{\nabla \Jcal(z)}{v}  - \lambda_2 \eta \innprod{v}{\nabla \Jcal(z)}  - \lambda_2 c\innprod{v}{v} + \frac{\lambda_2}{2}  \innprod{G\Sigma}{G\Sigma} \\
    &\le - \lambda_1 \eta \innprod{\nabla \Jcal(z)}{\nabla \Jcal(z)}  - \frac{c}{2\eta} \innprod{v}{v} + \frac{\lambda_2}{2}K_G^2 \norm{\Sigma \Sigma^\top} \\
    &\le - \lambda_1 \eta \mu (\Jcal(z) - \Jcal^*)^2 - \lambda_1 \eta \frac{c}{2\lambda_1\eta^2} \innprod{v}{v} + \frac{\lambda_2}{2}K_G^2 \norm{\Sigma \Sigma^\top},
\end{split}
\end{align}
where the last inequality follows from the generalized PL conditions. Define
\begin{align}
    \mu_1(h) = \min \Big \{\mu(h)^2, \frac{c}{2\lambda_1\eta^2}h \Big\},\, \forall h\ge0 .
\end{align}
Then, $\mu_1 \in \Kcal_\infty$ if $\mu \in \Kcal_\infty$, $\mu_1 \in \Kcal$ if $\mu \in \Kcal$, and $\mu_1 \in \PDcal$ if $\mu \in \PDcal$. 

When $\mu \in \Kcal$, applying the weak triangle inequality (Lemma~\ref{lm:weakTriangle}) yields
\begin{subequations}
\begin{align}
    \Lcal[\Vcal_2](z,v) &\le - \lambda_1 \eta \mu_1 (\Jcal(z) - \Jcal^*) - \lambda_1 \eta \mu_1(\innprod{v}{v}) + \frac{\lambda_2}{2}K_G^2 \norm{\Sigma \Sigma^\top} \label{eq:LV2mu1}\\
    &\le - \lambda_1 \eta \mu_1\Big(\frac{1}{2}(\Jcal(z) - \Jcal^*+\innprod{v}{v})\Big) + \frac{\lambda_2}{2}K_G^2 \norm{\Sigma \Sigma^\top}.
\end{align}    
\end{subequations}
Moreover, since
\begin{align}\label{eq:V2byJ+v}
    \Vcal_2(z,v) \le (1+\lambda_1L)(\Jcal(z) - \Jcal^*) + \frac{\lambda_1 + \lambda_2}{2}\innprod{v}{v}\le \lambda_3 (\Jcal(z) - \Jcal^* + \innprod{v}{v})
\end{align}
with $\lambda_3 = \max\{1+\lambda_1L,  (\lambda_1 + \lambda_2)/2\}$, it follows that
\begin{align}
    \Lcal[\Vcal_2](z,v) \le - \lambda_1 \eta \mu_1\Big(\frac{1}{2\lambda_3}\Vcal_2(z,v)\Big) + \frac{\lambda_2}{2}K_G^2 \norm{\Sigma \Sigma^\top}.
\end{align}
Hence, when $\mu$ is of $\Kcal_\infty$, the function $\Vcal_2$ qualifies as an NSS-Lyapunov function, while if $\mu \in \Kcal$, it qualifies as a scNSS-Lyapunov function. Statements 1 and 2 then follow directly from Theorems~\ref{thm: NSS} and~\ref{thm: scNSS}, respectively.

When $\mu \in \PDcal$, define $\mu_2(r) = \min_{s \in [0,r]}\{\mu_1(s)+\mu_1(r-s) \}$, which is also a positive definite function. Moreover, $\mu_2$ satisfies the weak subadditivity property $\mu_2(r+s) \le \mu_1(r) + \mu_1(s)$. Therefore, combining \eqref{eq:LV2mu1} and \eqref{eq:V2byJ+v}, we obtain
\begin{align}
    \Lcal[\Vcal_2](z,v) \le - \lambda_1 \eta \mu_2\Big(\frac{1}{\lambda_3}\Vcal_2(z,v)\Big) + \frac{\lambda_2}{2}K_G^2 \norm{\Sigma \Sigma^\top}.
\end{align}
It follows that, when $\mu \in \PDcal$, the function $\Vcal_2$ serves as an iISS-Lyapunov function. By Theorem \ref{thm: iNSS}, Statement~3 is established.
\end{proof}

\subsubsection{Robustness without Global Lipchitz Condition}
In the previous subsection, we required a global Lipschitz condition on the gradient to establish the NSS properties of the underdamped Langevin diffusion. However, in many practical scenarios this global Lipschitz condition is difficult to satisfy, which highlights the need to develop NSS guarantees without relying on it. We address this challenge by properly tuning the learning rate $\eta$ and damping coefficient $c$.

To prepare the main results, we aim to establish a global upper bound on the gradient. Since $\Jcal$ is twice continuously differentiable, its gradient is locally Lipschitz and its Hessian is locally bounded. For $h \ge 0$, define 
\begin{align}
\begin{split}
    \bar{L}_2(h) &= \max_{z \in \Zcal_h}\norm{\nabla^2 \Jcal(z)}\\
    \tilde{L}_2(h) &= \bar{L}_2(h) - \bar{L}_2(0),
\end{split}
\end{align}
where $\Zcal_h$ is the sublevel set introduced in \eqref{eq:Zsublevelset}. By construction, $\tilde{L}_2$ is continuous, vanishes at zero and nondecreasing. 

\begin{lemma}\label{lm:gradientUpperBound}
The gradient of $\Jcal$ is upper bounded by
    \begin{align}\label{eq:gradientUpper}
        \norm{\nabla \Jcal(z)}^2 \le \varphi_1(\Jcal(z) - \Jcal^*),
    \end{align}
where $\varphi_1(h) = 2\bar{L}_2(h)h + \frac{5}{2}h$ is a $\Kcal_\infty$-function. If $\varphi_1$ is not continuously differentiable, it can be smoothed by defining
\begin{align}\label{eq:varphiDef}
    \varphi_2(h) = \frac{1}{\delta} \int_{h}^{h+\delta} \Big(2\bar{L}_2(s)s + \frac{5}{2}s \Big) \de s \ge \varphi_1(h), \, \forall h \ge 0.
\end{align}
with any $\delta > 0$. In addition, $\varphi_2$ is a $\Kcal_\infty$-function, continuously differentiable, and satisfies $\varphi_2'(h) \ge 2 \bar{L}_2(h) + 5/2$.
\end{lemma}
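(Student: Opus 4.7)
The plan is to establish \eqref{eq:gradientUpper} by a descent-lemma argument localized to the sublevel set $\Zcal_h$, where the Hessian is controlled by $\bar{L}_2(h)$, and then to read off the properties of $\varphi_2$ by direct differentiation under the integral. Fix $z \in \Zcal$ with $h = \Jcal(z) - \Jcal^*$ and $g = \nabla \Jcal(z)$, and consider the ray $z(t) = z - tg$. First I would apply Taylor's theorem along this ray to obtain
$\Jcal(z(t)) \le \Jcal(z) - t\norm{g}^2 + \tfrac{t^2}{2}\bar{L}_2(h)\norm{g}^2$,
valid as long as $z(s) \in \Zcal_h$ for all $s \in [0,t]$. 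A continuation argument then shows that the exit time from $\Zcal_h$ along this ray is at least $2/\bar{L}_2(h)$: any earlier exit would contradict the strict decrease that the Taylor estimate itself guarantees on $(0, 2/\bar{L}_2(h))$. Evaluating at $t = 1/\bar{L}_2(h)$ and using $\Jcal(z(t)) \ge \Jcal^*$ produces the descent bound $\norm{g}^2 \le 2\bar{L}_2(h) h$, which is in turn majorized by $\varphi_1(h)$.

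Next, I would check that $\varphi_1 \in \Kcal_\infty$. Since $\bar{L}_2$ is nondecreasing, so is $h \mapsto \bar{L}_2(h)h$, and the added $\tfrac{5}{2}h$ contributes strict monotonicity and coercivity, with $\varphi_1(0)=0$. For the smoothed function $\varphi_2$ defined in \eqref{eq:varphiDef}, continuity of the integrand $\psi(s) = 2\bar{L}_2(s)s + \tfrac{5}{2}s$ yields $\varphi_2 \in \mathcal{C}^1$ with
$\varphi_2'(h) = \delta^{-1}\bigl[\psi(h+\delta) - \psi(h)\bigr]$;
expanding this difference and using $\bar{L}_2(h+\delta) \ge \bar{L}_2(h)$ lower-bounds the derivative by $2\bar{L}_2(h) + \tfrac{5}{2}$, while the inequality $\varphi_2 \ge \varphi_1$ follows because $\psi$ is nondecreasing, so the mean of $\psi$ over $[h, h+\delta]$ is at least $\psi(h)$.

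The main obstacle is the implicit circularity in the descent step: the Taylor estimate invokes the Hessian bound $\bar{L}_2(h)$, which requires $z(s) \in \Zcal_h$, yet staying in $\Zcal_h$ is itself a consequence of the Taylor estimate. This is resolved by the continuation/contradiction argument on the exit time $\alpha^* = \sup\{\alpha > 0 : z(s) \in \Zcal_h,\ \forall s \in [0,\alpha]\}$: if $\alpha^* < 2/\bar{L}_2(h)$, then by continuity $\Jcal(z(\alpha^*)) = \Jcal(z)$, yet the Taylor bound forces $\Jcal(z(\alpha^*)) < \Jcal(z)$, a contradiction. A minor edge case is $\bar{L}_2(h) = 0$, which must force $g = 0$ (otherwise $\Jcal$ would be affine with nonzero gradient on $\Zcal_h$, conflicting with $\Jcal \ge \Jcal^*$), so \eqref{eq:gradientUpper} holds trivially there.
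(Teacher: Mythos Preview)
Your proposal is correct and follows essentially the same approach as the paper: both arguments fix $h=\Jcal(z)-\Jcal^*$, run the descent/Taylor estimate along the ray $z-t\nabla\Jcal(z)$ inside $\Zcal_h$, use a contradiction on the exit time to guarantee $t=1/\bar L_2(h)$ is admissible, and then differentiate the averaged integral to obtain the stated lower bound on $\varphi_2'$. Your explicit handling of the continuation bootstrap and of the edge case $\bar L_2(h)=0$ is a little more careful than the paper's write-up, but the method is the same.
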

\begin{proof}
According to \cite[Lemma 1.2.2]{nesterov2013introductory}, $\Jcal$ is $\bar{L}_2(h)$–smooth on $\Zcal_h$. Now fix $z \in \Zcal_h$ and define 
\begin{align}
    \kappa(z,s) = \Jcal(z - s\nabla \Jcal(z)).
\end{align}
Its derivative at $s=0$ satisfies
\begin{align}
    \frac{\partial \kappa(z,s)}{\partial s}\bigg\lvert_{s=0} = -\innprod{\nabla \Jcal(z)}{\nabla \Jcal(z)} \le 0.
\end{align}
Since $\Zcal_h$ is compact \cite[Lemma 2.4]{Sontag2022}, there exists $\bar{s} > 0$ such that $z - \bar{s}\nabla \Jcal(z)$ first lies on the boundary of $\Zcal_h$, i.e. $\kappa(z,\bar{s}) = h$. 

We will show that $\bar{s} \ge \frac{2}{\bar{L}_2(h)}$ by contradiction. Suppose $\bar{s} < \frac{2}{\bar{L}_2(h)}$. Since $z - {s}\nabla \Jcal(z) \in \Zcal_h$ for all $s \in [0,\bar{s}]$, by \cite[Lemma 1.2.3]{nesterov2013introductory},  we have
\begin{align}
    h = \Jcal(z - \bar{s}\nabla \Jcal(z)) \le \Jcal(z) + \bigg(-\bar{s}  + \frac{\bar{L}_2(h) \bar{s}^2}{2}\bigg)\norm{\nabla \Jcal(z)}^2.
\end{align}
If $\bar{s} < {2}/{\bar{L}_2(h)}$, then $\big(-\bar{s} + {\bar{L}_2(h)\bar{s}^2}/{2}\big) < 0$, so $h < \Jcal(z)$ contradicting $\Jcal(z) \le h$. Hence, $\bar{s} \ge \tfrac{2}{\bar{L}_2(h)}$.

Take $s = \tfrac{1}{\bar{L}_2(h)}$. Since $s \le \bar{s}$, we have $z - s\nabla \Jcal(z) \in \Zcal_h$. Using \cite[Lemma 1.2.3]{nesterov2013introductory} again, 
\begin{align}\label{eq:LipchitzCondition2}
    \Jcal^* \le \Jcal\bigg(z - \frac{1}{\bar{L}_2(h)}\nabla \Jcal(z)\bigg) \le \Jcal(z) - \frac{1}{2\bar{L}_2(h)}\norm{\nabla \Jcal(z)}^2.
\end{align}
Since $z \in \Zcal_{\Jcal(z)-\Jcal^*}$, we can set $h = \Jcal(z) - \Jcal^*$, obtaining 
\begin{align}
    \norm{\nabla \Jcal(z)}^2 \le 2\bar{L}_2(\Jcal(z)-\Jcal^*)(\Jcal(z)-\Jcal^*) \le \varphi_1(\Jcal(z)-\Jcal^*),
\end{align}
where $\varphi_1(h) = 2\bar{L}_2(h)h + \frac{5}{2}h = 2\tilde{L}_2(h)h+2\bar{L}_2(0)h + \frac{5}{2}h$, and it is a $\Kcal_\infty$-function. Hence, inequality \eqref{eq:gradientUpper} holds. 

To ensure continuous differentiability, we smooth $\varphi_1$ by convolution and define $\varphi_2$ as \eqref{eq:varphiDef}. In addition, the derivative of $\varphi_2$ is
\begin{align}
\begin{split}
    \varphi_2'(h) &= \frac{1}{\delta}(\varphi_1(h+\delta) - \varphi_1(h)) = \frac{1}{\delta}(2\bar{L}_2(h+\delta)(h+\delta)  - 2\bar{L}_2(h)h) + \frac{5}{2} \\
    &\ge 2\bar{L}_2(h) +  \frac{5}{2},
\end{split}
\end{align}
where the last inequality follows from the monotonicity of $\bar{L}_2$. Thus, both \eqref{eq:gradientUpper} and the derivative condition are established, completing the proof. 
\end{proof}
The connections between the NSS properties of underdamped Langevin diffusion and the generalized PL conditions are stated below.
\begin{theorem}\label{thm:NSSunderdampedLangevin}
    Let $G(z,v)$ be globally bounded, i.e., $\normF{G(z,v)} \le K_G$ for all $z \in \Zcal$ and $v \in \R^n$, $c = \frac{1}{2} \norm{\nabla^2 \Jcal(z)}+\frac{1}{2}$ and $\eta = \frac{1}{2}(\varphi_2'(\Jcal(z) - \Jcal^*) - c)$. Then:
    \begin{enumerate}
        \item If the function $\Jcal$ satisfies the $\Kcal_\infty$-PL condition, the underdamped Langevin diffusion \eqref{eq:HeavyballFlowStochastic} is NSS;
        \item If the function $\Jcal$ satisfies the $\Kcal$-PL condition, the underdamped Langevin diffusion \eqref{eq:HeavyballFlowStochastic} is scNSS and iNSS.
        \item If the function $\Jcal$ satisfies the $\mathcal{PD}$-PL condition, the underdamped Langevin diffusion \eqref{eq:HeavyballFlowStochastic} is iNSS.
    \end{enumerate}
\end{theorem}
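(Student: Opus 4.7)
The plan is to mirror the proof of Theorem~\ref{thm:DampedLangevinLipchitz}, but to replace the bare suboptimality $\Jcal(z)-\Jcal^*$ in the mixed Lyapunov candidate by the smoothed function $\varphi_2(\Jcal(z)-\Jcal^*)$ from Lemma~\ref{lm:gradientUpperBound}, so that the state-dependent $\eta$ and $c$ can absorb the growth of $\nabla^2\Jcal$ on large sublevel sets. Concretely, I would take
\[
    \Vcal(z,v) \;=\; \varphi_2\!\bigl(\Jcal(z)-\Jcal^*\bigr) \;+\; \innprod{v}{\nabla \Jcal(z)} \;+\; \innprod{v}{v}.
\]
The coefficients $1$ and $1$ (in place of the free weights $\lambda_1,\lambda_2$ of Theorem~\ref{thm:DampedLangevinLipchitz}) are dictated by the two algebraic identities built into the prescribed gains, namely $2\eta+c = \varphi_2'(\Jcal-\Jcal^*)$ and $2c = \norm{\nabla^2\Jcal(z)}+1$.

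First I would verify that $\Vcal$ is a size function on $\Zcal\times\Rn$. Young's inequality on $\innprod{v}{\nabla\Jcal}$ together with the gradient estimate $\norm{\nabla\Jcal}^2\le \varphi_2(\Jcal-\Jcal^*)$ from Lemma~\ref{lm:gradientUpperBound} yields the sandwich
\[
    \tfrac{1}{2}\varphi_2(\Jcal-\Jcal^*) + \tfrac{1}{2}\innprod{v}{v} \;\le\; \Vcal(z,v) \;\le\; \tfrac{3}{2}\varphi_2(\Jcal-\Jcal^*) + \tfrac{3}{2}\innprod{v}{v},
\]
so $\Vcal$ is positive definite and coercive on $\Zcal\times\Rn$ (if a genuine second derivative of $\varphi_2$ is required for It\^o's formula, one further mollification of $\varphi_2$ supplies it without affecting the bounds above).

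Next I would apply the generator $\Lcal$ term by term. The $\innprod{\nabla\Jcal}{v}$ contributions, collected from differentiating $\varphi_2(\Jcal-\Jcal^*)$ along $\de z$ and from differentiating $\innprod{v}{\nabla\Jcal}+\innprod{v}{v}$ along the drift of $\de v$, carry a prefactor $\varphi_2'-c-2\eta$ that is \emph{identically zero} by the definition of $\eta$, so the cross term vanishes. The Hessian contribution $\innprod{v}{\nabla^2\Jcal\cdot v}$ combines with $-2c\innprod{v}{v}$ into $(\norm{\nabla^2\Jcal}-2c)\innprod{v}{v}=-\innprod{v}{v}$ by the definition of $c$. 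Combined with the It\^o correction $\normF{G\Sigma}^2\le K_G^2\norm{\Sigma\Sigma^\top}$ (Lemma~\ref{lm:traceIneq}), this gives
\[
    \Lcal[\Vcal](z,v,\Sigma) \;\le\; -\eta\,\norm{\nabla\Jcal(z)}^2 \;-\; \innprod{v}{v} \;+\; K_G^2\norm{\Sigma\Sigma^\top}.
\]
Since $\varphi_2'(h)\ge 2\norm{\nabla^2\Jcal}+\tfrac{5}{2}$ on $\Zcal_h$, direct substitution shows $\eta=\tfrac{1}{2}(\varphi_2'-c)\ge 1$; invoking the appropriate PL variant $\norm{\nabla\Jcal}^2\ge\mu(\Jcal-\Jcal^*)^2$ then upgrades the bound to
\[
    \Lcal[\Vcal](z,v,\Sigma) \;\le\; -\mu\!\bigl(\Jcal(z)-\Jcal^*\bigr)^{2} \;-\; \innprod{v}{v} \;+\; K_G^{2}\norm{\Sigma\Sigma^\top}.
\]

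The main obstacle, and the last step, is to convert this right-hand side into a dissipation of the form $-\tilde\alpha(\Vcal)+\gamma(\norm{\Sigma\Sigma^\top})$ with $\tilde\alpha$ in the correct comparison class, since $\varphi_2$ can grow much faster than $\mu$. The remedy is to exploit the upper sandwich $\Vcal\le\tfrac{3}{2}\bigl(\varphi_2(\Jcal-\Jcal^*)+\innprod{v}{v}\bigr)$: for any state at least one of $\varphi_2(\Jcal-\Jcal^*)\ge\Vcal/3$ or $\innprod{v}{v}\ge\Vcal/3$ must hold, whence
\[
    \mu(\Jcal-\Jcal^*)^{2}+\innprod{v}{v} \;\ge\; \tilde\alpha(\Vcal) := \min\!\Bigl\{\bigl(\mu\circ\varphi_2^{-1}(\Vcal/3)\bigr)^{2},\; \Vcal/3\Bigr\}.
\]
Because $\varphi_2^{-1}\in\Kcal_\infty$, the composition $\mu\circ\varphi_2^{-1}$ preserves membership in $\Kcal_\infty$, $\Kcal$, and $\PDcal$, and the minimum of two functions of the same class stays in that class; hence $\Vcal$ is, respectively, an NSS-, scNSS-, or iNSS-Lyapunov function, and the three statements follow from Theorems~\ref{thm: NSS}, \ref{thm: scNSS}, and \ref{thm: iNSS}, with the additional iNSS claim under the $\Kcal$-PL condition using the inclusion $\Kcal\subset\PDcal$.
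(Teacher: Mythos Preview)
Your construction is essentially the paper's own proof: the same Lyapunov candidate $\Vcal=\varphi_2(\Jcal-\Jcal^*)+\innprod{\nabla\Jcal}{v}+\innprod{v}{v}$, the same sandwich bounds via Lemma~\ref{lm:gradientUpperBound}, and the same generator computation in which the choice of $\eta$ kills the cross term $\innprod{\nabla\Jcal}{v}$ and the choice of $c$ turns the Hessian block into $-\innprod{v}{v}$, leading to $\Lcal[\Vcal]\le -\mu(\Jcal-\Jcal^*)^2-\innprod{v}{v}+K_G^2\norm{\Sigma\Sigma^\top}$ with $\eta\ge 1$. For the $\Kcal$ and $\Kcal_\infty$ cases your ``at least one of $\varphi_2\ge\Vcal/3$ or $\innprod{v}{v}\ge\Vcal/3$'' device is a clean equivalent of the weak triangle inequality the paper invokes, and the conclusion goes through.

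The gap is in Statement~3. Your inequality $\mu(\Jcal-\Jcal^*)^2+\innprod{v}{v}\ge\tilde\alpha(\Vcal)$ relies, in the first branch of the case split, on the implication $\Jcal-\Jcal^*\ge\varphi_2^{-1}(\Vcal/3)\Rightarrow \mu(\Jcal-\Jcal^*)\ge\mu\!\circ\!\varphi_2^{-1}(\Vcal/3)$, which uses monotonicity of $\mu$; for a merely positive-definite $\mu$ this step can fail. The paper treats the $\PDcal$ case differently: it first passes to $\mu_3(h)=\min\{\mu(\varphi_2^{-1}(h))^2,h\}$, so that $\mu(\Jcal-\Jcal^*)^2+\innprod{v}{v}\ge \mu_3(\varphi_2(\Jcal-\Jcal^*))+\mu_3(\innprod{v}{v})$ holds without any monotonicity, and then builds a subadditive lower envelope $\mu_4(r)=\min_{0\le s\le r}\{\mu_3(s)+\mu_3(r-s)\}\in\PDcal$ satisfying $\mu_4(a+b)\le\mu_3(a)+\mu_3(b)$, which combines the two dissipation terms into a single $-\mu_4(\tfrac{2}{3}\Vcal)$. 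You should replace your case-split argument in the $\PDcal$ case by this subadditive-envelope construction (or an equivalent device that does not presuppose $\mu$ increasing).
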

\begin{proof}
The candidate Lyapunov function is designed as 
\begin{align}
    \Vcal_3(z,v) = \varphi_2(\Jcal(z) - \Jcal^*) + \innprod{\nabla \Jcal(z)}{v} + \innprod{v}{v}.
\end{align}
Following Lemma \ref{lm:gradientUpperBound}, and by applying the Cauchy–Schwarz and Young’s inequalities, the Lyapunov function can be bounded by
\begin{align}\label{eq:V3bound}
\begin{split}
    \frac{1}{2}\varphi_2(\Jcal(z) - \Jcal^*) + \frac{1}{2}\innprod{v}{v} \le \Vcal_3(z,v) &\le \frac{3}{2}\varphi_2(\Jcal(z) - \Jcal^*) + \frac{3}{2}\innprod{v}{v}
\end{split}
\end{align}
It is clearly that $\Vcal_3$ is positive definite and coercive on $\Zcal \times \R^n$. 

Now we analyze the $\Lcal$ generator applied to $\Vcal_3$, which is given by
\begin{align}
\begin{split}
    \Lcal[\Vcal_3](z,v) &= \varphi_2'(\Jcal(z) - \Jcal^*)\innprod{\nabla \Jcal(z)}{v} + \innprod{\nabla^2 \Jcal(z) v}{v} \\
    &\quad + \innprod{\nabla \Jcal(z)}{-\eta \nabla \Jcal(z) - cv} + 2\innprod{v}{-\eta \nabla \Jcal(z) - cv} + \innprod{G(z,v)\Sigma}{G(z,v)\Sigma} \\
    &\le -\eta \innprod{\nabla \Jcal(z)}{\nabla \Jcal(z)} -\big(2c - \norm{\nabla^2 \Jcal(z)} \big) \innprod{v}{v} \\
    &\quad + \big( \varphi_2'(\Jcal(z) - \Jcal^*) - c - 2\eta \big)\innprod{\nabla \Jcal(z)}{v}  + K_G^2 \norm{\Sigma \Sigma^\top}.
\end{split}
\end{align}
Following the property of $\varphi_2$ in Lemma \ref{lm:gradientUpperBound}, 
\begin{align}
    \eta = \frac{1}{2}\big(\varphi_2'(\Jcal(z) - \Jcal^*) - \frac{1}{2}\norm{\nabla^2 \Jcal(z)} - \frac{1}{2} \big) \ge 1.
\end{align}
By the chosen $\eta$ and $c$, and the generalized PL conditions, it holds
\begin{align}\label{eq:V3Liederivative}
\begin{split}
     \Lcal[\Vcal_3](z,v) &\le -  \mu(\Jcal(z) - \Jcal^*)^2 -\innprod{v}{v} + K_G^2 \norm{\Sigma \Sigma^\top} \\
     &\le -\mu_3(\varphi_2(\Jcal(z) - \Jcal^*)) - \mu_3(\innprod{v}{v}) + K_G^2 \norm{\Sigma \Sigma^\top},
\end{split}
\end{align}
where $\mu_3$ is defined as
\begin{align}
    \mu_3(h) = \min \{\mu(\varphi_2^{-1}(h))^2, h \} ,\, \forall h\ge0 .
\end{align}
Then, $\mu_3 \in \Kcal_\infty$ if $\mu \in \Kcal_\infty$, $\mu_3 \in \Kcal$ if $\mu \in \Kcal$, and $\mu_3 \in \PDcal$ if $\mu \in \PDcal$.

When $\mu \in \Kcal$, applying the weak triangle inequality (Lemma \ref{lm:weakTriangle}) to \eqref{eq:V3Liederivative} gives
\begin{align}
\begin{split}
    \Lcal[\Vcal_3](z,v) &\le - \mu_3\bigg(\frac{1}{2} \varphi_2( \Jcal(z) - \Jcal^*) + \frac{1}{2}\innprod{v}{v}\bigg) + K_G^2 \norm{\Sigma \Sigma^\top} \\
    &\le - \mu_3\bigg(\frac{1}{3} \Vcal_3(z,v)\bigg) + K_G^2 \norm{\Sigma \Sigma^\top}
\end{split}
\end{align}
where the last inequality is a result of \eqref{eq:V3bound}. When $\Jcal$ satisfies the $\Kcal_\infty$-PL condition, $\mu_3$ is a $\Kcal_\infty$-function, and hence, system \eqref{eq:HeavyballFlowStochastic} is NSS by Theorem \ref{thm: NSS}. When $\Jcal$ satisfies the $\Kcal$-PL condition, $\mu_3$ is a $\Kcal$-function, and hence, system \eqref{eq:HeavyballFlowStochastic} is scNSS by Theorem \ref{thm: scNSS}.

When $\Jcal$ satisfies the $\PDcal$-PL condition, define $\mu_4(h) = \min_{s \in [0,h]}\{\mu_3(s)+\mu_3(h-s) \}$, which is also a positive definite function. Moreover, $\mu_4$ satisfies the weak subadditivity property $\mu_4(h+s) \le \mu_3(h) + \mu_3(s)$. Therefore, we obtain from \eqref{eq:V3Liederivative} that
\begin{align}
    \Lcal[\Vcal_3](z,v) \le - \mu_4\bigg(\frac{2}{3} \Vcal_3(z,v)\bigg) + K_G^2 \norm{\Sigma \Sigma^\top}.
\end{align}
It follows that, when $\mu \in \PDcal$, the function $\Vcal_3$ serves as an iISS-Lyapunov function. By Theorem \ref{thm: iNSS}, Statement~3 is established.

\end{proof}


\section{Applications}
In this section, we apply the notion of NSS to analyze the robustness of the gradient descent algorithm for two important problem classes: policy optimization for LQR and logistic regression.
\subsection{Policy Optimization for Linear Quadratic Regulator}
We study the policy optimization algorithm for the following linear system:
\begin{align}
    \dot{x}(t) = Ax(t) + Fu(t),\, x(0) = x_0,
\end{align}
where $A \in \R^{n \times n}$ and $F \in \R^{n \times m}$ are constant system matrices; $x(t) \in \R^n$ and $u(t) \in \R^m$ denote the states and control inputs, respectively. The performance index of the system is defined as 
\begin{align}\label{eq:LQRcost}
    \Jcal_1(x_0,u) = \int_{0}^\infty x(t)^\top Q x(t) + u(t)^\top R u(t) \de t
\end{align}
where $Q \in \sym^n_{++}$ and $R \in \sym^n_{++}$. Under the assumption that the pair $(A,F)$ is controllable, the optimal control for \eqref{eq:LQRcost} is given by
\begin{align}
    u^*(t) = - R^{-1}F^\top P^*x(t) = -K^*x(t)
\end{align}
where $P^* \succ 0$ is the solution of the Riccati equation
\begin{align}
    A^\top P^* + P^* A - P^* FR^{-1}F^\top P^* + Q = 0.
\end{align}
Policy optimization seeks the best feedback control $K^*$ over the admissible set 
\begin{align}
    \Gcal = \{K \in \R^{m \times n}|\, (A-FK) \text{ is Hurwitz} \},
\end{align}
by solving
\begin{align}\label{eq:POLQR}
\begin{split}
    \min_{K \in \Gcal} \Jcal_2(K) &= \E_{x_0 \sim \mathcal{N}(0,I_n)} \Jcal_1(x_0, u=-Kx) \\
    &= \Tr{P_K},
\end{split}
\end{align}
where $P_K \in \sym^n_{++}$ is the solution of the Lyapunov equation
\begin{align}
    (A-FK)^\top P_K + P_K (A-FK) + Q + K^\top R K = 0. 
\end{align}

The gradient of the objective function $\Jcal_2$ is given by
\begin{align}
    \nabla \Jcal_2(K) = 2(RK - F^\top P_K)Y_K
\end{align}
where $Y_K \in \sym^n_{++}$ is the solution of the Lyapunov equation
\begin{align}\label{eq:YLyapunov}
    (A-FK)Y_K + Y_K (A-FK)^\top + I_n = 0. 
\end{align}
Let $Y^*$ denote the solution of \eqref{eq:YLyapunov} corresponding to the optimal feedback gain $K^*$, and $\Jcal_2^* = \Jcal_2(K^*)$. The following lemma shows the local Lipchitz continuity of the gradient $\nabla \Jcal_2$.
\begin{lemma}[Lemma 5.3 in \cite{cui2025perturbed}]\label{lm:Lsmoothness}
    The gradient $\nabla \mathcal{J}_2(K)$ is $\bar{L}_3(h)$-Lipschitz continuous over the sublevel set $\mathcal{G}_h = \{K \in \mathcal{G}| \mathcal{J}_2(K)-\Jcal_2^* \le h \}$, with 
    \begin{align}
    \begin{split}
        \bar{L}_3(h) &= \frac{2\norm{R}}{\eigmin{Q}}(\Jcal_2^*+h) + \frac{8a_2\norm{F}\norm{R}}{\eigmin{Q}^2} (\Jcal_2^*+h)^{\frac{5}{2}} + \frac{8\norm{F}(a_1\norm{R}+\norm{F})}{\eigmin{Q}^2} (\Jcal_2^*+h)^3.
    \end{split}
    \end{align}
    where
    \begin{align}
        a_1 = \frac{2\norm{F}}{\eigmin{R}} ,\, a_2 = \Big(\frac{2\norm{A}}{\eigmin{R}}\Big)^{\!{1}/{2}}
    \end{align}
\end{lemma}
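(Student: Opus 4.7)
The plan is to bound $\norm{\nabla \Jcal_2(K_1) - \nabla \Jcal_2(K_2)}$ for any $K_1,K_2 \in \Gcal_h$ in two stages: first, derive uniform a priori bounds on $\norm{P_K}$, $\norm{Y_K}$, and $\norm{K}$ over the sublevel set $\Gcal_h$; second, estimate the Lipschitz dependence of $P_K$ and $Y_K$ on $K$ via Lyapunov perturbation analysis. Then assemble both ingredients into a single bound of the announced form.

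For the a priori bounds, I would exploit the identity $\Tr{P_K} = \Tr{(Q + K^\top R K)Y_K}$, obtained by taking the trace of the product of the Lyapunov equation for $P_K$ with $Y_K$. Since $P_K \succ 0$ and $\Jcal_2(K) \le \Jcal_2^* + h$, this gives $\norm{P_K} \le \Tr{P_K} \le \Jcal_2^* + h$ and $\Tr{Y_K} \le (\Jcal_2^* + h)/\eigmin{Q}$, hence $\norm{Y_K} \lesssim h$. To bound $\norm{K}$, one can combine $\Tr{K Y_K K^\top} \le (\Jcal_2^*+h)/\eigmin{R}$ with a lower bound $\eigmin{Y_K}\gtrsim 1/\norm{A-FK}$ (derived by evaluating the Lyapunov equation for $Y_K$ on the minimum eigenvector) and a bootstrapping argument using $\norm{A-FK} \le \norm{A}+\norm{F}\norm{K}$; the constants $a_1,a_2$ in the statement come from solving the resulting self-consistency inequality, yielding $\norm{K} \lesssim (\Jcal_2^*+h)^{1/2}$.

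For the Lipschitz estimates, I would write
\begin{align*}
    \nabla \Jcal_2(K_1) - \nabla \Jcal_2(K_2) &= 2R(K_1-K_2)Y_{K_1} - 2F^\top (P_{K_1}-P_{K_2})Y_{K_1} \\
    &\quad + 2(RK_2 - F^\top P_{K_2})(Y_{K_1}-Y_{K_2}),
\end{align*}
and observe that both $P_{K_1}-P_{K_2}$ and $Y_{K_1}-Y_{K_2}$ themselves satisfy Lyapunov equations whose right-hand sides are linear in $K_1-K_2$ with coefficients controlled by $P_{K_j}, K_j, F, R$. Applying the standard bound $\norm{X} \lesssim \norm{N}\cdot\norm{Y_{K_j}}$ for a Lyapunov solution $(A-FK_j)^\top X + X(A-FK_j) + N = 0$ then yields $\norm{P_{K_1}-P_{K_2}}$ and $\norm{Y_{K_1}-Y_{K_2}}$ as products of the a priori bounds times $\norm{K_1-K_2}$.

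The main obstacle is the interdependence between $\norm{K}$, $\norm{A-FK}$, and $\eigmin{Y_K}$: tightening one requires the other, so the bootstrapping step must be executed carefully to obtain the clean square-root dependence $\norm{K} \lesssim (\Jcal_2^*+h)^{1/2}$. Once this is settled, matching exponents against the statement is mechanical: the linear-in-$h$ term arises from $\norm{R}\norm{Y_{K_1}}$; the $h^{5/2}$ term from products of the form $\norm{F}\norm{R}\norm{K}\norm{Y_K}^{2}$ with $\norm{K}\lesssim h^{1/2}$; and the $h^{3}$ term from the compounded $\norm{F}^2\norm{P_K}\norm{Y_K}^2$ contribution arising from the perturbation of the $P_K$-Lyapunov equation. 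Collecting the three contributions recovers the explicit expression for $\bar{L}_3(h)$.
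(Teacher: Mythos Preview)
The paper does not prove this lemma; it is quoted verbatim from \cite{cui2025perturbed} and merely cited. So there is no in-paper proof to compare against, and your outline is exactly the standard route used in that literature (a priori sublevel-set bounds on $P_K$, $Y_K$, $K$ via the trace identity $\Tr{P_K}=\Tr{(Q+K^\top RK)Y_K}$, followed by Lyapunov perturbation estimates for $P_{K_1}-P_{K_2}$ and $Y_{K_1}-Y_{K_2}$). The decomposition you wrote for $\nabla\Jcal_2(K_1)-\nabla\Jcal_2(K_2)$ is the natural one.

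One concrete slip in your bookkeeping: the bootstrapping on $\norm{K}$ does \emph{not} yield $\norm{K}\lesssim(\Jcal_2^*+h)^{1/2}$. Solving the quadratic inequality $\norm{K}^2\le \tfrac{2(\Jcal_2^*+h)}{\eigmin{R}}\bigl(\norm{A}+\norm{F}\norm{K}\bigr)$ gives
\[
\norm{K}\;\le\;a_1(\Jcal_2^*+h)+a_2(\Jcal_2^*+h)^{1/2},
\]
with exactly the $a_1,a_2$ of the statement. This two-term bound is why the cubic coefficient in $\bar L_3(h)$ carries $a_1\norm{R}+\norm{F}$, not just $\norm{F}$: the $a_1$-piece of $\norm{K}$ feeds an additional $\norm{F}\norm{R}a_1(\Jcal_2^*+h)\cdot\norm{Y_K}^2$ contribution into the $(\Jcal_2^*+h)^3$ term, alongside the $\norm{F}^2\norm{P_K}\norm{Y_K}^2$ term you identified. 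Once you correct the $\norm{K}$ bound, the exponent and constant matching goes through as you describe.
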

The following lemma shows that $\Jcal_2(K) - \Jcal_2^*$ can serve as a size function over $\Gcal$.
\begin{lemma}[Proposition 3.2 and Lemma 3.3 in \cite{bu2020policy}]\label{lm:coercivity}
    The objective function $\Jcal_2(K)$ is coercive and analytic over $\Gcal$.
\end{lemma}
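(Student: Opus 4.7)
The plan is to establish analyticity and coercivity separately, exploiting two complementary representations of $P_K$.

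For \emph{analyticity}, I would vectorize the Lyapunov equation defining $P_K$ to obtain
\begin{align*}
    \bigl[I_n \otimes (A-FK)^\top + (A-FK)^\top \otimes I_n\bigr]\mathrm{vec}(P_K) = -\mathrm{vec}(Q + K^\top R K).
\end{align*}
The coefficient matrix is affine in the entries of $K$, and its eigenvalues are pairwise sums of eigenvalues of $A-FK$, which all have strictly negative real part on $\Gcal$. Hence its determinant is a nonvanishing polynomial in $K$ throughout $\Gcal$, and Cramer's rule expresses each entry of $P_K$ as a rational function of $K$ with nonzero denominator on $\Gcal$. Consequently $\Jcal_2(K)=\Tr{P_K}$ is rational, hence analytic, on $\Gcal$.

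For \emph{coercivity}, I would work with the dual identity $\Jcal_2(K) = \Tr{(Q+K^\top R K) Y_K}$, obtained by contracting the Lyapunov equations for $P_K$ and $Y_K$ against each other. A sequence $\{K_n\} \subset \Gcal$ escapes every compact subset of $\Gcal$ iff either $\normF{K_n}\to\infty$ or $K_n$ converges to some $K^\star\in \partial\Gcal$, and I would treat these cases separately.

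In the boundary case $K_n \to K^\star \in \partial \Gcal$, continuity of the spectrum ensures some eigenvalue of $A - FK^\star$ lies on the imaginary axis, so for large $n$ there is a unit left eigenvector $w_n$ of $A - FK_n$ with eigenvalue $\mu_n$ satisfying $\mathrm{Re}(\mu_n) \to 0^-$. Sandwiching \eqref{eq:YLyapunov} between $w_n^*$ and $w_n$ gives $w_n^* Y_{K_n} w_n = -1/(2\,\mathrm{Re}(\mu_n))$, so $\eigmax{Y_{K_n}} \to \infty$; then $\Jcal_2(K_n) \ge \eigmin{Q}\Tr{Y_{K_n}} \to \infty$ by Lemma~\ref{lm:traceIneq}. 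In the unbounded case $\normF{K_n}\to\infty$, contracting \eqref{eq:YLyapunov} shows $Y_K \succeq I_n/(2\norm{A-FK})$, whence
\begin{align*}
    \Jcal_2(K_n) \ge \eigmin{R}\,\Tr{K_n^\top K_n\, Y_{K_n}} \ge \frac{\eigmin{R}\,\normF{K_n}^2}{2(\norm{A}+\norm{F}\normF{K_n})} \to \infty.
\end{align*}

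The main obstacle is the boundary estimate: the simple-eigenvalue argument above requires care when $A - FK^\star$ has a defective or repeated eigenvalue on the imaginary axis. In such degenerate cases I would pass to a Jordan-form refinement of the eigenvector identity, or use a more abstract argument via the Lyapunov operator $X\mapsto (A-FK)X + X(A-FK)^\top$, whose inverse norm blows up as $K\to\partial\Gcal$ because the operator's eigenvalues (pairwise sums of eigenvalues of $A-FK$) approach zero.
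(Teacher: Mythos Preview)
The paper does not supply its own proof of this lemma; it is stated purely as a citation to \cite{bu2020policy}. Your proposal is therefore not competing against an in-paper argument but furnishing one, and it is essentially correct.

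A few remarks. The analyticity argument via vectorization and Cramer's rule is standard and complete as written. For coercivity, the dual identity $\Jcal_2(K)=\Tr{(Q+K^\top R K)Y_K}$ is correct, and your lower bound $Y_K \succeq I_n/(2\norm{A-FK})$ follows by testing \eqref{eq:YLyapunov} against a unit eigenvector of $Y_K$ at its minimum eigenvalue. The dichotomy ``either $\normF{K_n}\to\infty$ or $K_n\to K^\star\in\partial\Gcal$'' should, strictly speaking, be phrased up to passage to subsequences, but this is routine.

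The obstacle you flag at the end is not actually present. The identity $w_n^* Y_{K_n} w_n = -1/(2\,\mathrm{Re}\,\mu_n)$ requires only that $w_n$ be \emph{some} unit left eigenvector for the eigenvalue $\mu_n$, and such a vector exists whether $\mu_n$ is simple, repeated, or defective. What drives the argument is continuity of the \emph{spectrum} in $K$ (guaranteeing the existence of $\mu_n$ with $\mathrm{Re}\,\mu_n\to 0^-$), not continuity of eigenvectors. No Jordan-form refinement or operator-norm argument is needed.
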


The following lemma shows that the objective function $\Jcal_2$ satisfies the $\Kcal$-PL condition.  
\begin{lemma}[$\Kcal$-PL condition of LQR \cite{CJS2024}]\label{lm:KPL_LQR}
    The objective function $\mathcal{J}_2(K)$ satisfies the $\mathcal{K}$-PL condition, that is 
\begin{align*}
    \normF{\nabla \mathcal{J}_2(K)} \geq  \mu_5(\mathcal{J}_2(K) - \mathcal{J}_2^*), \quad \forall K \in \mathcal{G},
\end{align*}
where 
\begin{align}
      \mu_5(h) = \frac{h}{b_1h + b_2}, \, \forall h \ge 0
\end{align}
\begin{align*}
b_1 = \frac{\norm{F}\sqrt{2(\eigmin{Y^*}+\eigmax{Y^*})}}{\eigmin{R} \sqrt{\eigmin{Y^*}}}, 
\end{align*}
and 
\begin{align*}
&b_2 = \frac{\norm{A-FK^*}_F^2\eigmin{Y^*}^{{1}/{2}}(\eigmin{Y^*}+\eigmax{Y^*})^{{1}/{2}}}{\sqrt{2}\norm{F}}
\end{align*}
\end{lemma}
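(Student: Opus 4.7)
The plan is to prove the equivalent inequality $h \le (b_1 h + b_2)\,\normF{\nabla \Jcal_2(K)}$, with $h := \Jcal_2(K) - \Jcal_2^*$, by combining the LQR cost difference lemma with comparisons between the Lyapunov solutions $Y_K$ and $Y^*$. The target shape of $\mu_5$ suggests that we should end up with a linear-plus-quadratic estimate in $(h,\normF{\nabla \Jcal_2(K)})$, and the appearance of $\norm{A-FK^*}_F$ in $b_2$ points to using the Lyapunov equation at the optimal closed loop $A-FK^*$ rather than at $A-FK$.

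\textbf{Step 1 (cost difference).} The standard LQR cost difference lemma, applied with roles swapped so that $Y^* = Y_{K^*}$ is the weighting, gives
\begin{align*}
-h = \Jcal_2(K^*) - \Jcal_2(K) = \Tr{Y^*\big[(K-K^*)^\top R(K-K^*) - 2(K-K^*)^\top E_K\big]},
\end{align*}
where $E_K = RK - F^\top P_K$ and $E_{K^*}=0$. Dropping the nonnegative $R$-quadratic term yields $h \le 2\,\Tr{Y^*(K-K^*)^\top E_K}$.

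\textbf{Step 2 (insert the gradient).} Write $\Tr{Y^*(K-K^*)^\top E_K} = \Tr{Y_K(K-K^*)^\top E_K} + \Tr{(Y^*-Y_K)(K-K^*)^\top E_K}$. Because $\nabla \Jcal_2(K) = 2 E_K Y_K$, the first piece equals $\tfrac{1}{2}\Tr{(K-K^*)^\top \nabla \Jcal_2(K)}$, bounded by $\tfrac{1}{2}\normF{K-K^*}\,\normF{\nabla \Jcal_2(K)}$. The companion cost difference $h = \Tr{Y_K(K-K^*)^\top R(K-K^*)}$, combined with the Lyapunov identity $(A-FK^*)Y^* + Y^*(A-FK^*)^\top + I = 0$ (which produces the $\norm{A-FK^*}_F$ factor), lets me relate $\normF{K-K^*}$ to $h^{1/2}$ through constants involving $\eigmin{Y^*},\eigmax{Y^*}$; this generates the $b_2\,\normF{\nabla \Jcal_2(K)}$ part of the bound.

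\textbf{Step 3 (compare $Y_K$ with $Y^*$).} Subtracting the Lyapunov equations for $Y_K$ and $Y^*$ produces
\begin{align*}
(A-FK^*)(Y_K - Y^*) + (Y_K - Y^*)(A-FK^*)^\top = F(K-K^*)Y_K + Y_K(K-K^*)^\top F^\top,
\end{align*}
from which $\normF{Y_K - Y^*}$ is bounded by a product of $A-FK^*$-dependent constants, $\normF{K-K^*}$, and $\normF{Y_K}$. The identity $\Jcal_2(K) = \Tr{Y_K(Q + K^\top R K)}\ge \eigmin{Q}\Tr{Y_K}$ gives $\normF{Y_K}\le (\Jcal_2^*+h)/\eigmin{Q}$, while $P_K - P_{K^*} \succeq 0$ gives $\normF{P_K - P_{K^*}} \le \Tr{P_K - P_{K^*}} = h$. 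These push all $K$-dependence into the scalar $h$, so the cross term in Step 2 contributes the $b_1 h\,\normF{\nabla \Jcal_2(K)}$ part.

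Adding the two contributions yields $h \le b_2\,\normF{\nabla \Jcal_2(K)} + b_1 h\,\normF{\nabla \Jcal_2(K)}$, which rearranges to the stated $\Kcal$-PL bound $\normF{\nabla \Jcal_2(K)} \ge h/(b_1 h + b_2) = \mu_5(h)$. The main obstacle is Step~3: bounding $\normF{Y^* - Y_K}$ tightly enough so that the constants match $b_1$ and $b_2$ exactly. In particular, any residual factor of $\eigmin{Y_K}$ would be fatal, since $Y_K$ can be arbitrarily ill-conditioned as $K$ approaches the stability boundary of $\Gcal$; the proof must therefore route every estimate through $Y^*$ and $\norm{A-FK^*}_F$, which is precisely what forces the saturation $\mu_5(h) \to 1/b_1$ as $h \to \infty$ instead of the $\sqrt{h}$ growth given by classical gradient dominance.
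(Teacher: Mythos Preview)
The paper does not prove this lemma; it is quoted verbatim from \cite{CJS2024} and used as a black box. There is therefore no ``paper proof'' to compare your proposal against.

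On its own merits, your outline follows the right overall architecture (cost-difference identity at $K^*$, then pushing $E_K$ toward $\nabla\Jcal_2(K)=2E_KY_K$), but the gap you flag in Step~3 is already present in Step~2 and is not resolved by the sketch. After your decomposition
\[
\Tr{Y^*(K-K^*)^\top E_K}
= \tfrac{1}{2}\Tr{(K-K^*)^\top\nabla\Jcal_2(K)}
  + \Tr{(Y^*-Y_K)(K-K^*)^\top E_K},
\]
the cross term contains a bare $E_K$ (not $E_KY_K$), and any Cauchy--Schwarz bound on it requires $\normF{E_K}\le \tfrac{1}{2}\normF{\nabla\Jcal_2(K)}\,\|Y_K^{-1}\|$, which reintroduces $\eigmin{Y_K}^{-1}$. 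Your Lyapunov comparison for $Y_K-Y^*$ likewise produces $\normF{Y_K}$ on the right-hand side, and pairing that with $\normF{K-K^*}\lesssim \sqrt{h/(\eigmin{R}\eigmin{Y_K})}$ from $h=\Tr{Y_K(K-K^*)^\top R(K-K^*)}$ does not yield a clean $b_1h$ term: you end up with factors of the form $(\Jcal_2^*+h)\sqrt{h}/\eigmin{Y_K}$, not the affine $b_1h+b_2$ the statement demands. In short, the sketch correctly identifies that one must avoid $\eigmin{Y_K}$, but does not actually exhibit a route that does so; the constants $b_1,b_2$ (in particular the $\|A-FK^*\|_F^2/\|F\|$ in $b_2$) are not recovered by the estimates you list. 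To close the argument you would need either a direct bound of $\Tr{Y^*(K-K^*)^\top E_K}$ in terms of $\normF{\nabla\Jcal_2(K)}$ that bypasses $E_K$ altogether, or a uniform lower bound on $\eigmin{Y_K}$ over sublevel sets expressed through $Y^*$ and $A-FK^*$---neither of which appears in your proposal.
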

The overdamped Langevin diffusion for solving the policy optimization problem \eqref{eq:POLQR} is given by
\begin{align}\label{eq:LQRdiffusion}
    \de K(s) = - 2\eta(\Jcal(K(s)) - \Jcal^*)(RK(s) - F^\top P(s)) Y(s) \de s + \Sigma_1(s) \de W(s)
\end{align}
where $W(s) \in \R^{m \times n}$ is a standard, independent Brownian motion, $P(s)=P_{K(s)}$, $Y(s)=Y_{K(s)}$, and $\Sigma_1: \R_+ \to \R^{m \times m}$ is Borel measurable and locally essentially bounded, which characterizes the intensity of the stochastic perturbations. Since the gradient of $\Jcal_2$ is not globally Lipschitz continuous, the NSS properties of \eqref{eq:LQRdiffusion} can be established by Theorem~\ref{thm:NSSwithoutLipchitz}.

\begin{theorem}
    The overdamped Langevin diffusion in \eqref{eq:LQRdiffusion} is NSS if $\lim_{h \to \infty}\frac{h^3}{\eta(h)} = 0$, and it is scNSS if $0 < \lim_{h \to \infty}\frac{h^3}{\eta(h)} < \infty$. 
\end{theorem}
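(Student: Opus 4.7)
The plan is to apply Theorem~\ref{thm:NSSwithoutLipchitz} directly, with $\Jcal = \Jcal_2$, domain $\Zcal = \Gcal$, constant diffusion matrix $G \equiv I$ (so that $\normF{G}^2 = mn$ is globally bounded), and the tunable step size $\eta(\cdot)$ from the statement. All the heavy lifting is already done by that theorem; the work reduces to verifying its standing hypotheses and then matching the two asymptotic cases. Lemma~\ref{lm:coercivity}, together with the fact that $K^*$ is the unique minimizer of $\Jcal_2$ over $\Gcal$, shows that $\Jcal_2 - \Jcal_2^*$ is a size function on $\Gcal$ in the sense of Definition~\ref{def:sizeFunc}, while Lemma~\ref{lm:KPL_LQR} supplies the $\Kcal$-PL condition with comparison function $\mu_5(h) = h/(b_1 h + b_2)$.

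Next I would derive the asymptotic behavior of the ratio $\tilde{L}(h)/(\eta(h)\mu_5(h)^2)$ that appears in Theorem~\ref{thm:NSSwithoutLipchitz}. By Lemma~\ref{lm:Lsmoothness}, the Lipschitz constant $\bar{L}_3(h)$ of $\nabla\Jcal_2$ on the sublevel set $\Gcal_h$ is a polynomial in $h$ of degree three, whose leading term is proportional to $\frac{8\norm{F}(a_1\norm{R}+\norm{F})}{\eigmin{Q}^2}h^3$. Since any $K \in \Gcal_h$ sits in its own sublevel set, we obtain the pointwise bound $\norm{\nabla^2 \Jcal_2(K)} \le \bar{L}_3(\Jcal_2(K)-\Jcal_2^*)$, which together with \eqref{eq:Hessensup} yields $\bar{L}(h) \le \tfrac{mn}{2}\bar{L}_3(h)$ and hence $\tilde{L}(h) = O(h^3)$ as $h\to\infty$; after modifying by the addition of an identity (as already permitted in the discussion preceding Theorem~\ref{thm:NSSwithoutLipchitz}) to enforce the $\Kcal_\infty$ property at small scales, we may take $\tilde{L}(h) = \Theta(h^3)$. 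On the other hand, $\mu_5(h) \to 1/b_1$ as $h \to \infty$, so $\mu_5(h)^2$ converges to the strictly positive finite constant $1/b_1^2$.

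Combining these estimates yields
\[
\lim_{h\to\infty} \frac{\tilde{L}(h)}{\eta(h)\,\mu_5(h)^2} \;=\; C\,\lim_{h\to\infty}\frac{h^3}{\eta(h)}
\]
for some explicit constant $C > 0$ depending only on the problem data. If $\lim_{h\to\infty} h^3/\eta(h) = 0$, then this ratio tends to $0$ and Theorem~\ref{thm:NSSwithoutLipchitz}(2) delivers NSS; if the limit lies in $(0,\infty)$, the ratio tends to a strictly positive finite constant and Theorem~\ref{thm:NSSwithoutLipchitz}(1) delivers scNSS. I expect the main obstacle to be purely bookkeeping: first, tracking the matrix-to-vector identification (viewing $\R^{m\times n}$ as $\R^{mn}$) so that the Euclidean statement of Theorem~\ref{thm:NSSwithoutLipchitz} applies to \eqref{eq:LQRdiffusion}; and second, ensuring that the $O(h^3)$ upper bound on $\tilde{L}$ is actually $\Theta(h^3)$ so that the positivity of the limit is preserved in the scNSS case. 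Neither step requires any new analytical idea beyond the three lemmas already in hand.
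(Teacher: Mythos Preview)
Your proposal is correct and follows essentially the same route as the paper: verify the size-function and $\Kcal$-PL hypotheses via Lemmas~\ref{lm:coercivity} and~\ref{lm:KPL_LQR}, use Lemma~\ref{lm:Lsmoothness} to identify the cubic growth of $\bar L_3$, observe that $\mu_5(h)^2\to 1/b_1^2$, and read off both conclusions from Theorem~\ref{thm:NSSwithoutLipchitz}. Your labeling of the two cases (part~(2) for NSS, part~(1) for scNSS) actually matches the statement of Theorem~\ref{thm:NSSwithoutLipchitz} more precisely than the paper's own citation, and the $\Theta(h^3)$ versus $O(h^3)$ concern you flag is harmless, since in the borderline case the ratio could only drop to zero, yielding NSS and hence a fortiori scNSS.
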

\begin{proof}
    By Lemma \ref{lm:coercivity}, $\Jcal_2(K) - \Jcal_2^*$ qualifies as a size function. Since $\lim_{h \to \infty}\frac{h^3}{\eta(h)} = 0$ implies that $\lim_{h \to \infty}  \frac{\bar{L}_3(h)}{\eta(h)\mu_5(h)^2}  = 0$, the stochastic dynamics of \eqref{eq:LQRdiffusion} is NSS by Statement~1 of Theorem~\ref{thm:NSSwithoutLipchitz}.

    When $0<\lim_{h \to \infty}\frac{h^3}{\eta(h)} < \infty$, we have $0<\lim_{h \to \infty}  \frac{\bar{L}_3(h)}{\eta(h)\mu_5(h)^2} < \infty$, too. Hence, the stochastic dynamics of \eqref{eq:LQRdiffusion} is scNSS by Statement~2 of Theorem~\ref{thm:NSSwithoutLipchitz}.
\end{proof}

By introducing the momentum in optimization, the heavy-ball method of policy optimization for LQR can be represented as
\begin{align}\label{eq:LQRheavyballdiffusion}
\begin{split}
    \de K(s) &= \de V(s) \\
    \de V(s) &=- 2\eta(RK(s) - F^\top P(s)) Y(s) \de s - cV(s) \de s + \Sigma_1(s) \de W(s).
\end{split}
\end{align}
Its scNSS property can be guaranteed by the following theorem.
\begin{theorem}
    The underdamped Langevin diffusion in \eqref{eq:LQRheavyballdiffusion} is scNSS if $c = \frac{1}{2} \norm{\nabla^2 \Jcal_2(K(s))}+\frac{1}{2}$ and $\eta = \frac{1}{2}(\varphi_3'(\Jcal_2(z) - \Jcal_2^*) - c)$, where  $\varphi_3(h) = 2\bar{L}_3(h)h + 5/2 h$.
\end{theorem}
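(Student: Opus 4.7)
The plan is to recognize \eqref{eq:LQRheavyballdiffusion} as a concrete instance of the generic underdamped Langevin diffusion \eqref{eq:HeavyballFlowStochastic} and apply Statement~2 of Theorem~\ref{thm:NSSunderdampedLangevin}. Identifying the vectorized gain $K$ with $z$, the momentum $V$ with $v$, and $\Jcal_2$ with $\Jcal$, the diffusion coefficient multiplying $\Sigma_1\de W$ in \eqref{eq:LQRheavyballdiffusion} is the identity, so the corresponding $G$ is constant and trivially satisfies $\normF{G}\le K_G$ for some $K_G$.

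Next I would verify each remaining hypothesis using the LQR-specific lemmas. Lemma~\ref{lm:coercivity} gives analyticity and coercivity of $\Jcal_2$ on the admissible set $\Gcal$, so $\Jcal_2-\Jcal_2^*$ is a valid size function; Lemma~\ref{lm:KPL_LQR} supplies the $\Kcal$-PL condition with $\mu=\mu_5$. Lemma~\ref{lm:Lsmoothness} yields the sublevel-set Lipschitz constant $\bar{L}_3(h)$ for $\nabla\Jcal_2$, which by \cite[Lemma~1.2.2]{nesterov2013introductory} also upper bounds $\norm{\nabla^2\Jcal_2(K)}$ for $K\in\Gcal_h$. Thus $\bar{L}_3$ would play exactly the role of $\bar{L}_2$ in Lemma~\ref{lm:gradientUpperBound}, so the argument there carries over verbatim and produces the bound $\norm{\nabla\Jcal_2(K)}^2\le\varphi_3(\Jcal_2(K)-\Jcal_2^*)$. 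Since $\bar{L}_3(h)$ is a polynomial in $h$, $\varphi_3$ is already $C^\infty$ and no convolution smoothing is required; direct differentiation would give $\varphi_3'(h)\ge 2\bar{L}_3(h)+\tfrac{5}{2}$. The prescribed damping $c$ and learning rate $\eta$ then match the tuning rule of Theorem~\ref{thm:NSSunderdampedLangevin} with $\varphi_2=\varphi_3$, and Statement~2 immediately yields scNSS.

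The main obstacle is dimension bookkeeping rather than substantive analysis. The generic framework is stated for vector-valued processes, whereas \eqref{eq:LQRheavyballdiffusion} is matrix-valued, so I would vectorize $K$ and $V$ and treat $\Sigma_1\de W$ as a vector Brownian increment whose instantaneous covariance has operator norm $\norm{\Sigma_1\Sigma_1^\top}$; consequently $\normSup{\Sigma\Sigma^\top}$ in the generic statement becomes $\normSup{\Sigma_1\Sigma_1^\top}$ and the scNSS conclusion transfers unchanged. A secondary point to confirm is that the construction in Lemma~\ref{lm:gradientUpperBound} requires sublevel sets to be compact so that the gradient-descent ray from any point in $\Gcal_h$ first exits at a finite positive time; this compactness is exactly the coercivity supplied by Lemma~\ref{lm:coercivity}.
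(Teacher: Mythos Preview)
Your proposal is correct and follows exactly the paper's approach: invoke Lemma~\ref{lm:KPL_LQR} for the $\Kcal$-PL condition and apply Statement~2 of Theorem~\ref{thm:NSSunderdampedLangevin}, with the diffusion coefficient $G$ being the identity. One cosmetic point: $\bar{L}_3(h)$ is not literally a polynomial (it contains a $(\Jcal_2^*+h)^{5/2}$ term), but it is still $C^\infty$ and increasing on $\R_+$, so your conclusion that $\varphi_3$ needs no smoothing and that $\varphi_3'(h)\ge 2\bar{L}_3(h)+\tfrac{5}{2}$ goes through unchanged.
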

\begin{proof}
    Since $\Jcal_2$ satisfies the $\Kcal$-PL condition (Lemma \ref{lm:KPL_LQR}), it follows from Statement~2 of Theorem~\ref{thm:NSSunderdampedLangevin} that \eqref{eq:LQRheavyballdiffusion} is scNSS.
\end{proof}

\subsection{Logistic Regression}
Suppose we are given a dataset of $N$ samples $\{(x_i,y_i)\}_{i=1}^N$, where each feature vector $x_i \in \R^n$ and the corresponding label $y_i \in \{0,1\}$. In logistic regression, the probability that a sample with feature vector $x$ belongs to class $y=1$ is modeled as
\begin{align}
    p(x;\theta) = \mathbb{P}(y=1|x;\theta) = \sigma(\theta^\top x) = \frac{1}{1+\exp(-\theta^\top x)},
\end{align}
where $\theta \in \R^n$ denotes the parameter vector to be learned. The model parameters are obtained by minimizing the empirical negative log-likelihood (binary cross-entropy) loss:
\begin{align}\label{eq:logisticLoss}
    \min_{\theta \in \R^n}\Jcal_3(\theta) = \frac{1}{N}\sum_{i=1}^N \ell_i(\theta) = -\frac{1}{N} \sum_{i=1}^N y_i \log(p_i(\theta)) + (1-y_i) \log(1-p_i(\theta)),
\end{align}
where $\ell_i(\theta)$ is the per-sample loss and $p_i(\theta) = p(x_i;\theta)$. Let $X = [x_1,x_2,\cdots,x_n] \in \R^{n \times N}$ denote the data matrix. The gradient and Hessian of the logistic loss are given by
\begin{subequations}
\begin{align}
    \nabla \Jcal_3(\theta) &= \frac{1}{N} \sum_{i=1}^N (p_i(\theta) - y_i)x_i, \label{eq:gradientJ3} \\
    \nabla^2 \Jcal_3(\theta) &= \frac{1}{N} \sum_{i=1}^N p_i(\theta) (1- p_i(\theta))x_ix_i^\top = \frac{1}{N}X \Lambda(\theta) X^\top, \label{eq:HessianJ3}
\end{align}    
\end{subequations}
where $\Lambda(\theta) = \operatorname{diag}(p_i(\theta)(1-p_i(\theta)))$. The following standard assumptions are imposed throughout the analysis.
\begin{assumption}\label{ass:fullrank}
    The dataset is sufficiently rich so that the data matrix $X \in \mathbb{R}^{n\times N}$ has full row rank, i.e., $\operatorname{rank}(X)=n$.
\end{assumption}

\begin{assumption}\label{ass:nonseparable}
    The dataset is \emph{nonseparable}. That is, there does not exist a nonzero vector $\vec{\theta} \in \R^n$, such that $\vec{\theta}^\top x_i \ge 0$ for all $i$ with $y_i = 1$ and $\vec{\theta}^\top x_i \le 0$ for all $i$ with $y_i = 0$. 
\end{assumption}
\subsubsection{Properties of Logistic Loss}
The following lemmas establish key properties of $\Jcal_3$. In particular, under Assumption \ref{ass:nonseparable}, the objective function is shown to be coercive.
\begin{lemma}\label{lm:coercivityLogistic}
    Under Assumption \ref{ass:nonseparable}, the objective function $\Jcal_3$ is coercive over $\R^n$.
\end{lemma}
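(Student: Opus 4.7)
The plan is to prove coercivity by establishing an explicit linear-in-$\norm{\theta}$ lower bound on $\Jcal_3$, with slope governed by the worst-case margin over the unit sphere.

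First, I would rewrite each per-sample loss in margin form. Setting $s_i := 2y_i - 1 \in \{-1,+1\}$ and using $-\log \sigma(a) = \log(1+e^{-a})$ together with the elementary inequality $\log(1+e^a) \ge \max\{a,0\}$, one obtains
\begin{align*}
    \ell_i(\theta) \;=\; \log\!\bigl(1 + \exp(-s_i\,\theta^\top x_i)\bigr) \;\ge\; \max\{-s_i\,\theta^\top x_i,\,0\}.
\end{align*}
In this notation, Assumption~\ref{ass:nonseparable} is equivalent to the statement that for every nonzero $u \in \R^n$ there exists an index $i$ with $s_i u^\top x_i < 0$ \emph{strictly}.

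Second, I would introduce the worst-margin functional $g : S^{n-1} \to \R$ defined by
\begin{align*}
    g(u) \;:=\; \max_{1 \le i \le N}\bigl(-s_i\, u^\top x_i\bigr).
\end{align*}
As a pointwise maximum of finitely many linear functionals, $g$ is continuous on the unit sphere. The reformulated nonseparability condition yields $g(u) > 0$ for every $u \in S^{n-1}$, and compactness of $S^{n-1}$ upgrades this to the uniform bound $g_{\min} := \min_{u \in S^{n-1}} g(u) > 0$.

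Finally, for any $\theta \ne 0$, set $u = \theta/\norm{\theta}$ and choose $i^\ast$ attaining the maximum in $g(u)$. Then $-s_{i^\ast}\theta^\top x_{i^\ast} = \norm{\theta}\,g(u) \ge g_{\min}\norm{\theta}$, whence $\ell_{i^\ast}(\theta) \ge g_{\min}\norm{\theta}$; nonnegativity of the remaining $\ell_i$ then gives the explicit linear bound
\begin{align*}
    \Jcal_3(\theta) \;\ge\; \frac{g_{\min}}{N}\,\norm{\theta},
\end{align*}
which diverges as $\norm{\theta}\to\infty$, establishing coercivity. The one step requiring care is the upgrade from the pointwise strict positivity of $g$ to the uniform bound $g_{\min}>0$: this is exactly where the \emph{weak} formulation of nonseparability in Assumption~\ref{ass:nonseparable} (no separating direction, not merely no \emph{strictly} separating direction) combines with compactness of $S^{n-1}$, and it is what yields the explicit linear growth rate rather than merely divergence. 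Assumption~\ref{ass:fullrank} plays no role in this argument; it is needed for properties such as strict convexity of $\Jcal_3$ elsewhere in the analysis.
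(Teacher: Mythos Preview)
Your proof is correct and is a genuinely different, more elementary route than the paper's. You lower-bound the loss directly via the softplus inequality $\log(1+e^a)\ge\max\{a,0\}$ and then use compactness of the unit sphere on the margin functional $g(u)=\max_i(-s_i u^\top x_i)$ to extract a uniform linear lower bound $\Jcal_3(\theta)\ge \tfrac{g_{\min}}{N}\norm{\theta}$. The paper instead works at the level of the directional derivative: it introduces $\zeta(r,\vec\theta)=\tfrac{1}{N}\sum_i(p_i(\theta_0+r\vec\theta)-y_i)\vec\theta^\top x_i$, shows it is strictly increasing in $r$ with a strictly positive limit $\bar\zeta(\vec\theta)$, invokes Dini's theorem to get uniform convergence on the sphere, and then integrates to obtain linear growth of $\Jcal_3$ along each ray. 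Your argument is shorter and avoids Dini's theorem entirely; the paper's approach, while heavier for this lemma alone, has the advantage that the same object $\zeta$ (with $\theta_0=\theta^*$) is reused verbatim in the proof of the $\Kcal$-PL condition for the logistic loss, so the extra machinery is not wasted. Your observation that only Assumption~\ref{ass:nonseparable} (and not Assumption~\ref{ass:fullrank}) is needed here is also correct and matches the paper's usage.
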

\begin{proof}
For any $\theta_0 \in \R^n$ and $\theta \neq \theta_0$, write
\begin{align}\label{eq:thetaDecom}
\theta = \theta_0 + r \vec{\theta},\, \vec{\theta} = \frac{\theta - \theta_0}{\norm{\theta - \theta_0}}, \,r = \norm{\theta - \theta_0} .
\end{align}   
Define, for $r\ge 0$ and any unit vector $\vec{\theta}$,
\begin{align}\label{eq:xiDef}
    \zeta(r,\vec{\theta}) = \frac{1}{N}\sum_{i=1}^N\big( p_i(\theta_0 + r \vec{\theta}) - y_i\big) \vec{\theta}^\top x_i,
\end{align}
so that
\begin{align}\label{eq:derivativeXi}
    \frac{\partial}{\partial r}\zeta(r,\vec{\theta}) = \frac{1}{N}\sum_{i=1}^N\big( p_i(\theta_0 + r \vec{\theta})(1-p_i(\theta_0 + r \vec{\theta}))\big) (\vec{\theta}^\top x_i)^2.
\end{align}
When $r \to \infty$:
\begin{align}\label{eq:xiInfinity}
\begin{split}
    \lim_{r \to \infty} \zeta(r,\vec{\theta}) &= \frac{1}{N}\sum_{i=1}^N\big( \mathbf{1}\{\vec{\theta}^\top x_i > 0 \}- y_i \big)\vec{\theta}^\top x_i \\
    &=\frac{1}{N} \sum_{i=1}^N \mathbf{1}\{y_i = 1 \} [-\vec{\theta}^\top x_i ]_+ + \mathbf{1}\{y_i = 0 \} [\vec{\theta}^\top x_i]_+ =: \bar{\zeta}(\vec{\theta}),
\end{split}
\end{align}
with $[s]_+ = \max\{s,0\}$. By Assumption \ref{ass:nonseparable}, for every unit vector $\vec{\theta}$, there is at least one index $i$ with either $y_i=1$ and $\vec{\theta}^\top x_i < 0$ or $y_i=0$ and $\vec{\theta}^\top x_i> 0$. Hence, $\bar{\zeta}(\vec{\theta}) > 0$ for all unit vector $\vec{\theta}$. In addition, it follows from \eqref{eq:derivativeXi} that $\frac{\partial}{\partial r}\zeta(r,\vec{\theta})>0$ and $\zeta(r,\vec{\theta})$ is strictly increasing in $r$ for any fixed $\vec{\theta}$. The map $\bar{\zeta}$ is continuous on the compact unit sphere, so
\begin{align}
    e_1 =  \min_{\norm{\vec{\theta}}=1} \bar{\zeta}(\vec{\theta}) > 0.
\end{align}
Since $\zeta(r,\vec{\theta})$ is continuous in $\vec{\theta}$, strictly increasing in $r$, and converges pointwise to the continuous limit $\bar{\zeta}(\vec{\theta})$, Dini’s theorem yields uniform convergence on the unit sphere. Thus, there exists $0 < r_1 < \infty$, such that
\begin{align}
     \zeta(r,\vec{\theta})\ge \frac{e_1}{2}, \, \forall r \ge r_1, \,  \norm{\vec{\theta}}=1.
\end{align}

Along any ray $\theta_0 + r \vec{\theta}$, 
\begin{align}
    \frac{\partial}{\partial r}\big(\Jcal_3(\theta_0 + r \vec{\theta}) - \Jcal_3(\theta_0 + r_1 \vec{\theta})\big) = \zeta(r,\vec{\theta}), 
\end{align}
and integrating from $r_1$ to $r \ge r_1$ gives
\begin{align}
    \Jcal_3(\theta_0 + r \vec{\theta}) - \Jcal_3(\theta_0 + r_1 \vec{\theta}) = \int_{r_1}^r \zeta(s,\vec{\theta}) \de s \ge \frac{e_1}{2}(r-r_1),\, \forall r\ge r_1.
\end{align}
Because $\vec{\theta} \rightarrow \Jcal_3(\theta_0 + r_1 \vec{\theta})$ is continuous on the unit sphere, 
\[
\min_{\norm{\vec{\theta}}=1} \Jcal_3(\theta_0 + r_1 \vec{\theta}) = e_2 > -\infty.
\] 
Therefore, $ \lim_{r\to\infty} \Jcal_3(\theta_0 + r \vec{\theta}) = \infty$ uniformly in direction $\vec{\theta}$, i.e., the logistic loss is coercive.
\end{proof}

We now present a lemma that demonstrates the convexity and smoothness properties of $\Jcal_3$.
\begin{lemma}\label{lm:LipchitzLogistic}
    Under Assumption~\ref{ass:fullrank}, the objective function $\Jcal_3$ is strictly convex, and its gradient is globally $\tfrac{1}{4N}\,\|XX^\top\|$-Lipschitz continuous. 
\end{lemma}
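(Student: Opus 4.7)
The plan is to analyze the Hessian
$\nabla^2 \Jcal_3(\theta) = \tfrac{1}{N}X\Lambda(\theta)X^\top$
from \eqref{eq:HessianJ3} and establish two-sided eigenvalue bounds by working directly with the diagonal matrix $\Lambda(\theta) = \operatorname{diag}(p_i(\theta)(1-p_i(\theta)))$. Both claims reduce to elementary pointwise bounds on the scalar map $s \mapsto s(1-s)$ combined with the full row rank assumption on $X$.

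For strict convexity, I would argue that since $p_i(\theta) = \sigma(\theta^\top x_i) \in (0,1)$ for every $\theta \in \R^n$, each diagonal entry $p_i(\theta)(1-p_i(\theta))$ is strictly positive, so $\Lambda(\theta) \succ 0$ pointwise. Under Assumption~\ref{ass:fullrank}, $X \in \R^{n \times N}$ has full row rank, so for any nonzero $v \in \R^n$ the vector $X^\top v \in \R^N$ is nonzero, and hence
\begin{align*}
v^\top X\Lambda(\theta)X^\top v = (X^\top v)^\top \Lambda(\theta) (X^\top v) > 0.
\end{align*}
Thus $\nabla^2 \Jcal_3(\theta) \succ 0$ on all of $\R^n$, which yields strict convexity of $\Jcal_3$.

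For the Lipschitz bound, I would use that $s(1-s)$ attains its maximum $\tfrac{1}{4}$ on $[0,1]$ at $s=\tfrac{1}{2}$, which gives $\Lambda(\theta) \preceq \tfrac{1}{4}I_N$. Conjugation by $X$ preserves the positive semidefinite order, so $X\Lambda(\theta)X^\top \preceq \tfrac{1}{4}XX^\top$, and since both sides are symmetric positive semidefinite,
\begin{align*}
\norm{\nabla^2 \Jcal_3(\theta)} = \eigmax{\nabla^2 \Jcal_3(\theta)} \le \tfrac{1}{4N}\eigmax{XX^\top} = \tfrac{1}{4N}\norm{XX^\top}.
\end{align*}
A standard application of the mean-value theorem (e.g., \cite[Lemma~1.2.2]{nesterov2013introductory}) then implies that $\nabla \Jcal_3$ is globally $\tfrac{1}{4N}\norm{XX^\top}$-Lipschitz continuous.

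There is no real obstacle here — the proof is a direct consequence of the explicit Hessian formula \eqref{eq:HessianJ3} and elementary bounds on $\sigma(\cdot)(1-\sigma(\cdot))$. The only subtlety worth flagging is that the Hessian, while strictly positive definite at every point, is \emph{not} uniformly bounded away from zero: as $\norm{\theta} \to \infty$ along certain directions, $p_i(\theta)(1-p_i(\theta)) \to 0$, so $\Jcal_3$ is strictly but not strongly convex. This is precisely why the classical PL condition fails and the weaker $\Kcal$-PL framework developed earlier is needed for the subsequent robustness analysis.
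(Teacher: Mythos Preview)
Your proposal is correct and follows essentially the same approach as the paper: both arguments establish strict convexity via $\nabla^2\Jcal_3(\theta)\succ 0$ using the full row rank of $X$ together with $\Lambda(\theta)\succ 0$, and both obtain the Lipschitz constant from the elementary bound $p_i(1-p_i)\le \tfrac14$ on the Hessian. The only cosmetic difference is that the paper bounds $\|X\Lambda X^\top\|$ via submultiplicativity whereas you use the Loewner ordering $\Lambda\preceq \tfrac14 I_N$, and your closing observation about strict-but-not-strong convexity is exactly the content of the paper's subsequent remark.
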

\begin{proof}
    For a single sample with logit $s_i=\theta^\top x_i$, the second derivative of the loss is
    \begin{align}
        \frac{\mathrm{d}^2 \ell_i(s_i)}{\mathrm{d} s_i^2} = \sigma(s_i)\bigl(1-\sigma(s_i)\bigr) > 0,
    \end{align}
    where $\sigma(\cdot)$ is the logistic sigmoid. Hence, $\ell_i$ is a convex function of $s_i$. Since $s_i$ is affine in $\theta$, each per-sample loss is convex in $\theta$. The overall loss $\Jcal_3$
    is therefore convex.

    The Hessian of $\Jcal_3$ is given in \eqref{eq:HessianJ3}. Each diagonal entry of $\Lambda(\theta)$ satisfies $0< p_i(\theta)(1-p_i(\theta)) \le \tfrac{1}{4}$. Under Assumption~\ref{ass:fullrank}, for any nonzero $v\in\mathbb{R}^n$ we have
    \begin{align}
        v^\top \nabla^2 \Jcal_3(\theta) v
        &= \tfrac{1}{N} \| \Lambda(\theta)^{1/2} X^\top v \|^2 > 0,
    \end{align}
    which implies $\nabla^2 \Jcal_3(\theta) \succ 0$. Thus, $\Jcal_3$ is strictly convex.

    Finally, to show Lipschitz continuity of the gradient, observe
    \begin{align}
        \|\nabla^2 \Jcal_3(\theta)\|
        = \Bigl\|\tfrac{1}{N} X \Lambda(\theta) X^\top \Bigr\|
        \le \tfrac{1}{N} \|\Lambda(\theta)\| \, \|XX^\top\|
        \le \tfrac{1}{4N}\,\|XX^\top\|,
    \end{align}
    since $\|\Lambda(\theta)\|\le \tfrac{1}{4}$. Therefore, the gradient $\nabla \Jcal_3$ is globally $\tfrac{1}{4N}\,\|XX^\top\|$-Lipschitz continuous.
\end{proof}

Since $\Jcal_3$ is coercive over $\R^n$ and strictly convex, there exists a unique minimizer $\theta^*$. The following lemma lays the foundation for establishing the scNSS property of the gradient-based optimization algorithms used to solve logistic regression.
\begin{lemma}\label{eq:logisticPL}
Under Assumptions \ref{ass:nonseparable}, the objective function $\Jcal_3$ satisfies the $\Kcal$-PL condition, i.e.,
\begin{align}\label{eq:LogisticKPL}
    \|\nabla \Jcal_3(\theta)\| \;\ge\; \mu_6 \,\big(\Jcal_3(\theta) - \Jcal_3^*\big),
\end{align}
where $\mu_6$ is a $\Kcal$-function.
\end{lemma}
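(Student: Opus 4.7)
The plan is to combine convexity with the uniform linear growth of $\Jcal_3$ at infinity (already extracted in the proof of Lemma~\ref{lm:coercivityLogistic}) to produce a $\Kcal$-function of the form $\mu_6(h) = h/(ah + b)$, matching the shape of the LQR bound in Lemma~\ref{lm:KPL_LQR}. The starting point is convexity of $\Jcal_3$ (Lemma~\ref{lm:LipchitzLogistic}), which, via the first-order characterization applied at $\theta$ with reference $\theta^*$, gives
\[
\Jcal_3(\theta) - \Jcal_3^* \;\le\; \nabla \Jcal_3(\theta)^\top (\theta - \theta^*) \;\le\; \|\nabla \Jcal_3(\theta)\|\,\|\theta - \theta^*\|,
\]
hence for every $\theta \neq \theta^*$,
\[
\|\nabla \Jcal_3(\theta)\| \;\ge\; \frac{\Jcal_3(\theta) - \Jcal_3^*}{\|\theta - \theta^*\|}.
\]
The remaining task is to furnish an affine upper bound on $\|\theta - \theta^*\|$ in terms of $\Jcal_3(\theta) - \Jcal_3^*$.

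To produce that bound, I would re-invoke the coercivity argument of Lemma~\ref{lm:coercivityLogistic} with $\theta_0 = \theta^*$. That proof provides a positive constant $e_1 = \min_{\|\vec{\theta}\|=1}\bar{\zeta}(\vec{\theta}) > 0$ and, by uniform (Dini) convergence of the monotone-in-$r$ family $\zeta(r,\cdot)$ to $\bar{\zeta}$ on the compact unit sphere, a direction-independent radius $r_1$ such that $\zeta(r,\vec{\theta}) \ge e_1/2$ for every unit $\vec{\theta}$ and every $r \ge r_1$. Integrating in $r$ yields $\Jcal_3(\theta^* + r\vec{\theta}) - \Jcal_3^* \ge \tfrac{e_1}{2}(r - r_1)$ for $r \ge r_1$, and combining this with the trivial bound $\|\theta - \theta^*\| \le r_1$ in the complementary regime gives the global estimate
\[
\|\theta - \theta^*\| \;\le\; \tfrac{2}{e_1}\bigl(\Jcal_3(\theta) - \Jcal_3^*\bigr) + r_1, \qquad \forall\,\theta \in \R^n.
\]
Substituting this into the convexity inequality produces $\|\nabla \Jcal_3(\theta)\| \ge h/(ah + b)$ with $h := \Jcal_3(\theta) - \Jcal_3^*$, $a := 2/e_1$, $b := r_1$. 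Setting $\mu_6(h) := h/(ah+b)$, one checks that $\mu_6(0)=0$, $\mu_6'(h) = b/(ah+b)^2 > 0$, and $\mu_6$ is continuous, so $\mu_6 \in \Kcal$; the case $\theta = \theta^*$ holds trivially with both sides vanishing.

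The main obstacle is the second step: the quantitative linear-growth estimate must be \emph{uniform in direction}, which is exactly what Dini's theorem delivers in the coercivity proof, so I plan to reuse that machinery verbatim with $\theta_0 = \theta^*$. A sanity check: $\mu_6$ is bounded above by $1/a$, consistent with the fact that $\|\nabla \Jcal_3(\theta)\|$ remains bounded as $\|\theta\|\to\infty$ (along any direction the summands $(p_i(\theta)-y_i)x_i$ converge to finite limits), which also explains why a $\Kcal_\infty$-PL condition cannot hold and one must settle for the weaker $\Kcal$-PL condition claimed by the lemma.
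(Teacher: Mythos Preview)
Your argument is correct, and it takes a genuinely different route from the paper's proof.

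The paper bounds the gradient from below by the directional derivative along the ray from $\theta^*$, namely $\|\nabla \Jcal_3(\theta)\| \ge \zeta^*(r,\vec{\theta}) = \vec{\theta}^{\top}\nabla\Jcal_3(\theta^*+r\vec{\theta})$, observes that $r\mapsto\zeta^*(r,\vec{\theta})$ is a class-$\Kcal$ function with $\zeta^*(0,\vec{\theta})=0$, and that the suboptimality along the ray is $\psi(r,\vec{\theta})=\int_0^r\zeta^*(s,\vec{\theta})\,\de s\in\Kcal_\infty$; composing $\zeta^*$ with $\psi^{-1}$ and minimizing over the compact unit sphere produces a direction-uniform but non-explicit $\mu_6$. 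Your approach instead invokes the first-order convexity inequality $\Jcal_3(\theta)-\Jcal_3^*\le\|\nabla\Jcal_3(\theta)\|\,\|\theta-\theta^*\|$ and then reuses the Dini-based linear-growth estimate from the coercivity lemma (with $\theta_0=\theta^*$) to control $\|\theta-\theta^*\|$ affinely in the suboptimality. This yields the explicit $\Kcal$-function $\mu_6(h)=h/(ah+b)$, matching the LQR shape in Lemma~\ref{lm:KPL_LQR}, and is shorter. The paper's route has the modest advantage of not appealing to convexity as a separate ingredient (the monotonicity of $\zeta^*$ plays that role internally), whereas your route makes the convexity step transparent and buys an explicit constant. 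One cosmetic point: you cite Lemma~\ref{lm:LipchitzLogistic} for convexity, but that lemma's \emph{strict} convexity claim invokes Assumption~\ref{ass:fullrank}; since you only need plain convexity, which holds unconditionally from the per-sample losses, your proof is consistent with the lemma's stated hypothesis (Assumption~\ref{ass:nonseparable} alone).
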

\begin{proof}
Set $\theta_0=\theta^*$, where $\theta^*$ is a (finite) stationary point of $\Jcal_3$. Define $\zeta^*$ as in \eqref{eq:xiDef} with $\theta_0$ replaced by $\theta^*$.
From \eqref{eq:gradientJ3}, we have 
\begin{align}\label{eq:lossgradient1}
    \norm{\nabla \Jcal_3(\theta)}^2 = \frac{1}{N^2} \Big\|\sum_{i=1}^N\big( p_i(\theta^* + r \vec{\theta}) - y_i\big) x_i\Big\|^2 \ge  \frac{1}{N^2} \Big(\sum_{i=1}^N\big( p_i(\theta^* + r \vec{\theta}) - y_i\big) \vec{\theta}^\top x_i \Big)^2 = \zeta^*(r,\vec{\theta})^2.
\end{align}
Since $\theta^*$ is a finite stationary point of $\Jcal_3$, $\zeta(0,\vec{\theta}) = 0$ for all unit vector $\vec{\theta}$. Moreover, by \eqref{eq:derivativeXi} and \eqref{eq:xiInfinity}, for each fixed direction $\vec{\theta}$, $r \rightarrow \zeta^*(r,\vec{\theta})$ is strictly increasing and saturates when $r \to \infty$. Hence, $\zeta^*(\cdot ,\vec{\theta})$ is a $\Kcal$-function with saturation.

Next, note that along any ray,
\begin{align}
    \frac{\partial}{\partial r}\big(\Jcal_3(\theta^* + r \vec{\theta}) - \Jcal_3(\theta^*)\big) = \zeta^*(r,\vec{\theta}). 
\end{align}
By integration, we have
\begin{align}
    \Jcal_3(\theta^* + r \vec{\theta}) - \Jcal_3(\theta^*) = \int_{0}^r \zeta^*(s,\vec{\theta}) \de s =: \psi(r,\vec{\theta}).
\end{align}
Because $\zeta^*(\cdot, \vec{\theta})$ is increasing and has a positive limit, $\psi(\cdot, \vec{\theta})$ is a $\Kcal_\infty$-function for each fixed $\vec{\theta}$. Therefore, for each fixed $\vec{\theta}$, the inverse of $\psi(\cdot, \vec{\theta})$, denoted as $\psi^{-1}(\cdot, \vec{\theta})$ exists and belongs to class $\Kcal_\infty$. By \eqref{eq:lossgradient1}, it holds
\begin{align}\label{eq:lossgradient2}
    \norm{\nabla \Jcal_3(\theta)} \ge \zeta^*(r,\vec{\theta}) = \zeta^*\Big(\psi^{-1}\big(\Jcal_3(\theta^* + r \vec{\theta}) - \Jcal_3(\theta^*), \vec{\theta}\big), \vec{\theta} \Big).
\end{align}

Finally, define a direction-uniform lower envelop 
\begin{align}
    \mu_6(r) = \min_{\norm{\vec{\theta}}=1}\zeta^*(\psi^{-1}(r,\vec{\theta}), \vec{\theta}). 
\end{align}
Because $\zeta^*(\psi^{-1}(r,\vec{\theta}),\vec{\theta})$ is smooth and strictly increasing in $r$ for each fixed $\vec{\theta}$, and the unit sphere of $\vec{\theta}$ is compact, $\mu_6$ is continuous, strictly increasing, and satisfies $\mu_6(0)=0$. Hence, $\mu_6$ is a $\Kcal$-function. Substituting it into the bound \eqref{eq:lossgradient2} yields \eqref{eq:LogisticKPL}, which establish the $\Kcal$-PL condition of the logistic loss.

\end{proof}
\begin{remark}
    Even though the logistic loss $\Jcal_3$ is strictly convex (Lemma \ref{lm:LipchitzLogistic}), it cannot satisfy the $\Kcal_\infty$-PL condition. Indeed, since $|p_i(\theta) - y_i| \le 1$, we have 
    \[
    \|\nabla \Jcal_3(\theta)\| \;\le\; {\|X\|}/{\sqrt{N}}.
    \]
    On the other hand, by Lemma \ref{lm:coercivityLogistic}, $\Jcal_3(\theta) \to \infty$ as $\|\theta\| \to \infty$. Hence, it is impossible to find a $\Kcal_\infty$-function $\mu$ such that
    \begin{align}
        \|\nabla \Jcal_3(\theta)\| \;\ge\; \mu(\Jcal_3(\theta) - \Jcal_3^*).
    \end{align}
    In addition, since $\lim_{\norm{\theta} \to \infty}p_i(\theta)(1-p_i(\theta)) = 0$, it follows from \eqref{eq:HessianJ3} that $\lim_{\norm{\theta} \to \infty} \nabla^2 \Jcal_3(\theta) = 0$. Hence, the logistic loss is not globally strongly convex either. 
\end{remark}

\subsubsection{Robustness Analysis}
Under stochastic perturbations, the first-order gradient flow for solving logistic regression \eqref{eq:logisticLoss} can be represented by the following overdamped Langevin diffusion:
\begin{align}\label{eq:logisticLangevin}
    \de \theta(s) \;=\; - \nabla \Jcal_3\big(\theta(s)\big)\, \de s \;+\; \Sigma(s)\, \de B(s),
\end{align}
where $B(s)$ denotes a standard Brownian motion and $\Sigma(s)$ characterizes the covariance structure of the noise. As a direct application of the results developed in Section \ref{sec:NSSStochasticGradient}, it follows that the Langevin diffusion in \eqref{eq:logisticLangevin} is scNSS.
\begin{theorem}
    Under Assumptions \ref{ass:fullrank} and \ref{ass:nonseparable}, the overdamped Langevin diffusion in \eqref{eq:logisticLangevin} is scNSS.
\end{theorem}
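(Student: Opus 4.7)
The plan is to recognize that this theorem is a direct application of Statement~2 of Theorem~\ref{thm:overLangevinLipchitz}, since logistic regression has been arranged (in the preceding lemmas) to satisfy every hypothesis of that theorem. So the proof would essentially consist of a checklist that verifies each assumption, followed by invoking the theorem.

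First, I would verify Assumption~\ref{ass: regularity} for $\Jcal_3$ on $\Zcal = \R^n$. Twice continuous differentiability is immediate from the closed-form expressions in \eqref{eq:gradientJ3}–\eqref{eq:HessianJ3} and the smoothness of the sigmoid. Boundedness below is also immediate since each term $-y_i \log p_i - (1-y_i)\log(1-p_i) \ge 0$, so $\Jcal_3 \ge 0$. Coercivity on $\R^n$ is given by Lemma~\ref{lm:coercivityLogistic}, which uses Assumption~\ref{ass:nonseparable}. Hence $\Jcal_3 - \Jcal_3^*$ is a valid size function on $\R^n$.

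Next, I would identify the diffusion coefficient in \eqref{eq:logisticLangevin}: comparing with the template \eqref{eq: gradientDynamics}, the noise is multiplied by $G(\theta) = I_n$, so $\normF{G(\theta)} = \sqrt{n}$ is a uniform global bound, i.e., $K_G = \sqrt{n}$. The global Lipschitz continuity of $\nabla \Jcal_3$ with constant $L = \tfrac{1}{4N}\|XX^\top\|$ is supplied by Lemma~\ref{lm:LipchitzLogistic}, which invokes Assumption~\ref{ass:fullrank}. Finally, the $\Kcal$-PL property of $\Jcal_3$ is given by Lemma~\ref{eq:logisticPL}, with the $\Kcal$-function $\mu_6$ constructed there (again requiring Assumption~\ref{ass:nonseparable}, and implicitly the existence of a finite minimizer, which follows from strict convexity and coercivity via Lemma~\ref{lm:LipchitzLogistic} and Lemma~\ref{lm:coercivityLogistic}).

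With all four ingredients in place—size function, global Lipschitz gradient, global bound on $G$, and the $\Kcal$-PL condition—Statement~2 of Theorem~\ref{thm:overLangevinLipchitz} applies directly and yields the scNSS property of \eqref{eq:logisticLangevin}. There is no delicate obstacle in the proof itself; the technical heavy lifting was done in Lemmas~\ref{lm:coercivityLogistic}–\ref{eq:logisticPL}. The only subtlety worth remarking on is \emph{why} we can only conclude scNSS (rather than full NSS): the $\Kcal$-function $\mu_6$ produced in Lemma~\ref{eq:logisticPL} is bounded (since it is dominated by the saturating map $\zeta^*(\cdot,\vec{\theta})$), so $\mu_6 \notin \Kcal_\infty$. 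Consequently the supremum in $\gamma^{-1}[\tfrac{1}{c}\sup_r \alpha(r)]$ from Lemma~\ref{lm: probbeforeq1} is finite, and the conclusion is confined to small-covariance noise—matching exactly the scNSS notion in Definition~\ref{def: scnss-lf}.
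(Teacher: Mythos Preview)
Your proposal is correct and follows exactly the same route as the paper: verify the hypotheses of Theorem~\ref{thm:overLangevinLipchitz} via Lemmas~\ref{lm:coercivityLogistic}, \ref{lm:LipchitzLogistic}, and \ref{eq:logisticPL}, then invoke Statement~2. Your write-up is in fact more thorough than the paper's, since you explicitly identify $G(\theta)=I_n$ with $K_G=\sqrt{n}$ (which the paper's short proof leaves implicit) and you add a helpful explanation of why only scNSS, not full NSS, can be concluded.
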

\begin{proof}
    By Lemma \ref{lm:coercivityLogistic}, the objective function $\Jcal_3(\theta) - \Jcal_3^*$ qualifies as a size function. Lemma \ref{lm:LipchitzLogistic} ensures that the gradient of $\Jcal_3$ is globally Lipschitz continuous. Moreover, by Lemma \ref{eq:logisticPL}, $\Jcal_3$ satisfies the $\Kcal$-PL condition. Hence, applying Theorem \ref{thm:overLangevinLipchitz}, the stochastic dynamics in \eqref{eq:logisticLangevin} is scNSS.
\end{proof}

If the heavy-ball gradient descent algorithm is applied to logistic regression under stochastic noise with time-varying covariance, the resulting dynamics can be represented by the following underdamped Langevin system:
\begin{align}\label{eq:logisticDampLangevin}
\begin{split}
    \de \theta(s) &= \de v(s) \\
    \de v(s) &= - \eta \nabla \Jcal_3(\theta(s)) \de s -cv(s)\de s  + \Sigma(s) \de B(s)
\end{split}
\end{align}
where $v(s)$ denotes the velocity, $\eta > 0$ is the learning rate, $c > 0$ is the damping coefficient, $\Sigma(s)$ encodes the time-varying noise covariance, and $B(s)$ is a standard Brownian motion. Using the main results of Theorem \ref{thm:DampedLangevinLipchitz}, we conclude that the Langevin system \eqref{eq:logisticDampLangevin} is scNSS.
\begin{theorem}
    Under Assumptions \ref{ass:fullrank} and \ref{ass:nonseparable}, the underdamped Langevin diffusion in \eqref{eq:logisticDampLangevin} is scNSS.
\end{theorem}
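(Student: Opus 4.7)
The plan is to mirror the proof of the preceding theorem (the scNSS result for the overdamped Langevin diffusion on logistic regression) by verifying the hypotheses of Statement~2 of Theorem~\ref{thm:DampedLangevinLipchitz} and invoking it directly. All the structural properties of $\Jcal_3$ required for that application have already been established in the earlier lemmas, so the argument is essentially a one-line citation once the checklist is completed.

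Concretely, I would argue as follows. First, by Lemma~\ref{lm:coercivityLogistic}, $\Jcal_3$ is coercive on $\R^n$ under Assumption~\ref{ass:nonseparable}; combined with the strict convexity from Lemma~\ref{lm:LipchitzLogistic} (which uses Assumption~\ref{ass:fullrank}), there is a unique minimizer $\theta^*$, and $\Jcal_3 - \Jcal_3^*$ is twice continuously differentiable, positive definite with respect to $\theta^*$, and coercive, so it is a valid size function for $(\R^n, \theta^*)$ in the sense of Definition~\ref{def:sizeFunc}. Second, Lemma~\ref{lm:LipchitzLogistic} gives global Lipschitz continuity of $\nabla \Jcal_3$ with constant $L = \norm{XX^\top}/(4N)$, matching the global-$L$-smoothness hypothesis of Theorem~\ref{thm:DampedLangevinLipchitz}. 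Third, the diffusion coefficient in \eqref{eq:logisticDampLangevin} is the identity (the noise enters additively with constant coefficient), so $G(z,v) \equiv I_n$ is trivially globally bounded with $\normF{G} = \sqrt{n} =: K_G$. Fourth, Lemma~\ref{eq:logisticPL} establishes the $\Kcal$-PL condition for $\Jcal_3$ with the $\Kcal$-function $\mu_6$.

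With these four ingredients verified, Statement~2 of Theorem~\ref{thm:DampedLangevinLipchitz} applies directly and yields that the underdamped Langevin diffusion \eqref{eq:logisticDampLangevin} is scNSS (and, as a bonus, iNSS). There is no substantive obstacle to the proof: all the analytical effort--establishing coercivity, global Lipschitz smoothness, and the $\Kcal$-PL condition of the logistic loss--was front-loaded into the preparatory lemmas. The only point worth flagging explicitly in the write-up is that the two standing assumptions play complementary roles: Assumption~\ref{ass:fullrank} (full row rank of $X$) is used for strict convexity and the Lipschitz bound, while Assumption~\ref{ass:nonseparable} (nonseparability) is used for coercivity and for the $\Kcal$-PL condition. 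Beyond this bookkeeping, the theorem is an immediate corollary.
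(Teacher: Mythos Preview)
Your proposal is correct and matches the paper's own proof essentially line for line: the paper also invokes Lemmas~\ref{lm:coercivityLogistic} and~\ref{lm:LipchitzLogistic} for coercivity and global Lipschitz smoothness, cites Lemma~\ref{eq:logisticPL} for the $\Kcal$-PL condition, and then applies Theorem~\ref{thm:DampedLangevinLipchitz}. Your additional remarks about $G\equiv I_n$ and the complementary roles of Assumptions~\ref{ass:fullrank} and~\ref{ass:nonseparable} are accurate elaborations but not substantive departures.
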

\begin{proof}
    By Lemmas \ref{lm:coercivityLogistic} and \ref{lm:LipchitzLogistic}, the objective function $\Jcal_3$ is coercive on $\R^n$ and has a globally Lipchitz-continuous gradient. In addition, $\Jcal_3$ satisfies the $\Kcal$-PL condition (Lemma \ref{eq:logisticPL}). Therefore, by Theorem \ref{thm:DampedLangevinLipchitz}, the underdamped Langevin diffusion in \eqref{eq:logisticDampLangevin} is scNSS.
\end{proof}

\section{Conclusions}
In this paper, we introduced a new notion—small-covariance NSS—alongside a Lyapunov condition for it. Small-covariance NSS ensures that if the covariance of the stochastic noise is sufficiently small, the trajectories of a stochastic system will eventually enter and remain inside a neighborhood of the equilibrium. The size of this neighborhood is proportional (in a generally nonlinear way) to the magnitude of the covariance. Under the NSS framework, we studied stochastic gradient dynamics (including overdamped and underdamped Langevin diffusion) and showed that if the objective function satisfies the generalized $\Kcal$-PL condition, the stochastic gradient dynamics is small-covariance NSS. In addition, if the $\Kcal$-PL condition is strengthened to a $\Kcal_\infty$-PL condition, the stochastic gradient dynamics is NSS; if the $\Kcal$-PL condition is weakened to a positive-definite condition, the stochastic gradient dynamics is integral NSS. The developed theoretical framework was applied to policy optimization for LQR and to logistic regression, and it is shown that the stochastic gradient dynamics for both are small-covariance NSS. Future research will focus on robustness analysis of Nesterov’s accelerated gradient method and on methodologies to enhance the robustness of gradient-based methods.

\bibliographystyle{alpha}
\bibliography{reference}

\end{document}